\DeclareMathOperator{\id}{Id}
\DeclareMathOperator{\tr}{tr}
\newtheorem{remark}{Remark}[section]
\newtheorem{example}{Example}[section]
\newtheorem{proposition}{Proposition}[section]
\def\b#1{\boldsymbol{#1}}
\begin{document}

\title{Reducing Spatial and Temporal Dimensionality in the Multidimensional Caldeira–Leggett Model}

\author[1]{Hongfei Zhan}
\orcid{0009-0003-4427-7538}
\email{hfzhan@nus.edu.sg}
\author[1]{Ernest W.Z. Pan}
\author[1]{Zhenning Cai}
\orcid{0000-0002-7086-7983}
\affil[1]{Department of Mathematics, National University of Singapore, Singapore 119076}
\maketitle

\begin{abstract}
Focusing on the real-time dynamics of the reduced density matrix of the multidimensional Caldeira-Leggett model, several techniques are adopted in this paper to reduce the spatial and temporal dimensionality, combined into an efficient algorithm.
From a spatial perspective, an equivalent formulation of the Dyson series is presented. With the aid of a low-rank approximation, the spatial dimensionality of open quantum system simulations is halved. 
From a temporal perspective, the frozen Gaussian approximation is used to approximate both the evolution operator and the interaction operator in the multidimensional Caldeira-Leggett model. This reduces the high-dimensional integrals to one- and two-dimensional integrals independent of the truncation level of the Dyson series. 
Through these techniques, we design an efficient algorithm whose validity is verified through several numerical experiments, including a two-dimensional double slit simulation.
\end{abstract}

\pdfoutput=1

\section{Introduction}\label{sec:intro}

An open quantum system refers to a quantum-mechanical system coupled to an external environment, with interactions that can significantly influence their quantum dynamics, leading to quantum effects such as quantum dissipation and quantum decoherence \cite{grigorescu1998decoherence,schlosshauer2019quantum}.
Due to its universality, the theory of open quantum systems has been widely used in diverse fields, encompassing quantum computing \cite{knill1997theory}, quantum communication \cite{nielsen2010quantum}, and quantum optical systems \cite{breuer2002theory}.

The open quantum system coupled to a harmonic bath is widely employed as a simplified model to study the dissipative effects of condensed-phase environments \cite{caldeira1983path} and also to simulate the effects of environments composed of normal mode vibrations, lattice phonons, and other more complex unstructured environments \cite{chandler1988introduction,makri1999linear}. A notable feature of the harmonic bath is that its degrees of freedom can be analytically integrated out in the path integral formulation, motivating the development of diverse numerical techniques for simulating open quantum systems.
A typical example is the spin-boson model \cite{chakravarty1984dynamics,thorwart2004dynamics,xu2023performance}, which describes a two-level system interacting with a bosonic environment. For open systems with more possible states, current research predominantly focuses on systems with special structures such as open spin chains \cite{makri2018modular,wang2023real,sun2024simulation}. In these systems, matrix product state representations are used to make the simulation tractable, but extending such approaches beyond one dimension remains challenging \cite{yan2021efficient,wang2023real,erpenbeck2023tensor,sun2024simulation}.
The spin-boson model can also be viewed as an implementation of the Caldeira-Leggett model \cite{caldeira1981influence,caldeira1983path,caldeira1985influence}, in which a particle is located in a double-well potential. 
However, most existing approaches remain limited to finite-dimensional systems, and numerical investigations of the Caldeira-Leggett model are confined to one-dimensional cases \cite{wang2025solving}.

Various models and methods have been developed to simulate open quantum systems. 
The Nakajima-Zwanzig equation \cite{nakajima1958quantum,zwanzig1960ensemble,shi2003new,zhang2006nonequilibrium,cohen2011memory,cohen2013numerically,wilner2013bistability,kelly2013efficient,kelly2016generalized} captures the temporal non-locality through a memory kernel. In the weak system-bath coupling regime, the process can be approximated by a Markovian process described by the Lindblad equation \cite{lindblad1976generators}. However, in more general cases, it would be difficult to bypass the non-Markovian nature during the numerical simulation. 
Due to the decay of the memory effect, a finite memory length prevents unlimited growth of storage or computational cost. The transfer tensor method (TTM) \cite{cerrillo2014non,rosenbach2016efficient} uses this technique to derive a discretized form of the Nakajima-Zwanzig equation. Alternatively, the hierarchical equations of motion (HEOM) \cite{tanimura1990nonperturbative,tanimura1989time,tanimura2020numerically} can be applied to the harmonic bath.
Recently, matrix product state and matrix product operator methods are widely used to efficiently simulate open quantum systems by exploiting their structure \cite{strathearn2018efficient,cygorek2022simulation,gribben2022exact,nunez2022learning,ng2023real,thoenniss2023efficient,nayak2025steady}.

The path integral \cite{feynman1948space,feynman1966quantum} plays a crucial role in the simulation of open quantum systems. It offers a classical-like picture of the quantum process. 
A variety of methods have been developed based on the path integral framework. 
Iterative path integral methods compute reduced dynamics by iteratively propagating the influence of environmental memory effects \cite{weiss2008iterative,segal2010numerically,simine2013path}.
The quasi-adiabatic propagator path integral (QuAPI) \cite{makri1992improved} uses the influence functional developed by Feynman and Vernon \cite{feynman1963theory} for non-Markovian dynamics.
Building upon QuAPI, many approaches have been proposed to improve simulation efficiency by reducing computational complexity or enhancing accuracy, including the iterative QuAPI method (i-QuAPI) \cite{makri1995numerical,makri1998quantum}, the blip decomposition of the path integral \cite{makri2014blip,makri2016blip}, the differential equation-based path integral method (DEBPI) \cite{wang2022differential}, and the kink sum method \cite{makri2024kink}.
Recently, the small matrix path integral (SMatPI) \cite{makri2020small-dynamics,makri2020small-memory,makri2020small-length,wang2024tree} successfully addressed the high memory demands associated with the summation over numerous paths by compactly representing their contributions through small matrices. Although the starting points are different, SMatPI yields a formulation akin to the Nakajima-Zwanzig equation.

Due to the large number of paths arising from the non-Markovian intrinsicality, a natural approach is the Monte Carlo methods. For example, the diagrammatic quantum Monte Carlo method (dQMC) uses diagrams to intuitively represent the coupling between the system and the bath \cite{prokof1998polaron,werner2009diagrammatic}. However, Monte Carlo methods often suffer from the numerical sign problem \cite{loh1990sign,cai2023numerical}. Different techniques, such as the application of bold lines \cite{prokof2007bold,prokof2008bold,chen2017inchworm,cai2020inchworm}, the inclusion-exclusion principle \cite{boag2018inclusion,yang2021inclusion}, and the combination of thin lines and bold lines \cite{cai2023bold}, have been developed to mitigate the numerical sign problem or accelerate the computation.

In this paper, an efficient algorithm is developed for real-time simulations of the reduced density matrix for the multidimensional Caldeira-Leggett model, with considerable effort devoted to reducing both the spatial and temporal dimensionality of the model. 

In the spatial domain, focusing on the harmonic bath, the open quantum system is depicted by the Dyson series, formulated in an equivalent form to the commonly used Keldysh contour. This representation allows for efficient algorithms through the reuse of intermediate variables. Moreover, a low-rank approximation is introduced for the two-point correlation function, facilitating the decomposition of diagrams associated with the Dyson series. Consequently, our formulation requires computation of the wave function rather than the density matrix, halving the spatial dimensionality. In addition, we also factorize numerous diagrams into the product of single diagrams, further reducing the computational cost. 

For the temporal dimensionality, separation between the interaction operator and the reduced density matrix is achieved through the frozen Gaussian approximation, which leads to a commutable scalar description of the interaction operator. The resulting multidimensional integrals in the expression of the reduced density matrix are further decomposed into one- and two-dimensional integrals. 
Building upon these techniques, we design an efficient algorithm to solve the multidimensional Caldeira-Leggett model. The temporal dependency of the computational cost is reduced to that of the first non-trivial term in the Dyson series. 
Numerical results, including the simulation of the two-dimensional double slit phenomenon, validate the effectiveness of the proposed method. To the best of our knowledge, this is the first algorithm capable of simulating the two-dimensional Caldeira-Leggett model.

The paper is organized as follows. In Sect. \ref{sec:oqs}, we introduce the open quantum system along with a low-rank approximation of the two-point correlation function. Next, the general-dimensional Caldeira-Leggett model is presented in Sect. \ref{subsec:cl model} and the frozen Gaussian approximation is introduced in Sect. \ref{subsec:fga}. We describe our main algorithm in Sect. \ref{subsec:alg}, with corresponding numerical results in Sect. \ref{sec:num result}. Lastly, we provide an overall conclusion in Sect. \ref{sec:conc}.
\section{Open quantum systems}\label{sec:oqs}
Consider the von Neumann equation for quantum evolution
\begin{equation}\label{eqn:von Neumann eqn}
	i\epsilon\frac{d\rho}{dt}
	=[H,\rho],
\end{equation}
where $\epsilon$ is the dimensionless parameter quantifying the ratio of quantum to classical scales, $\rho(t)$ is the density matrix at time $t$, and $H$ is the Schr\"odinger picture Hamiltonian in the form
\begin{align}
	H
	&=H_0+\epsilon W,\\
	H_0
	&=H_s\otimes\id_b+\id_s\otimes H_b,\\
	W
	&=W_s\otimes W_b.
\end{align}
Here $H_0$ is the Hamiltonian consisting of the system Hamiltonian $H_s$ and the bath Hamiltonian $H_b$, and $W$ depicts the interaction between the system $W_s$ and the bath $W_b$.
Assuming that the initial system is in the pure state $\psi_s^{(0)}$, with the bath in the thermal equilibrium state, we can express the initial density operator as
\begin{equation}
	\rho_0
	=\rho_s^{(0)}\otimes\rho_b^{(0)},
\end{equation}
where
\begin{align}
	\rho_s^{(0)}
	&:=|\psi_s^{(0)}\rangle\langle\psi_s^{(0)}|,\\
	\rho_b^{(0)}
	&:=\frac{e^{-\beta H_b }}{\tr_b\left(e^{-\beta H_b }\right)},
\end{align}
$\beta$ is the inverse temperature, and $\tr_b$ refers to the partial trace of the bath part. The solution of \eqref{eqn:von Neumann eqn} can be given by
\begin{equation}
	\rho(t)
	=e^{-iHt/\epsilon}\rho_0e^{iHt/\epsilon}.
\end{equation}
Then we focus on the reduced density matrix of the system as follows.
\begin{equation}\label{eqn:rho}
    \rho_s(t)
    =\tr_b\Big(\rho(t)\Big)
    =\tr_b\Big(e^{-iHt/\epsilon}\rho_0e^{iHt/\epsilon}\Big).
\end{equation}

In the following subsections, the Dyson series is introduced to approximate the reduced density matrix, while Wick's theorem is used to simplify the bath influence function. In particular, two equivalent forms of the bath influence function are presented, motivating the separation in Sect. \ref{subsec:equiv form}, and illustrating the possibility of further simplification in Sect. \ref{subsec:alg}. Consequently, a separated formulation of the reduced density matrix is derived with the aid of a low-rank approximation in Sect. \ref{subsec:lra}.

\subsection{Dyson series}\label{subsec:dyson}
Regarding the coupling term $\epsilon W$ as a perturbation of $H_0$, one can derive the Dyson series of the evolution operators as follows.
\begin{align}
    e^{-iHt/\epsilon}
    &=\sum\limits_{n=0}^\infty\int_{\b{s} \in \mathcal{S}_t^n}
        (-i)^ne^{-iH_0(t-s_n)/\epsilon}We^{-iH_0(s_n-s_{n-1})/\epsilon}W\cdots We^{-iH_0s_1/\epsilon}d\b{s},
\end{align}
where $\b{s}=(s_1,s_2,\dots,s_n)$, and the integral over an $n$-dimensional simplex of size $t$ is formulated as
\begin{equation}
    \int_{\b{s} \in \mathcal{S}_t^n} d\b{s}
	:=\int_0^{t}\int_0^{s_n}\cdots\int_0^{s_2}ds_1\cdots ds_{n-1}ds_n.
\end{equation}
Consequently, one can derive the Dyson series of the reduced density matrix in \eqref{eqn:rho} ,

\begin{align}
    \rho_{s,S}(t)
    &=e^{-iH_st/\epsilon}\rho_{s,I}(t)e^{iH_st/\epsilon},
    \label{eqn:dyson rho}\\
    \rho_{s,I}(t)
    &=\sum\limits_{n_1,n_2=0}^\infty
        \int_{\b{s}^{(1)} \in \mathcal{S}_t^{n_1}} \int_{\b{s}^{(2)} \in \mathcal{S}_t^{n_2}}
        \Big((-i)^{n_1} G_s(\b{s}^{(1)})\Big)
        |\psi_s^{(0)}\rangle\langle\psi_s^{(0)}|\notag\\
    &\qquad\qquad\qquad\qquad\qquad\qquad \cdot\Big((-i)^{n_2} G_s(\b{s}^{(2)})\Big)^\dagger
        \mathcal{L}_b(\b{s}^{(1)},\b{s}^{(2)})
            d\b{s}^{(1)}d\b{s}^{(2)},
    \label{eqn:dyson rho-interaction}
\end{align}
where $\rho_{s,S}(t), \rho_{s,I}(t)$ are the system density matrices in the Schr\"odinger picture and the interaction picture, respectively.

Also, $\b{s}^{(1)}=(s_1^{(1)}, \ldots, s_{n_1}^{(1)})$, $\b{s}^{(2)} = (s_1^{(2)}, \ldots, s_{n_2}^{(2)})$, and
\begin{equation}\label{eqn:evolution op}
	G_s(\b{s})
	=\mathcal{T}\prod\limits_{k=1}^nW_{s,I}(s_k),
	\qquad
	W_{s,I}(s_k)
	:=e^{iH_ss_k/\epsilon}W_se^{-iH_ss_k/\epsilon},
\end{equation} 
with $\mathcal{T}$ being the time-ordering operator. We also have the non-Markovian bath influence functional, defined by
\begin{equation}
\begin{aligned}
	\mathcal{L}_b(\b{s}^{(1)},\b{s}^{(2)})
	&=\tr\left(
		e^{-iH_b(t-s_{n_1}^{(1)})}W_b\cdots W_be^{-iH_bs_1^{(1)}}
		\rho_b^{(0)}
		e^{iH_bs_1^{(2)}}W_b\cdots W_be^{iH_b(t-s_{n_2}^{(2)})}
	\right).
\end{aligned}
\end{equation}
For the harmonic bath, by Wick's theorem \cite{wick1950evaluation}, it holds that
\begin{displaymath}
	\mathcal{L}_b(\b{s}^{(1)},\b{s}^{(2)})
	=\left\{\begin{array}{ll}
		0,	&\text{if $n_1+n_2$ is odd},\\
		\overline{\mathcal{L}}_b(s_1^{(1)}, \ldots, s_{n_1}^{(1)}, s_{n_2}^{(2)}, \ldots, s_1^{(2)}),
			&\text{if $n_1+n_2$ is even},
	\end{array}\right.
\end{displaymath}
where the function $\overline{\mathcal{L}}_b(\b{s})$ with a given $\b{s} = (s_1, \ldots, s_n)$ is defined by
\begin{equation} \label{eqn:bar Lb}
\overline{\mathcal{L}}_b(\b{s}) = \sum_{P \in \mathscr{P}_n} \prod_{(i,j) \in P} B(s_j, s_i).
\end{equation}
Here $\mathscr{P}_n$ contains all possible pairings of the integer set $\{1,\ldots,n\}$, and in each pair, the first element is smaller than the second. For instance,
\begin{align*}
  \mathscr{P}_2 = \big\{ &\{(1,2)\} \big\}, \\
  \mathscr{P}_4 = \big\{ &\{(1,2), (3,4)\}, \, \{(1,3), (2,4)\}, \, \{(1,4), (2,3)\} \big\}, \\
  \mathscr{P}_6 = \big\{ &\{(1,2), (3,4), (5,6)\}, \, \{(1,3), (2,4), (5,6)\}, \, \{(1,4), (2,3), (5,6)\}, \, \ldots, \, \\ &\{(1,6), (2,5), (3,4)\}\big\}.
\end{align*}
In general, for an even $n$, $\mathscr{P}_n$ contains $(n-1)!!$ elements.
The function $B(\cdot, \cdot)$ in \eqref{eqn:bar Lb} is known as the bath correlation function, which can be computed via
\begin{equation}
    B(\tau_1,\tau_2)
    =\tr\left(W_{b,I}(\tau_1)W_{b,I}(\tau_2)\rho_b^{(0)}\right),
    \qquad
    W_{b,I}(\tau)
    =e^{iH_b\tau/\epsilon}W_be^{-iH_b\tau/\epsilon}.
\end{equation}
Alternatively, $\overline{\mathcal{L}}_b(\b{s})$ can also be defined recursively:
\begin{displaymath}
\overline{\mathcal{L}}_b(\b{s}) = \sum_{k=2}^n B(s_k, s_1) \overline{\mathcal{L}}_b(\b{s} \backslash \{s_1, s_k\}), \qquad \overline{\mathcal{L}}_b(\emptyset) = 1.
\end{displaymath}
The definition of $B(\cdot, \cdot)$ is postponed to Sect. \ref{sec:num result}.

\begin{remark}
    \eqref{eqn:dyson rho-interaction} shows that a system initially in a pure state can evolve into a mixed state due to its interaction with the environment, hence we cannot use a wave function to describe an open quantum system.
\end{remark}

Following Feynman's idea, the Dyson series \eqref{eqn:dyson rho-interaction} can be understood as the sum of diagrams:
\begin{displaymath}
\rho_{s,I}(t) = \sum_{\substack{n_1,n_2 = 0\\ n_1 + n_2 \text{ is even}}}^{\infty} \sum_{P \in \mathscr{P}_{n_1+n_2}} \text{diag}(n_1,n_2,P),
\end{displaymath}
where the sum over $\mathscr{P}_{n_1+n_2}$ comes from the bath influence functional \eqref{eqn:bar Lb}.
Each diagram $\text{diag}(n_1, n_2, P)$ has the following structure:
\begin{enumerate}
\item The diagram consists of two axes, with the top axis representing the integral with respect to $\b{s}^{(1)}$ and the bottom axis representing the integral with respect to $\b{s}^{(2)}$. Any features on the bottom axis correspond to the conjugate transpose of the same feature on the top axis.
\item Both axes start with a red cross $\color{red} \b{\times}$ denoting $|\psi_s^{(0)}\rangle$ or $\langle\psi_s^{(0)}| = |\psi_s^{(0)}\rangle^\dagger$ in the first line of \eqref{eqn:dyson rho-interaction}. A dashed line connecting the red crosses shows the relative positions of $|\psi_s^{(0)}\rangle$ and $\langle\psi_s^{(0)}|$ being beside each other in \eqref{eqn:dyson rho-interaction}.
\item On the top axis, there are $n_1$ nodes corresponding to the operators $-i W_{s,I}(s_k^{(1)})$. Similarly, on the bottom axis, there are $n_2$ nodes corresponding to the operators $\left(-i W_{s,I}(s_k^{(2)})\right)^\dagger$. For both axes, the index $k$ increases from left to right. The proximity of the nodes to the cross on the same axis corresponds to the proximity of the respective operators to $|\psi_s^{(0)}\rangle \langle \psi_s^{(0)}|$ in \eqref{eqn:dyson rho-interaction}, with those on the top axis being to the left of $|\psi_s^{(0)}\rangle \langle \psi_s^{(0)}|$ and those on the bottom axis being to the right of $|\psi_s^{(0)}\rangle \langle \psi_s^{(0)}|$.
\item The nodes are paired using arcs, each representing a bath correlation function $B(\cdot,\cdot)$ whose parameters are the $s$-values represented by the nodes. These pairings are determined by $P$, and the position of $B(\cdot,\cdot)$ in \eqref{eqn:dyson rho-interaction} does not matter as it is a scalar. The input order follows Eq. \eqref{eqn:bar Lb}. Specifically, when both variables come from $\b{s}^{(1)}$, we get $B(s^{(1)}_j, s^{(1)}_i)$ where $i < j$. In contrast, when both variables come from $\b{s}^{(2)}$, the order is reversed due to taking the conjugate, $B(s^{(2)}_j, s^{(2)}_i)^* = B(s^{(2)}_i, s^{(2)}_j)$ where $i < j$. Otherwise, the variables associated with $\b{s}^{(2)}$ appear before those associated with $\b{s}^{(1)}$, e.g. $B(s^{(2)}_*, s^{(1)}_*)$.
\end{enumerate}

\begin{example}\label{exm:1d-comparison}
For $n_1 + n_2 = 2$, there are three diagrams that follow these rules, as shown below.
\begin{figure}[H]
    \centering
    \includegraphics[width=0.8\linewidth]{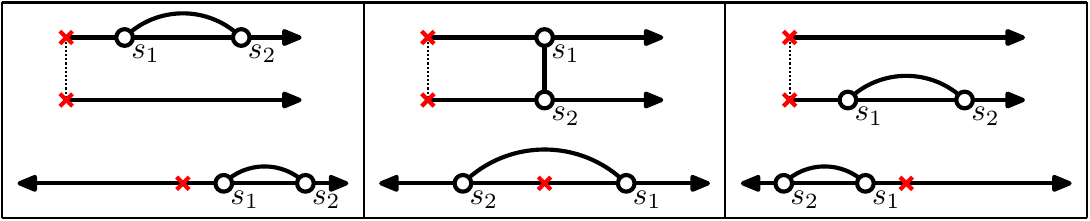}
    \caption{Comparison between diagrams (top) and corresponding Keldysh contour (bottom) for each term (from left to right) when $n_1+n_2=2$.}
    \label{fig:1d-comparison}
\end{figure}
This corresponds to the equation
\begin{align}
    &\int_{(s_1,s_2) \in \mathcal{S}_t^2}
    \Big( (-i)^2 W_{s,I}(s_2) W_{s,I}(s_1) \Big)
    |\psi_s^{(0)}\rangle\langle\psi_s^{(0)}|
    B(s_2, s_1)
    ds_1 ds_2 \notag\\
    &+ \int_{s_1 \in \mathcal{S}_t^1} \int_{s_2 \in \mathcal{S}_t^1}
    \Big( (-i) W_{s,I}(s_1) \Big)
    |\psi_s^{(0)}\rangle\langle\psi_s^{(0)}|
    \Big( i W_{s,I}(s_2)^\dagger \Big)
    B(s_2, s_1)
    ds_1 ds_2 \notag\\
    &+ \int_{(s_1,s_2) \in \mathcal{S}_t^2}
    |\psi_s^{(0)}\rangle\langle\psi_s^{(0)}|
    \Big( i^2 W_{s,I}(s_1)^\dagger W_{s,I}(s_2)^\dagger \Big)
    B(s_1, s_2)
    ds_1 ds_2,
    \label{eqn:eqn-to-diagram example}
\end{align}
where each integral matches the diagrams from left to right.

Consider the leftmost diagram of Fig. \ref{fig:1d-comparison}. Since there are two nodes on the top axis and none on the bottom axis, the integrals are over $\b{s}^{(1)}=(s_1,s_2) \in \mathcal{S}_t^2$ and $\b{s}^{(2)}=() \in \mathcal{S}_t^0$. On the top axis, $s_1$ is closer to the red cross than $s_2$, hence the described term is $\Big( (-i)W_{s,I}(s_2) (-i)W_{s,I}(s_1) \Big) |\psi_s^{(0)}\rangle\langle\psi_s^{(0)}|$. Lastly, the arc represents $B(s_2,s_1)$, and the overall diagram represents the first line of \eqref{eqn:eqn-to-diagram example}.

\end{example}

Thus, we can express the reduced density matrix by the sum of all possible diagrams satisfying the four rules given above:
\begin{equation} \label{eqn:rho diag}
    \includegraphics[width=1.\linewidth]{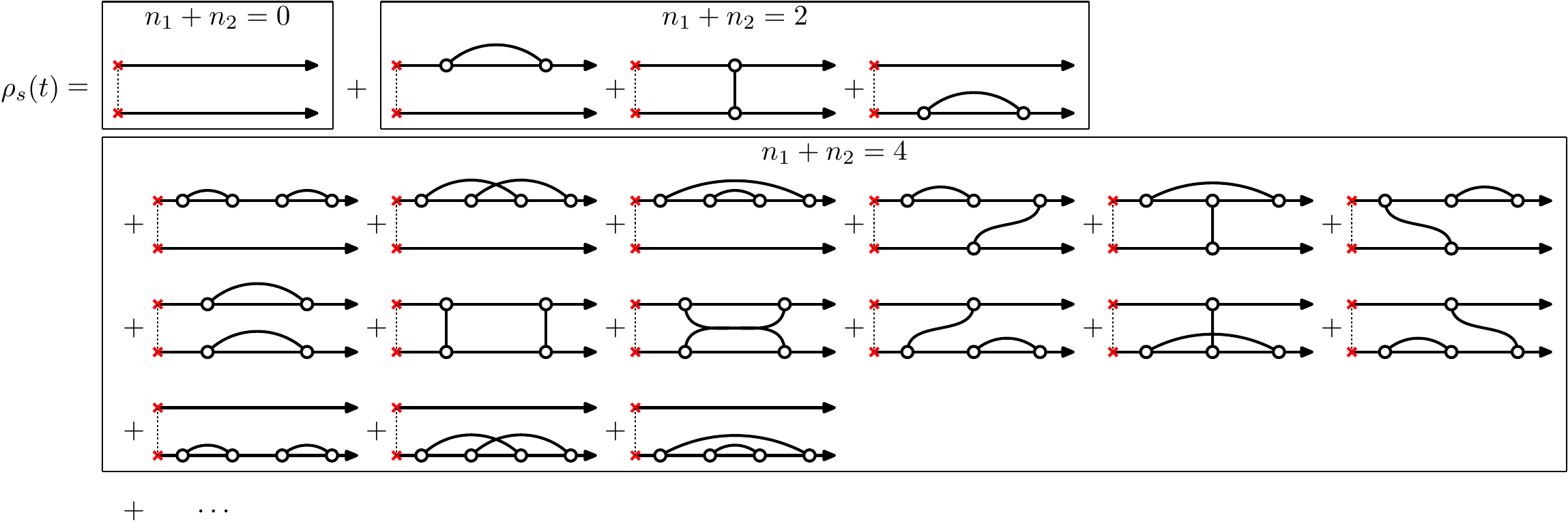}
\end{equation}
Such a formulation will provide a better understanding of the transformations to be performed in the following subsections, which may look tedious when written as equations. 
In particular, we aim to approximate the reduced density matrix as
\begin{equation} \label{eqn:rho fac}
    \rho_{s,I}
    \approx
    \sum\limits_{j=1}^Nc_j|\phi_j\rangle\langle\phi_j|,
\end{equation}
with weights $c_j$ and states $|\phi_j \rangle$ for a relatively small $N$. In contrast to the reduced density matrix, the evaluation of each diagram can be summed up by computing $|\phi_j\rangle$, halving the spatial dimensionality. To this end, we intend to separate the two axes in the diagrams, so that the upper and lower axes correspond to $|\phi_j\rangle$ and $\langle \phi_j|$, respectively. In Sect. \ref{subsec:equiv form}, the contribution on a single axis is first factored out. Subsequently, a low-rank approximation is introduced in Sect. \ref{subsec:lra} to complete the factorization such that $|\phi_j\rangle$ represents a sum of contributions on a single axis that share some feature.

\subsection{An equivalent form of reduced density matrix}\label{subsec:equiv form}
As the first step to achieve \eqref{eqn:rho fac}, we will separate arcs within a single axis from arcs across both axes. 
In \eqref{eqn:rho diag}, each diagram can be interpreted as a double integral with respect to the time sequences $\b{s}^{(1)}$ and $\b{s}^{(2)}$.
The objective of this subsection is to rewrite each diagram as a quadruple integral, whose corresponding four variables are:
\begin{enumerate}
\item Nodes on the top axis connecting to nodes on the bottom axis (black nodes on the top axis in Fig. \ref{fig:1d-decomposition}),
\item Nodes on the bottom axis connecting to nodes on the top axis (black nodes on the bottom axis in Fig. \ref{fig:1d-decomposition}),
\item Nodes on the top axis connecting to other nodes on the top axis (blue nodes in Fig. \ref{fig:1d-decomposition}),
\item Nodes on the bottom axis connecting to other nodes on the bottom axis (red nodes in Fig. \ref{fig:1d-decomposition}).
\end{enumerate}
\begin{figure}[H]
    \centering
    \includegraphics[width=0.4\linewidth]{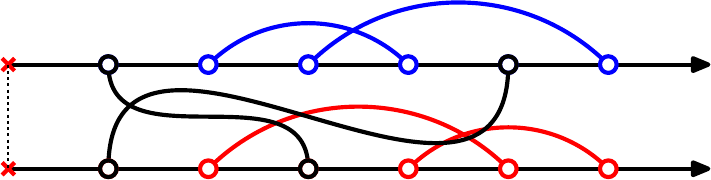}
    \caption{An example diagram illustrating the idea of decomposition of \eqref{eqn:dyson rho-interaction}.}
    \label{fig:1d-decomposition}
\end{figure}
Based on this idea, we can obtain the following reformulation of the reduced density matrix, in which blue and red are used to indicate the corresponding parts of Fig. \ref{fig:1d-decomposition}:
\begin{equation}\label{eqn:den mat-3 branch}
\begin{aligned}
	\rho_{s,I}(t)
	&=\sum\limits_{n=0}^\infty
            \int_{\b{s}^{(1)}, \b{s}^{(2)} \in \mathcal{S}_t^n}
            {
            \color{blue}\sum\limits_{m_1=0\atop\text{$m_1$ even}}^\infty
            (-1)^\frac{m_1}{2}
            \int_{\b{\tau}^{(1)} \in \mathcal{S}_t^{m_1}}
            }
            {
            \color{red}\sum\limits_{m_2=0\atop\text{$m_2$ even}}^\infty
            (-1)^\frac{m_2}{2}
            \int_{\b{\tau}^{(2)} \in \mathcal{S}_t^{m_2}}
            }\\
			&\qquad \mathcal{L}_b^{\rm cross}(\b{s}^{(1)},\b{s}^{(2)})
			{\color{blue} \mathcal{L}_b^{\rm same}(\b{\tau}^{(1)})}
            {\color{red}\Big(\mathcal{L}_b^{\rm same}(\b{\tau}^{(2)})\Big)^*} \\
			&\qquad 
            G_s\left([\b{s}^{(1)},\b{\tau}^{(1)}]\right)
			|\psi_s^{(0)}\rangle\langle\psi_s^{(0)}|
			G_s\left([\b{s}^{(2)},\b{\tau}^{(2)}]\right)^\dagger
		d\b{\tau}^{(1)}d\b{\tau}^{(2)}d\b{s}^{(1)}d\b{s}^{(2)},
\end{aligned}
\end{equation}
where $\b{s}^{(j)}=(s_1^{(j)}, \ldots, s_n^{(j)})$, $\b{\tau}^{(j)} = (\tau_1^{(j)}, \ldots, \tau_{m_j}^{(j)})$, and
\begin{align}
	\mathcal{L}_b^{\rm same}(\b{\tau})
	&= \overline{\mathcal{L}}_b(\b{\tau}), \\
	\mathcal{L}_b^{\rm cross}(\b{s}^{(1)},\b{s}^{(2)})
	&=\sum\limits_{\sigma\in\mathscr{Q}_n}
        \prod_{i=1}^n B(s_{\sigma(i)}^{(2)},s_i^{(1)}).
\end{align}
Note that the use of $[\cdot,\cdot]$ for the inputs of $G_s$ represents vector concatenation, and this convention will be used throughout this paper.
Here, $\mathscr{Q}_n$ is a set of $n!$ elements, including all possible permutations of $\mathcal{I}_n := \{1,\ldots,n\}$, represented as bijections from $\mathcal{I}_n$ to itself. For instance, $\mathscr{Q}_3$ contains $6$ elements $\sigma_1, \sigma_2, \sigma_3, \sigma_4, \sigma_5, \sigma_6$, given by
\begin{gather*}
\sigma_1(1) = 1, \qquad \sigma_2(1) = 1, \qquad \sigma_3(1) = 2, \qquad
\sigma_4(1) = 2, \qquad \sigma_5(1) = 3, \qquad \sigma_6(1) = 3, \\
\sigma_1(2) = 2, \qquad \sigma_2(2) = 3, \qquad \sigma_3(2) = 1, \qquad
\sigma_4(2) = 3, \qquad \sigma_5(2) = 1, \qquad \sigma_6(2) = 2, \\
\sigma_1(3) = 3, \qquad \sigma_2(3) = 2, \qquad \sigma_3(3) = 3, \qquad
\sigma_4(3) = 2, \qquad \sigma_5(3) = 2, \qquad \sigma_6(3) = 1.
\end{gather*}
Diagrammatically, $\mathcal{L}_b^{\rm same}(\b{\tau})$ sums up all possible arc configurations located on a single axis. In contrast, $\mathscr{L}_b^{\rm cross}(\b{s}^{(1)},\b{s}^{(2)})$ sums up all possible arc configurations across both axes, in the sense that each arc connects nodes on two different axes.

In the first line of equation \eqref{eqn:den mat-3 branch}, we have used different colors to label the sums and integrals corresponding to the arcs in Fig. \ref{fig:1d-decomposition}. This rearranged form still sums up all diagrams with every possible arc configuration within or across axes, which is equivalent to the definition given in \eqref{eqn:dyson rho-interaction}.

\begin{remark}
    Equivalently, $\mathcal{L}_b^{\rm same}$ and $\mathcal{L}_b^{\rm cross}$ can be expressed in the following recursive forms.
\begin{align}
	\mathcal{L}_b^{\rm same}(\b{\tau})
    &= \sum_{k=2}^{m} B(\tau_k,\tau_1) \mathcal{L}_b^{\rm same}(\b{\tau}\setminus\{\tau_1, \tau_k\}),\\
	\mathcal{L}_b^{\rm cross}(\b{s}^{(1)},\b{s}^{(2)})
	&=\left\{\begin{array}{ll}
            1,    &\text{if } n=0,\\
		\sum_{k=1}^{n} B(s_k^{(2)},s_1^{(1)})
        \mathcal{L}_b^{\rm cross}(\b{s}^{(1)}\setminus\{s_1^{(1)}\}, \b{s}^{(2)}\setminus\{s_k^{(2)}\}),
			&\text{if $n > 0$}.
	\end{array}\right.
\end{align}
Note that $\mathcal{L}_b^{\rm same}(\b{\tau}) = \mathcal{L}_b(\b{\tau})$.
This demonstrates the possibility to express the integral in \eqref{eqn:den mat-3 branch} as the product of one- and two-dimensional integrals when the contained operators commute.
\end{remark}

\begin{example}
Denote the possible pairings in $\mathcal{L}_b^{\rm same}(\b{\tau})$ as $\mathscr{L}_s(\b{\tau})$,
\begin{equation*}
\begin{aligned}
	\mathscr{L}_s(\b{\tau})
	=\left\{\begin{array}{ll}
		\Big\{\big\{(\tau_2,\tau_1)\big\}\Big\},
			&\text{if }\b{\tau}=(\tau_1,\tau_2),\\
		\Big\{\big\{(\tau_4,\tau_3),(\tau_2,\tau_1)\big\},\big\{(\tau_4,\tau_2),(\tau_3,\tau_1)\big\},\big\{(\tau_4,\tau_1),(\tau_3,\tau_2)\big\}\Big\},
			&\text{if }\b{\tau}=(\tau_1,\tau_2,\tau_3,\tau_4),\\
		\cdots,
			&
	\end{array}\right.
\end{aligned}
\end{equation*}
which corresponds to
\begin{equation*}
\begin{aligned}
	\mathcal{L}_b^{\rm same}(\tau_1,\tau_2)
	&=B(\tau_2,\tau_1),\\
	\mathcal{L}_b^{\rm same}(\tau_1,\tau_2,\tau_3,\tau_4)
	&=B(\tau_4,\tau_3)B(\tau_2,\tau_1)+B(\tau_4,\tau_2)B(\tau_3,\tau_1)+B(\tau_4,\tau_1)B(\tau_3,\tau_2),\\
	&\ \vdots
\end{aligned}
\end{equation*}
For instance, the three diagrams below correspond to the three terms in the sum for $\mathcal{L}_b^{\rm same}(\tau_1,\tau_2,\tau_3,\tau_4)$, in the same order.
\begin{figure}[H]
    \centering
    \includegraphics[width=0.9\linewidth]{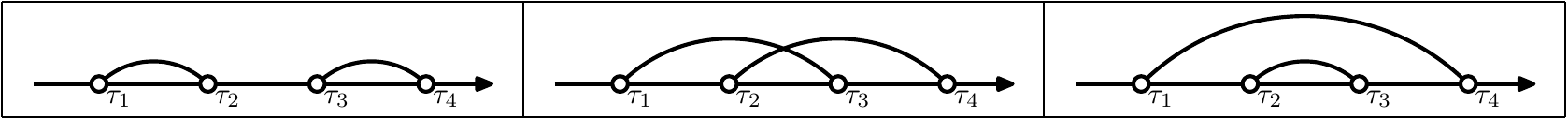}
    \caption{Example of integrals within a single axis in 1-d diagrams.}
    \label{fig:1d-same}
\end{figure}
The diagrams here represent the top axis, which differs from the bottom axis only by a conjugate transpose.
\end{example}

\begin{example}
Denote all possible pairings in $\mathcal{L}_b^{\rm cross}(\b{s}^{(1)},\b{s}^{(2)})$ as $\mathscr{L}_c(\b{s}^{(1)},\b{s}^{(2)})$,
\begin{align*}
	\mathscr{L}_c(\b{s}^{(1)},\b{s}^{(2)})
	&=\left\{\begin{array}{l}
		\Big\{\big\{(s_2,s_1)\big\}\Big\},\\
			\qquad\text{if }n=1,\b{s}^{(1)}=(s_1),\b{s}^{(2)}=(s_2),\\
		\Big\{\big\{(s_3,s_1),(s_4,s_2)\big\},
			\big\{(s_3,s_2),(s_4,s_1)\big\}\Big\},\\
			\qquad\text{if }n=2,\b{s}^{(1)}=(s_1,s_2),\b{s}^{(2)}=(s_3,s_4),\\
		\cdots,
	\end{array}\right.
\end{align*}
which corresponds to
\begin{align*}
	\mathcal{L}_b^{\rm cross}((s_1),(s_2))
	&=B(s_2,s_1),\\
	\mathcal{L}_b^{\rm cross}((s_1,s_2),(s_3,s_4))
	&=B(s_3,s_1)B(s_4,s_2) + B(s_3,s_2)B(s_4,s_1),\\
	&\ \vdots
\end{align*}
For instance, the two diagrams below correspond to the two terms in the sum for \\
$\mathcal{L}_b^{\rm cross}((s_1,s_2),(s_3,s_4))$, in the same order.
\begin{figure}[H]
    \centering
    \includegraphics[width=0.55\linewidth]{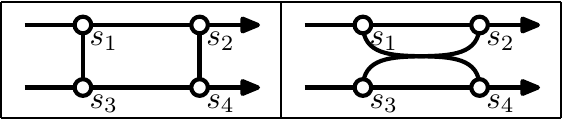}
    \caption{Example of integrals with arcs across both axes in 1-d diagrams.}
    \label{fig:1d-cross}
\end{figure}
\end{example}

\begin{remark}
	Compared to the Keldysh contour \cite{keldysh2024diagram} on $[-t,t]$, \eqref{eqn:den mat-3 branch} is an equivalent form that separates the contribution from $[-t,0]$, $[0,t]$ and those consisting of a two-point correlation crossing the origin point. This allows us to save computational cost by storing values of $G_s([\b{s}^{(1)},\b{\tau}^{(1)}])$ and reusing them when $G_s([\b{s}^{(2)},\b{\tau}^{(2)}])$ has matching inputs.
\end{remark}
In this subsection, we have split the integrals into three parts and identified the diagrams corresponding to each part. As a result, the integrands containing $\b{\tau}^{(1)}$ can be separated from the integrands containing $\b{\tau}^{(2)}$. In the next subsection, a low-rank approximation of the two-point correlation function is introduced to further separate the integrands containing $\b{s}^{(1)}$ from the integrands containing $\b{s}^{(2)}$. Consequently, we can factorize the integrals into independent parts.

\subsection{Low-rank approximation of two-point correlation function}\label{subsec:lra}
Factorization of the entire diagram remains a challenge due to the integration with respect to $\b{s}^{(1)}$ and $\b{s}^{(2)}$.
The purpose of this section is to separate the integrands containing these two variables so that they can be integrated independently, allowing for a complete factorization of the diagrams. Consequently, the evaluation of the reduced density matrix can be achieved by solely computing the wave function, halving the spatial dimensionality.
Since $\b{s}^{(1)}$ and $\b{s}^{(2)}$ are related by the function $\mathcal{L}_b^{\rm cross}(\cdot, \cdot)$, our plan is to apply low-rank approximation to this function, which is essentially a low-rank approximation of the bath correlation function $B(\cdot, \cdot)$.

We assume that the two-point correlation function, which satisfies $B(t_1, t_2) = [B(t_2, t_1)]^*$, has rank $r$, \textit{i.e.}, there exist scalars $\lambda_j$ and univariate functions $V_j(\cdot)$, $j = 1,\ldots,r$, such that
\begin{equation} \label{eq:low_rank_B}
	B(t_1,t_2)
	=\sum\limits_{j=1}^{r}\lambda_jV_j^*(t_1)V_j(t_2).
\end{equation}
Note that
\begin{equation*}
    \prod\limits_{k=1}^n V_{j_k}(s_{\sigma(k)})
    =\prod\limits_{k=1}^n V_{j_{\sigma^{-1}(k)}}(s_k),
\end{equation*}
for some permutation $\sigma \in \mathscr{Q}_n$.
Substituting the low-rank approximation into \eqref{eqn:den mat-3 branch}, we find that
\begin{equation*}
\begin{aligned}
    \rho_{s,I}(t)
    =&\sum\limits_{n=0}^\infty
        \sum_{\b{j}\in\{1,\cdots,r\}^n} \left(\prod\limits_{k=1}^n\lambda_{j_k}\right)
        \sum_{\sigma \in\mathscr{Q}_n}\\
        &\qquad\sum\limits_{m_1=0\atop\text{$m_1$ even}}^\infty(-1)^\frac{m_1}{2}
        \int_{\b{s}^{(1)} \in \mathcal{S}_t^n, \b{\tau}^{(1)} \in \mathcal{S}_t^{m_1}}
        \sum\limits_{m_2=0\atop\text{$m_2$ even}}^\infty(-1)^\frac{m_2}{2}
        \int_{\b{s}^{(2)} \in \mathcal{S}_t^n, \b{\tau}^{(2)} \in \mathcal{S}_t^{m_2}}\\
    &\qquad\left(\prod\limits_{k=1}^nV_{j_k}(s_k^{(1)})\right)
        \mathcal{L}_b^{\rm same}(\b{\tau}^{(1)})
        G_s\left([\b{s}^{(1)},\b{\tau}^{(1)}]\right)
	|\psi_s^{(0)}\rangle\langle\psi_s^{(0)}|\\
    &\qquad\cdot\left(\prod\limits_{k=1}^nV_{j_{\sigma(k)}}(s_k^{(2)})\right)^*
        \mathcal{L}_b^{\rm same}(\b{\tau}^{(2)})^*
        G_s\left([\b{s}^{(2)},\b{\tau}^{(2)}]\right)^\dagger
    d\b{\tau}^{(1)}d\b{\tau}^{(2)}d\b{s}^{(1)}d\b{s}^{(2)},
\end{aligned}
\end{equation*}
where $\b{j} = (j_1,\ldots,j_n)$, and the sum over $\b{j}$ comes from the product of $n$ bath correlation functions \eqref{eq:low_rank_B} in the expansion of $\mathcal{L}_b^{\rm cross}$.
The reduced density matrix can then be written as
\begin{align}
	\label{eqn:den mat-lra}
	\rho_{s,I}(t)
	&=\sum\limits_{n=0}^\infty
	\sum_{\b{j} \in \{1,\cdots,r\}^n}
	\left(\prod\limits_{k=1}^n\lambda_{j_k}\right)
	\sum\limits_{\sigma \in \mathscr{Q}_n}
		I_n(t, \b{j}) |\psi_s^{(0)}\rangle
		\Big(I_n(t, \b{j}_{\sigma}) |\psi_s^{(0)}\rangle\Big)^\dagger,\notag\\
    &\qquad \b{j}_{\sigma} = \left(j_{\sigma(1)}, \ldots, j_{\sigma(n)} \right), \\
	I_n(t, \b{j})
	&=
        \int_{\b{s} \in \mathcal{S}_t^n}
		\left(\prod\limits_{k=1}^nV_{j_k}(s_k)\right)
		\sum\limits_{m=0\atop\text{$m$ even}}^\infty(-1)^\frac{m}{2}
        \int_{\b{\tau} \in \mathcal{S}_t^m}
		\mathcal{L}_b^{\rm same}(\b{\tau})
        G_s\left([\b{s},\b{\tau}]\right)
		d\b{\tau}d\b{s}.
        \label{eqn:I-n,jk}
\end{align}
In practice, we usually apply the singular value decomposition to find $\lambda_j$ and $V_j$ such that the right-hand side of \eqref{eq:low_rank_B} approximates the function $B(\cdot,\cdot)$. 

The expression \eqref{eqn:den mat-lra} can also be represented by diagrams.
In general, $\rho_{s,I}(t)$ is still the sum of all diagrams with arbitrarily many arcs within or across axes.
However, for any arc across both axes, it is broken up into a sum of $r$ terms as in \eqref{eq:low_rank_B}.
As shown below, we will use different colors to denote different terms on the right-hand side of \eqref{eq:low_rank_B}.
For instance, when $r = 3$, we have 
\begin{equation} \label{fig:rho diagram lra}
    \includegraphics[width=1.\linewidth]{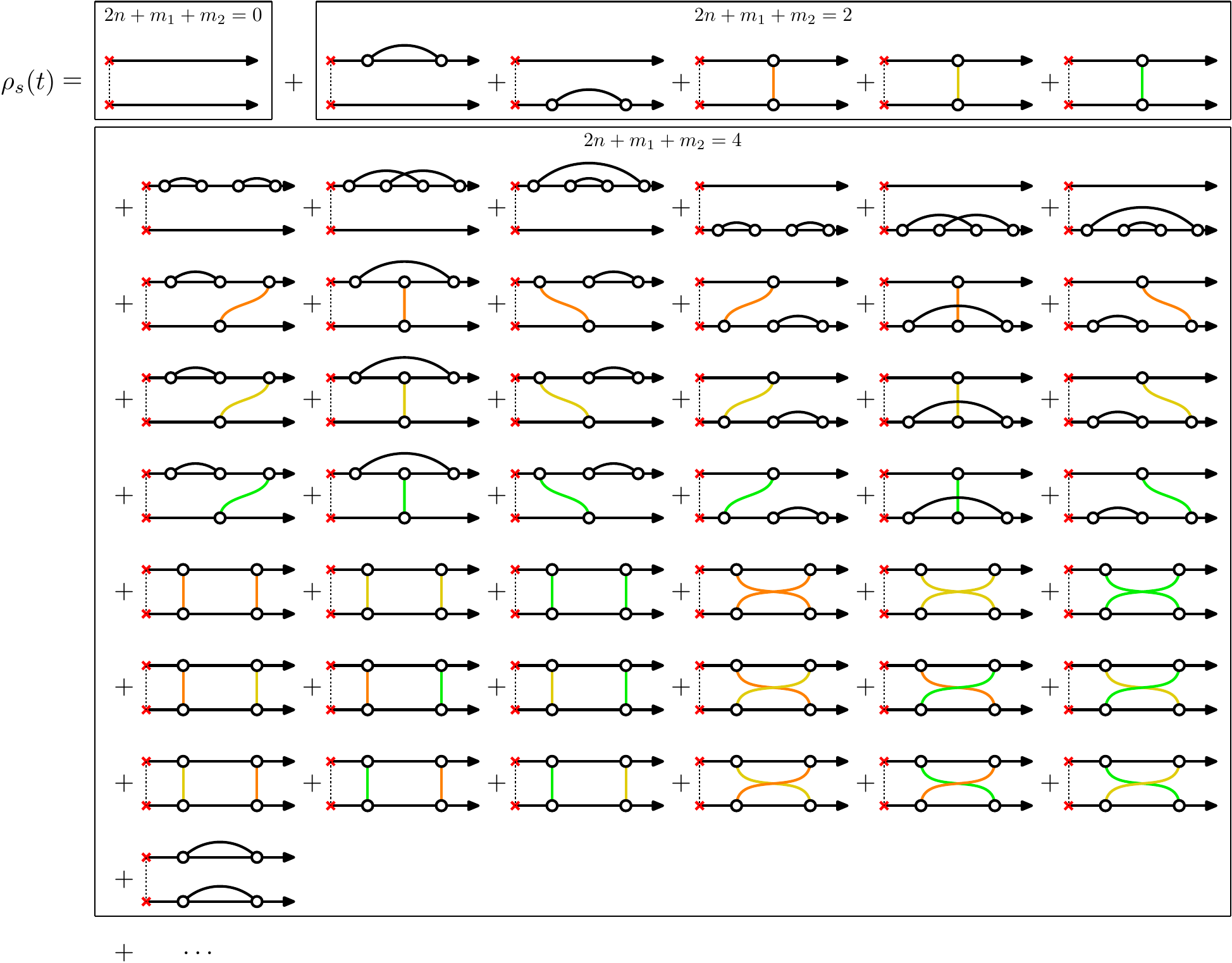}
\end{equation}
where the orange, yellow, and green arcs represent $\lambda_1 V_1^*(\cdot) V_1(\cdot)$, $\lambda_2 V_2^*(\cdot) V_2(\cdot)$, and $\lambda_3 V_3^*(\cdot) V_3(\cdot)$, respectively.
Furthermore, since each arc represents a product, we can separate the two axes in each diagram:
\begin{figure}[H]
    \centering
    \includegraphics[width=0.4\linewidth]{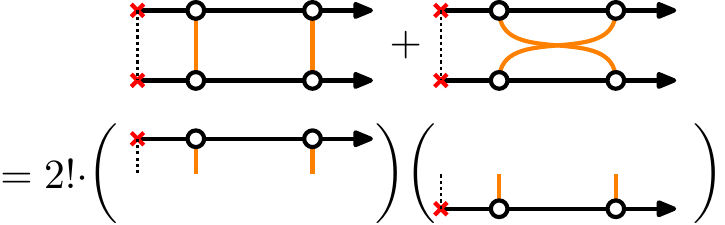}
    \caption{Diagram factorization for $n=2$, $\{j_1,j_2\}=\{1,1\}$.}
    \label{fig:factor example-same}

    \bigskip
    \includegraphics[width=0.8\linewidth]{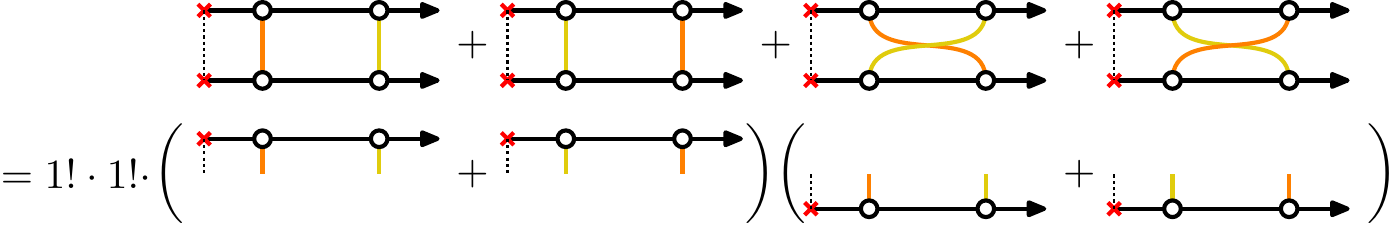}
    \caption{Diagram factorization for $n=2$, $\{j_1,j_2\}=\{1,2\},\{2,1\}$.}
    \label{fig:factor example-dif}
\end{figure}
\noindent so that terms with the same factor can be combined. In particular, the factor $2!$ in Fig. \ref{fig:factor example-same} and the factor $1!\cdot1!$ in Fig. \ref{fig:factor example-dif} represent the number of permutations for each colored arc. Each diagram in Fig. \ref{fig:factor example-same} has 2 arcs of the same color, hence $2!$ permutations, while each diagram in Fig. \ref{fig:factor example-dif} has 1 arc of two different colors, hence $1!\cdot1!$ permutations.

Further reduction of the formula requires a different representation of the sum over $j_1, \ldots, j_n$.
For each sequence $(j_1, \ldots, j_n)$, the quantity $N_i = \sum_{k=1}^n \delta_{i,j_k}$ represents the count of $i$ in the sequence.
Thus, given a multi-index  $\b{N}=(N_1,N_2,\ldots,N_r)$ satisfying $|\b{N}| := N_1 + \ldots + N_r = n$, the set
\begin{equation} \label{eqn:J N}
\mathcal{J}(\b{N}) = \left\{ \b{j} = (j_1,\ldots,j_n) \,\Bigg\vert\, \sum_{k=1}^n \delta_{i,j_k} = N_i \text{ for all } i = 1,\ldots,r \right\}
\end{equation}
contains all sequences whose index counts are given by components of $\b{N}$.
Therefore, the sum over $j_1, \ldots, j_n$ can be written as
\begin{displaymath}
\sum_{\b{j} \in \{1,\ldots,r\}^n} = \sum_{|\b{N}|=n} \sum_{\b{j} \in \mathcal{J}(\b{N})}.
\end{displaymath}
Note that for any $(j_1, \ldots, j_n) \in \mathcal{J}(\b{N})$, each of its permutations $(j_1', \ldots, j_n')$ is also an element in $\mathcal{J}(\b{N})$, and there exist $\b{N}!$ permutations $\sigma \in \mathscr{Q}_n$ such that $(j_1', \ldots, j_n') = (j_{\sigma(1)}, \ldots, j_{\sigma(n)})$, where $\b{N}! = N_1! \cdots N_r!$.
Therefore, the summation over $\sigma \in \mathscr{Q}_n$ in \eqref{eqn:den mat-lra} can be replaced by $\b{N}!$ times the summation over $\mathcal{J}(\b{N})$, yielding
\begin{displaymath}
\rho_{s,I}(t) = \sum_{n=0}^{\infty} \sum_{|\b{N}| = n} \b{\lambda}^{\b{N}} \b{N}! \sum_{\b{j} \in \mathcal{J}(\b{N})}\sum_{\b{j}' \in \mathcal{J}(\b{N})} 
    I_n(t, \b{j}) |\psi_s^{(0)}\rangle
    \Big(I_n(t, \b{j}') |\psi_s^{(0)}\rangle\Big)^\dagger,
\end{displaymath}
where $\b{\lambda}^{\b{N}} = \lambda_1^{N_1} \lambda_2^{N_2} \cdots \lambda_r^{N_r}$.
This factor $\b{N}!$ has been written explicitly in Fig. \ref{fig:factor example-same} and Fig. \ref{fig:factor example-dif}.
Finally, we obtain
\begin{align}
	\label{eqn:rho N}
	\rho_{s,I}(t)
	&=\sum\limits_{n=0}^\infty\sum\limits_{|\b{N}|=n}
		\b{\lambda}^{\b{N}}\b{N}!
		\Big(I_{n,\b{N}}(t)|\psi_s^{(0)}\rangle\Big)
		\Big(I_{n,\b{N}}(t)|\psi_s^{(0)}\rangle\Big)^\dagger,\\
	\label{eqn:integral N}
	I_{n,\b{N}}(t)
	&=
        \int_{\b{s} \in \mathcal{S}_t^n}
		\left(\sum\limits_{\b{j} \in \mathcal{J}(\b{N})}\prod\limits_{k=1}^nV_{j_k}(s_k)\right)
		\sum\limits_{m=0\atop\text{$m$ even}}^\infty(-1)^\frac{m}{2}
        \int_{\b{\tau} \in \mathcal{S}_t^m}
        \mathcal{L}_b^{\rm same}(\b{\tau})
        G_s\left([\b{s},\b{\tau}]\right)
		d\b{\tau}d\b{s}.
\end{align}
\begin{example} \label{exm:1d-LRA}
To illustrate the second summation in \eqref{eqn:rho N}, two different setups are presented in the following figures, showing all the distinct permutations.

For $r=3$, $n=2$, $\b{N}=(1,1,0)$, we get $\mathcal{J}(\b{N})=\{(1,2),(2,1)\}$.
\begin{figure}[H]
    \centering
    \includegraphics[width=0.45\linewidth]{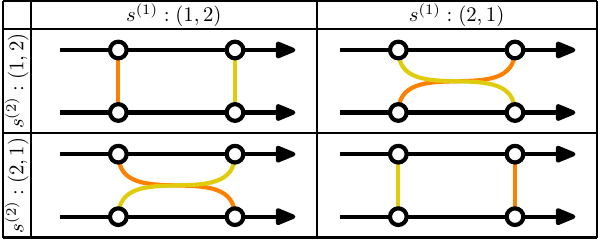}
    \caption{4 distinct diagrams for $\mathcal{J}(\b{N})=\{(1,2),(2,1)\}$.}
    \label{fig:LRA-110}
\end{figure}
For $r=5$, $n=3$, $\b{N}=(2,0,1,0,0)$, we get $\mathcal{J}(\b{N})=\{(1,1,3),(1,3,1),(3,1,1)\}$.
\begin{figure}[H]
    \centering
    \includegraphics[width=0.9\linewidth]{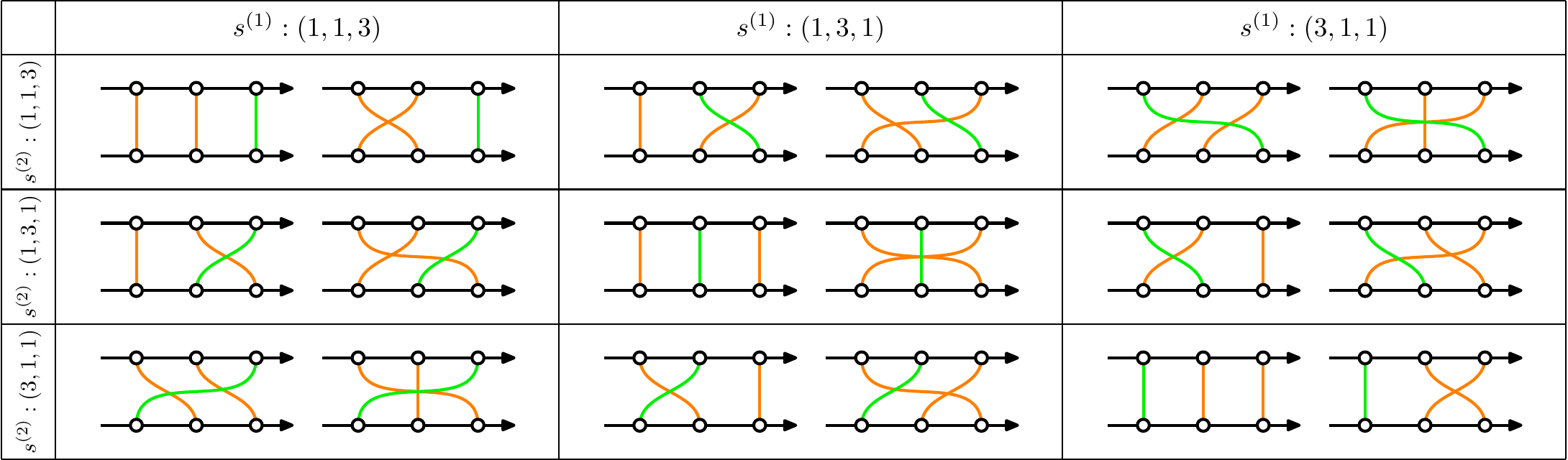}
    \caption{18 distinct diagrams for $\mathcal{J}(\b{N})=\{(1,1,3),(1,3,1),(3,1,1)\}$.}
    \label{fig:LRA-20100}
\end{figure}
In Figs. \ref{fig:LRA-110} and \ref{fig:LRA-20100}, different colored arcs represent different $V_j$ in the low-rank approximation and the color configurations on each axis are determined by the index set $\mathcal{J}(\b{N})$. Each Fig. only shows two distinct colors as there are only two nonzero elements in $\b{N}$ for both cases.

Consider the diagrams in row 2 and column 3 of Fig. \ref{fig:LRA-20100}. The orange and green arcs represent $V_1$ and $V_3$, respectively. In sequential order, the arc colors for nodes on the top axis are (green, orange, orange), corresponding to $\b{s}^{(1)}=(3,1,1)$, while the arc colors for nodes on the bottom axis are (orange, green, orange), corresponding to $\b{s}^{(2)}=(1,3,1)$. Specifically, $\b{s}^{(1)}=(3,1,1)$ means that for the top axis, the first node has an arc representing $V_3$, and the second and third nodes have arcs representing $V_1$. The two diagrams represent the only two configurations that satisfy the required arc colors.

Note that $I_{n,\b{N}}$ sums up $(n!)^2/\b{N}!$ diagrams with repeated features based on $\b{N}$. Although this summation does not reduce asymptotic space complexity, it offers substantial memory savings when $n$ is large.
\end{example}

\begin{remark}
    Compared to the single integral in \eqref{eqn:dyson rho-interaction} that corresponds to the contribution on a single axis, there exist two integrals in \eqref{eqn:integral N}. These arise from (i) the factorization of arcs on the single axis, (ii) the low-rank approximation to decompose the arcs across both axes. However, the number of temporal variables remains the same for these two formulas, and the latter expression allows a reduction in complexity.
\end{remark}

In this section, the factorization of the reduced density matrix in the interaction picture is achieved. In particular, instead of \eqref{eqn:rho fac}, we now have
\begin{equation}\label{eqn:general decomposition}
    c_j=\b{\lambda}^{\b{N}}\b{N}!,
    \qquad\text{and}\qquad
    |\phi_j\rangle=I_{n,\b{N}}(t)|\psi_s^{(0)}\rangle.
\end{equation}
As a result, the computation of the reduced density matrix requires only the calculation of $|\phi_j\rangle$, which halves the spatial dimensionality.

The current low-rank approximation framework shares several conceptual similarities with stochastic unraveling and quantum trajectory techniques \cite{diosi1997non,diosi1998non,suess2014hierarchy}. 
Compared to stochastic unraveling, which interprets the master equation as an average over individual quantum trajectories conditioned on environmental measurements, our method views the dynamics as a deterministic evolution within a truncated manifold. Both approaches exploit reduced wavefunction-based representations to alleviate the quadratic complexity associated with direct density-matrix propagation.

Although tremendous computational savings are achieved, the evaluation of $I_{n,\b{N}}(t)|\psi_s^{(0)}\rangle$ remains a difficulty due to the $n$-dimensional integral.
In the next section, we will focus on the Caldeira-Leggett model, where the frozen Gaussian approximation provides a fascinating feature that allows us to simplify this to a two-dimensional integral
regardless of the number of terms in the Dyson series.

\section{An efficient algorithm for Caldeira-Leggett model}
In this section, we apply the conclusion in Sect. \ref{sec:oqs} to the Caldeira-Leggett model, where a quantum particle is coupled to a harmonic bath.
Instead of direct discretization of the wave function, we apply the frozen Gaussian approximation that can effectively handle high-frequency waves in the solution. This also facilitates the design of an efficient algorithm for computing the density. In what follows, we will first review the Caldeira-Leggett model and the technique of frozen Gaussian approximation before introducing our numerical solver to compute the density.

\subsection{Caldeira-Leggett model}\label{subsec:cl model}
In most literature, the Caldeira-Leggett model is described as a one-dimensional particle coupled to a bath.
Here, we provide a multidimensional version of the model, which can be considered as several interacting one-dimensional particles coupled to separate heat baths.
This type of model has been studied extensively for many-particle open quantum systems \cite{makri2018modular,kundu2020modular,makri2024two}. 
In the $D$-dimensional Caldeira-Leggett model, the Hamiltonian can be formulated as
\begin{equation}
	H_s
	=-\frac{\epsilon^2}{2}\nabla_{\b{x}}^2+V(\b{x})+\sum\limits_{l=1}^L\frac{c_l^2}{2\omega_l^2}|\hat{\b{x}}|^2,
	\qquad
	H_b
	=\sum\limits_{l=1}^L\left(-\frac{\epsilon^2}{2}\nabla_{\b{z}_l}^2+\frac{1}{2}\omega_l^2|\hat{\b{z}_l}|^2\right),
\end{equation}
and the interaction operator describes the coupling between the particle and harmonic oscillators:
\begin{equation}\label{eqn:interaction-nd}
	W=\sum\limits_{d=1}^DW_s^{(d)}\otimes W_b^{(d)},
\end{equation}
where
\begin{equation}
	W_s^{(d)}
	=\hat{x}_d,
	\qquad\text{and}\qquad
	W_b^{(d)}
	=\frac{1}{\epsilon}\sum\limits_{l=1}^Lc_l\hat{z}_{l,d}.
\end{equation}
Here we provide a list of notations used in the equations above:
\begin{itemize}
	\item $\hat{\b{x}}$ ($\hat{x}_d $): The position operator of the particle in the system, $\psi(t,\b{x},\b{z})\mapsto\b{x}\psi(t,\b{x},\b{z})$ ($\psi(t,\b{x},\b{z})\mapsto x_d\psi(t,\b{x},\b{z})$).
	\item $\hat{\b{z}_l}$ ($\hat{z}_{l,d}$): The position operator of the particle in the bath, $\psi(t,\b{x},\b{z})\mapsto\b{z}_l\psi(t,\b{x},\b{z})$ ($\psi(t,\b{x},\b{z})\mapsto z_{l,d}\psi(t,\b{x},\b{z})$).
	\item $\omega_l$: The frequency of the $l $-th harmonic oscillator.
	\item $c_l$: The coupling intensity between the particle and the $l $-th harmonic oscillator.
	\item $V$: The potential function which is real and smooth.
\end{itemize}

The density matrix in the Caldeira-Leggett model can also be represented as a Dyson series like \eqref{eqn:dyson rho-interaction}.
However, when defining the propagator $G_s(\b{s})$ (see \eqref{eqn:evolution op}), the coupling operator $W_s$ now depends on the dimension $d$, leading to the definition
\begin{equation}\label{eqn:evolution op-nd}
    G_s(\b{s}, \b{d})
    =\mathcal{T}\prod\limits_{k=1}^nW_{s,I}^{(d_k)} (s_k),
    \quad
    W_{s,I}^{(d_k)}(s_k)
    :=e^{iH_ss_k}W_s^{(d_k)}e^{-iH_ss_k},
\end{equation}
where $\mathcal{T}$ is again the time-ordering operator that arranges the $W_{s,I}$ operators chronologically according to the time variable $s_k$, and $\b{d} = (d_1, \cdots, d_n)$ is a multi-index in $\{1,\cdots,D\}^n$.
The bath influence functional $\mathcal{L}_b$ will also depend on the dimension $d_k$.
Given $\b{s}^{(1)} \in \mathbb{R}_+^{n_1}$, $\b{d}^{(1)} \in \{1,\cdots,D\}^{n_1}$ and $\b{s}^{(2)} \in \mathbb{R}_+^{n_2}$, $\b{d}^{(2)} \in \{1,\cdots,D\}^{n_2}$, we have
\begin{equation}\label{eqn:bif-nd}
    \begin{aligned}
    & \mathcal{L}_b(\b{s}^{(1)}, \b{d}^{(1)};\b{s}^{(2)}, \b{d}^{(2)}) \\
    =& \left\{\begin{array}{ll}
        0,
            & \text{if $n_1+n_2$ is odd},\\
        \overline{\mathcal{L}}_b\Big((s_1^{(1)}, \ldots, s_{n_1}^{(1)},s_{n_2}^{(2)}, \ldots,s_1^{(2)}), (d_1^{(1)}, \ldots, d_{n_1}^{(1)},d_{n_2}^{(2)}, \ldots,d_1^{(2)})\Big),
            & \text{if $n_1+n_2$ is even},
    \end{array}\right.
    \end{aligned}
\end{equation}
where
\begin{equation} \label{eqn:Lb nd}
\overline{\mathcal{L}}_b\Big( (s_1, \ldots, s_n), (d_1, \ldots, d_n) \Big) = \sum_{P \in \mathscr{P}_n} \prod_{(i,j) \in P} \delta_{d_i, d_j} B(s_j, s_i).
\end{equation}
Thus, the Dyson series expansion of the density matrix turns out to be
\begin{equation} \label{eqn:dyson cl}
\begin{aligned}
	\rho_{s,I}(t)
	=\sum\limits_{n_1,n_2=0}^\infty
		&
        \int_{\b{s}^{(1)} \in \mathcal{S}_t^{n_1}}
        \sum_{\b{d}^{(1)} \in \{1,\cdots,D\}^{n_1}}
        \int_{\b{s}^{(2)} \in \mathcal{S}_t^{n_2}}
        \sum_{\b{d}^{(2)} \in \{1,\cdots,D\}^{n_2}}\\
		&\Big((-i)^{n_1} G_s(\b{s}^{(1)}, \b{d}^{(1)}) \Big) |\psi_s^{(0)}\rangle\langle\psi_s^{(0)}|\\
		&\cdot\Big((-i)^{n_2} G_s(\b{s}^{(2)}, \b{d}^{(2)}) \Big)^\dagger
        \mathcal{L}_b(\b{s}^{(1)}, \b{d}^{(1)}; \b{s}^{(2)},\b{d}^{(2)})
		d\b{s}^{(1)}d\b{s}^{(2)}.
\end{aligned}
\end{equation}

\begin{example}\label{exm:2d-comparison}
Due to the introduction of multiple dimensions, more complicated diagrams are needed to represent terms in the Dyson series \eqref{eqn:dyson cl}.
Note that each time point $s$ is bound to a dimension index $d$, and only when two time points are bound to the same dimension index, the two-point correlation function $B(\cdot,\cdot)$ has a contribution in the bath influence functional (see \eqref{eqn:Lb nd}).

This inspires us to draw $D$-dimensional diagrams to represent the integrands.
In each dimension, there are again two axes that accommodate points in $\b{s}^{(1)}$ and $\b{s}^{(2)}$, and due to the Kronecker symbol $\delta_{d_i,d_j}$ in \eqref{eqn:Lb nd}, only nodes located on axes of the same dimension can be connected, representing a bath correlation function.
One example is as follows:
\begin{figure}[H]
    \centering
    \includegraphics[width=0.8\linewidth]{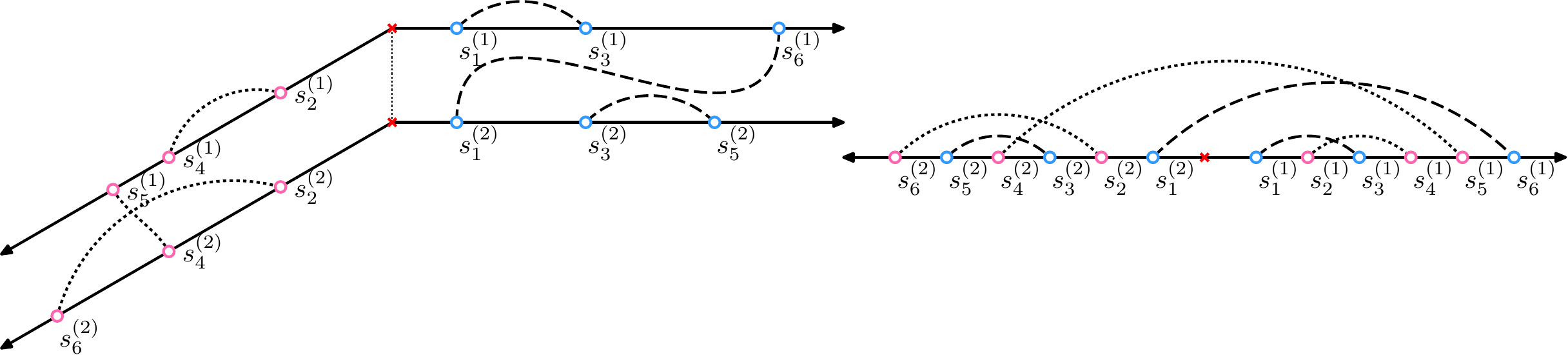}
    \caption{Comparison between diagram (left) and corresponding Keldysh contour (right).}
    \label{fig:2d-comparison}
\end{figure}

The left diagram in Fig. \ref{fig:2d-comparison} exhibits one sample in the case $D = 2$, while the right diagram is its ``flattened'' version. It represents the parameters $n_1 = n_2 = 6$, since both $\b{s}^{(1)}$ and $\b{s}^{(2)}$ have 6 components.
The corresponding $d$-values are
\begin{displaymath}
d_1^{(1)} = d_3^{(1)} = d_6^{(1)} = d_1^{(2)} = d_3^{(2)} = d_5^{(2)} = 1, \quad
d_2^{(1)} = d_4^{(1)} = d_5^{(1)} = d_2^{(2)} = d_4^{(2)} = d_6^{(2)} = 2,
\end{displaymath}
which can be observed by the location of each $s_k^{(1)}$ or $s_k^{(2)}$.
The connections between nodes show that this diagram corresponds to the term
\begin{displaymath}
B(s_3^{(1)}, s_1^{(1)}) B(s_1^{(2)}, s_6^{(1)}) B(s_3^{(2)}, s_5^{(2)}) B(s_4^{(1)}, s_2^{(1)}) B(s_4^{(2)}, s_5^{(1)}) B(s_2^{(2)}, s_6^{(2)})
\end{displaymath}
in the expansion of the bath influence functional \eqref{eqn:Lb nd}.
Meanwhile, we have also used blue and red nodes to denote the coupling operators $W_s^{(1)}$ and $W_s^{(2)}$, respectively.

In the flattened version, axes of different dimensions are combined, using different node colors and line styles for distinction: blue nodes with dashed arcs denote nodes in the first dimension and their connections, while red nodes with dotted arcs correspond to the second dimension.

The density matrix \eqref{eqn:dyson cl} can now be regarded as the sum of all such diagrams, where nodes can be located on any axis of any dimension, given that each dimension has an even number of nodes which are connected to other nodes within the same dimension.

\end{example}

We will now incorporate the low-rank decomposition of the bath correlation function \eqref{eq:low_rank_B} into the Dyson series.
Similar to Sect. \ref{subsec:lra}, the low-rank decomposition adds colors to arcs connecting the nodes, so that each dotted or dashed arc in Fig. \ref{fig:2d-comparison} can take any of the $r$ colors, each representing $\lambda_k V_k^*(\cdot) V_k(\cdot)$ for a certain $k = 1,\ldots,r$.
Using the same derivation as that in Sect. \ref{subsec:lra}, we obtain
\begin{equation} \label{eqn:rho N nd}
\begin{aligned}
\rho_{s,I}(t) = \sum_{n=0}^{\infty} \sum_{\substack{\b{N}^{(1)}, \ldots, \b{N}^{(D)} \in \mathbb{N}^r \\ |\b{N}^{(1)}| + \cdots + |\b{N}^{(D)}|=n}}
&\left( \prod_{d=1}^D \b{\lambda}^{\b{N}^{(d)}} \b{N}^{(d)}! \right)\\
&\cdot\Big( I_{n,\b{N}^{(1)}, \ldots, \b{N}^{(D)}}(t) |\psi_s^{(0)}\rangle \Big)
\Big( I_{n,\b{N}^{(1)}, \ldots, \b{N}^{(D)}}(t) |\psi_s^{(0)}\rangle \Big)^\dagger.
\end{aligned}
\end{equation}
Before defining the operator $I_{n,\b{N}^{(1)}, \ldots, \b{N}^{(D)}}$, we note that the multi-index $\b{N}^{(d)}$ denotes the numbers of arcs of different colors connecting nodes on axes in the $d$-th dimension, so that \eqref{eqn:rho N nd} is a natural extension of \eqref{eqn:rho N} to $D$ dimensions.
The generalization of $I_{n,\b{N}}$ defined in \eqref{eqn:integral N} needs the introduction of the index set
\begin{equation} \label{eqn:JnN}
\mathcal{J}(\b{N}^{(1)}, \ldots, \b{N}^{(D)}) := \left\{ (\b{j}, \b{d}) \,\Bigg|\, \sum_{k=1}^n \delta_{j_k,i} \delta_{d_k,d} = N_i^{(d)}, \, \forall i=1,\ldots,r, \, d = 1,\ldots,D\right\},
\end{equation}
which means that for any $(\b{j}, \b{d}) \in \mathcal{J}(\b{N}^{(1)}, \ldots, \b{N}^{(D)})$, we can find $N_i^{(d)}$ pairs of $(j_k, d_k)$ equal to $(i,d)$.
When $D = 1$, it reduces to the previous definition \eqref{eqn:J N} since $d_k \equiv 1$.
The $D$-dimensional generalization of \eqref{eqn:integral N} can then be expressed as
\begin{equation}\label{eqn:integral N-nd}
\begin{aligned}
I_{n,\b{N}^{(1)}, \ldots, \b{N}^{(D)}}(t)
	&=\int_{\b{s} \in \mathcal{S}_t^n}
		\sum\limits_{(\b{j},\b{d})\in\mathcal{J}(\b{N}^{(1)}, \cdots, \b{N}^{(D)})} \left(\prod\limits_{k=1}^nV_{j_k}(s_k)\right) \\
		&\qquad\sum\limits_{m=0\atop\text{$m$ even}}^\infty(-1)^\frac{m}{2}
        \int_{\b{\tau} \in \mathcal{S}_t^m}
			\sum_{\b{\kappa}\in \{1,\cdots,D\}^m} \mathcal{L}_b^{\rm same}(\b{\tau}, \b{\kappa})
			G_s\left([\b{s},\b{\tau}], [\b{d},\b{\kappa}]\right)
		d\b{\tau}d\b{s},
\end{aligned}
\end{equation}
where
\begin{equation}
    \mathcal{L}_b^{\rm same}(\b{\tau}, \b{\kappa}) = \overline{\mathcal{L}}_b(\b{\tau}, \b{\kappa}).
\end{equation}

For simplicity, we will use $\b{N} = (\b{N}^{(1)}, \ldots, \b{N}^{(D)})$ to denote the collection of the $D$ multi-indices, so that $I_{n,\b{N}^{(1)}, \ldots, \b{N}^{(D)}}$ and $\mathcal{J}(\b{N}^{(1)}, \ldots, \b{N}^{(D)})$ can be simplified to $I_{n,\b{N}}$ and $\mathcal{J}(\b{N})$, resembling the notations in the one-dimensional case.
Furthermore, if we let
\begin{equation*}
    |\b{N}|
	:=\sum_{d=1}^D |\b{N}^{(d)}|,
	\qquad
	\b{N}!
	:=\prod_{d=1}^D \b{N}^{(d)}!,
	\qquad
	\b{\lambda}^{\b{N}}
	:=\prod_{d=1}^D \b{\lambda}^{\b{N}^{(d)}},
\end{equation*}
then the reduced density matrix is again given by \eqref{eqn:rho N}, while $I_{n,\b{N}}(t)$ should adopt the definition in \eqref{eqn:integral N-nd}.
These simplified notations will be utilized hereafter.

\begin{example}
As an extension to Example \ref{exm:1d-LRA}, here we present three different setups and show all the distinct permutations.

For  
$D=1$, $r=3$, $n=2$, $\b{N}^{(1)} = (1,1,0)$, 
we get $\mathcal{J}(\b{N}) = \{\{(1,1),(2,1)\},\{(2,1),(1,1)\}\}$. 
This case corresponds to the first case in Example \ref{exm:1d-LRA}.

For  
$D=2$, $r=3$, $n=2$, $\b{N}^{(1)} = (0,1,0)$, $\b{N}^{(2)} = (0,0,1)$, 
we get $J_2 = \mathcal{J}(\b{N}) = \{\{(2,1),(3,2)\}$, $\{(3,2),(2,1)\}\}$.
\begin{figure}[H]
    \centering
    \includegraphics[width=0.95\linewidth]{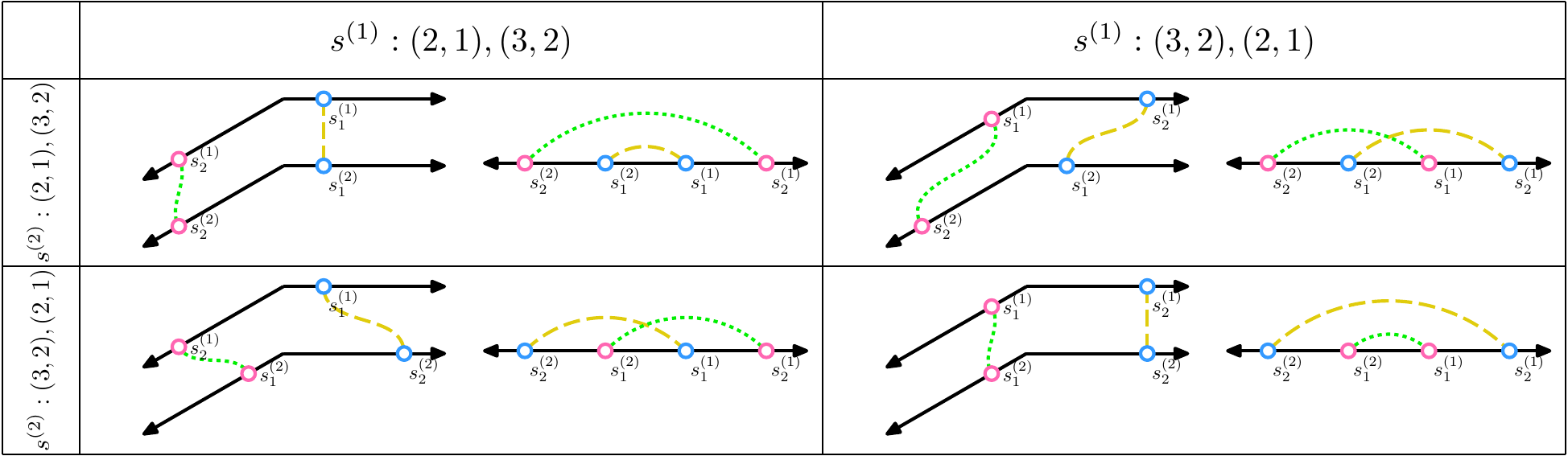}
    \caption{4 distinct diagrams and corresponding Keldysh contours for $J_2$.}
    \label{fig:2d-LRA-010001}
\end{figure}
In Fig. \ref{fig:2d-LRA-010001}, with the aid of Keldysh contours, one can obtain a similar result as the first case in Example \ref{exm:1d-LRA}. Consider the diagrams in row 1 and column 2. On the top axis, (i) the first node $s_1^{(1)}$ has the arc representing $V_3$ and corresponds to the interaction operator in dimension 2; (ii) the second node $s_2^{(1)}$ has the arc representing $V_2$ and corresponds to the interaction operator in dimension 1.

For $D=2$, $r=3$, $n=3$, 
$\b{N}^{(1)}=(2,0,0)$, $\b{N}^{(2)}=(0,0,1)$,
we get \\ $J_3 = \mathcal{J}(\b{N})=\{\{(1,1),(1,1),(3,2)\},\{(1,1),(3,2),(1,1)\},\{(3,2),(1,1),(1,1)\}\}$.
\begin{figure}[H]
    \centering
    \includegraphics[width=0.95\linewidth]{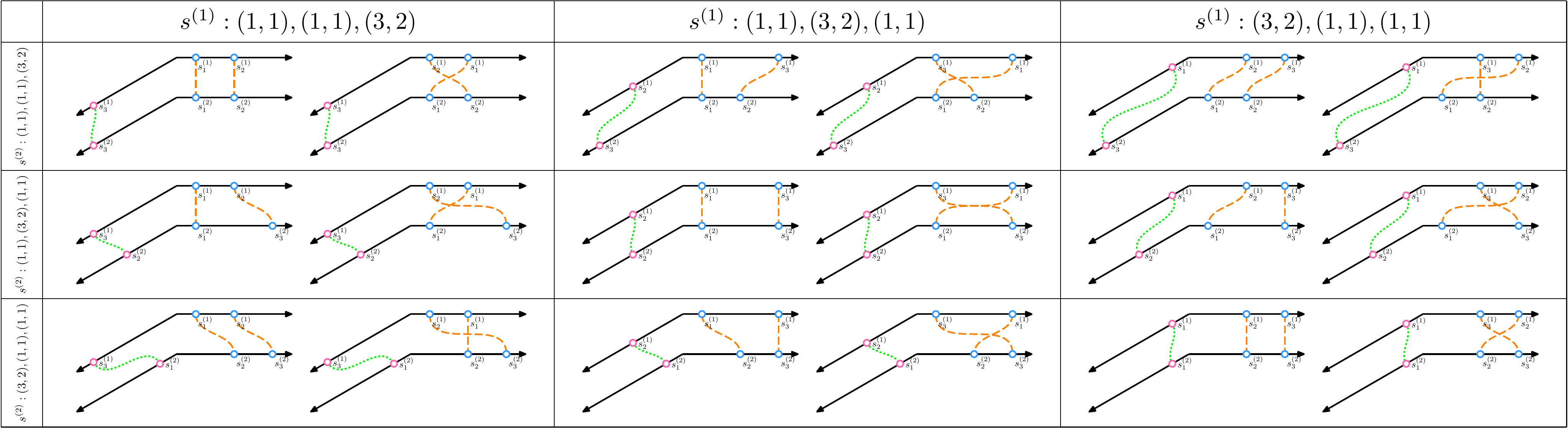}
    \caption{18 distinct diagrams for $J_3$.}
    \label{fig:2d-LRA-200001}
\end{figure}
\end{example}

Note that $I_{n,\b{N}}(t)$ is difficult to compute due to the time-ordering operator. In the next subsection, the frozen Gaussian approximation is adopted for a simple description of $G_s(\cdot)$, which provides the basis for developing an efficient algorithm to compute the reduced density matrix.

\subsection{Frozen Gaussian Approximation}\label{subsec:fga}

Although much simplification is considered in the former section, direct computations of $D$-dimensional $|\phi_j\rangle$ in \eqref{eqn:general decomposition} remain expensive. In this subsection, the Frozen Gaussian Approximation (FGA) is introduced to discretize the system wave function. This validates a simple approximation of the interaction operator, as well as acceleration by parallel implementation.

FGA \cite{heller1981frozen,lu2011frozen,wang2025solving} utilizes the ansatz
\begin{equation}\label{eqn:fga ansatz}
	\psi_{\rm FGA}(t,\b{x})
	=\frac{1}{(2\pi\epsilon)^{3D/2}}\iint_{\b{p},\b{q}\in\mathbb{R}^D}\int_{\b{y}\in\mathbb{R}^D}
		a(t,\b{p},\b{q})e^{i\phi(t,\b{x},\b{y},\b{p},\b{q})/\epsilon}\psi_s^{(0)}(\b{y})d\b{y}d\b{p}d\b{q},
\end{equation}
where
\begin{equation}
	\phi(t,\b{x},\b{y},\b{p},\b{q})
	=S(t,\b{p},\b{q})+\frac{i}{2}|\b{x}-\b{Q}|^2+\b{P}\cdot(\b{x}-\b{Q})
		+\frac{i}{2}|\b{y}-\b{q}|^2-\b{p}\cdot(\b{y}-\b{q}).
\end{equation}
Then
\begin{equation}
    \psi_{\rm FGA}(0,\b{x})
    =\psi_s^{(0)}(\b{x}),
    \qquad
    \psi_{\rm FGA}(t,\b{x})
    =e^{-iH_st/\epsilon}\psi_s^{(0)}(\b{x})+O(\epsilon).
\end{equation}
For brevity, we take
\begin{equation*}
	\tilde{V}(\b{x})
	=V(\b{x})+\sum\limits_{l=1}^L\frac{c_l^2}{2\omega_l^2}|\b{x}|^2.
\end{equation*}
As functions of $(t,\b{p},\b{q})$, the variables $\b{P}$, $\b{Q}$, $S$, $a$ satisfy the following dynamics:
\begin{equation}\label{eqn:FGA-ODEs}
\begin{aligned}
	\frac{\partial\b{P}}{\partial t}
	&=-\nabla\tilde{V}(\b{Q}), \qquad
	&&\frac{\partial\b{Q}}{\partial t}
	=\b{P},\\
	\frac{\partial S}{\partial t}
	&=\frac{|\b{P}|^2}{2}-\tilde{V}(\b{Q}), \qquad
	&&\frac{\partial a}{\partial t}
	=\frac{a}{2}\tr\left(
		\b{Z}^{-1}\left(
			\nabla_{\b{z}}\b{P}
			-i\nabla_{\b{z}}\b{Q}\nabla^2\tilde{V}(\b{Q})\right)
	\right),
\end{aligned}
\end{equation}
where
\begin{equation*}
	\nabla_{\b{z}}
	=\nabla_{\b{q}}-i\nabla_{\b{p}},
	\qquad\text{and}\qquad
	\b{Z}
	=\nabla_{\b{z}}\b{Q} + i\nabla_{\b{z}}\b{P}.
\end{equation*}
The initial conditions are
\begin{equation}
	\b{P}(0,\b{p},\b{q})
	=\b{p},
	\qquad
	\b{Q}(0,\b{p},\b{q})
	=\b{q},
	\qquad
	S(0,\b{p},\b{q})
	=0,\\
	\qquad
	a(0,\b{p},\b{q})
	=2^{D/2}.
\end{equation}
To solve this differential equation system, we also need the equations
\begin{equation}
	\frac{\partial(\nabla_{\b{z}}\b{P})}{\partial t}(t,\b{p},\b{q})
	=-\nabla_{\b{z}}\b{Q}(t,\b{p},\b{q})\nabla^2\tilde{V}(\b{Q}(t,\b{p},\b{q})),
	\qquad
	\frac{\partial(\nabla_{\b{z}}\b{Q})}{\partial t}(t,\b{p},\b{q})
	=\nabla_{\b{z}}\b{P}(t,\b{p},\b{q}),
\end{equation}
with initial conditions
\begin{equation}
	\nabla_{\b{z}}\b{P}(0,\b{p},\b{q})
	=-iI_D,
	\qquad
	\nabla_{\b{z}}\b{Q}(0,\b{p},\b{q})
	=I_D.
\end{equation}

It follows \cite{lu2011frozen,wang2025solving} that FGA provides the following approximations,
\begin{equation}
    W_{s,I}^{(d)}(s)
    =Q_d(s,\b{p},\b{q})+O(\epsilon),
\end{equation}
and
\begin{equation}
\begin{aligned}
    &e^{-iH_s(t-t_0)/\epsilon}\bigg{(}
        \frac{1}{(2\pi\epsilon)^{3D/2}}\iint_{\b{p},\b{q}\in\mathbb{R}^{D}}\int_{\b{y}\in\mathbb{R}^D}
        f(\b{p},\b{q})   a(t_0,\b{p},\b{q})e^{i\phi(t_0,\b{x},\b{y},\b{p},\b{q})/\epsilon}\psi_s^{(0)}(\b{y})
        d\b{y}d\b{p}\b{q}
        \bigg{)}\\
    =&
        \frac{1}{(2\pi\epsilon)^{3D/2}}\iint_{\b{p},\b{q}\in\mathbb{R}^{D}}\int_{\b{y}\in\mathbb{R}^D}
        f(\b{p},\b{q})        a(t,\b{p},\b{q})e^{i\phi(t,\b{x},\b{y},\b{p},\b{q})/\epsilon}\psi_s^{(0)}(\b{y})
        d\b{y}d\b{p}\b{q},
\end{aligned}
\end{equation}
for an arbitrary factor $f(\b{p},\b{q})$. Hence,
\begin{equation}
\begin{aligned}
    \rho_{s}(t,\b{x})
    &=\langle\b{x}|e^{-iH_st/\epsilon}\rho_{s,I}(t)e^{iH_st/\epsilon}|\b{x}\rangle\\
    &=\sum\limits_{n=0}^\infty\sum\limits_{|\b{N}|=n}
		\b{\lambda}^{\b{N}}\b{N}!
		\Big(\langle\b{x}|e^{-iH_st/\epsilon}I_{n,\b{N}}(t)|\psi_s^{(0)}\rangle\Big)
		\Big(\langle\b{x}|e^{-iH_st/\epsilon}I_{n,\b{N}}(t)|\psi_s^{(0)}\rangle\Big)^*,
\end{aligned}
\end{equation}
where
\begin{equation}\label{eqn:InN(t,x)-FGA}
\begin{aligned}
	&I_{n,\b{N}}(t,\b{x})
	:=\langle \b{x}|e^{-iH_st/\epsilon}I_{n,\b{N}}(t)|\psi_s^{(0)}\rangle\\
    &\approx
        \int_{\b{s} \in \mathcal{S}_t^n}\sum\limits_{(\b{j},\b{d})\in\mathcal{J}(\b{N})}
        \left(\prod\limits_{k=1}^nV_{j_k}(s_k)\right)
        \sum\limits_{m=0\atop\text{$m$ even}}^\infty(-1)^\frac{m}{2}
        \int_{\b{\tau} \in \mathcal{S}_t^m}\sum_{\b{\kappa}\in \{1,\cdots,D\}^m}
            \mathcal{L}_b^{\rm same}(\b{\tau}, \b{\kappa})\\
	&\qquad\frac{1}{(2\pi\epsilon)^{3D/2}}\iint_{\b{p},\b{q}\in\mathbb{R}^D}\int_{\b{y}\in\mathbb{R}^D}
		\prod\limits_{l=1}^n Q_{d_l}(s_l,\b{p},\b{q})
		\prod\limits_{\theta=1}^m Q_{\kappa_\theta}(\tau_\theta,\b{p},\b{q})\\
        &\qquad\qquad\qquad\qquad\qquad\qquad\qquad \cdot e^{i\phi(t,\b{x},\b{y},\b{p},\b{q})/\epsilon}
		\psi_s^{(0)}(\b{y})d\b{y}d\b{p}d\b{q}
	d\b{\tau}d\b{s}.
\end{aligned}
\end{equation}

Next, numerical integration is used to evaluate the integral with respect to $\b{p}$, $\b{q}$ in \eqref{eqn:InN(t,x)-FGA}, i.e., consider a set of discrete points and quadrature weights, $\{(\b{p}_k,\b{q}_k,w_k)\}_{k=1}^{N_{\rm qp}}$.
Let
\begin{equation} \label{eqn:psi_k}
	\psi_k(t,\b{x})
	=\frac{1}{(2\pi\epsilon)^{3D/2}}\int_{\b{y}\in\mathbb{R}^D}
		a(t, \b{p}_k, \b{q}_k) e^{i\phi(t,\b{x},\b{y},\b{p}_k,\b{q}_k)/\epsilon}
		\psi_s^{(0)}(\b{y})
		d\b{y}.
\end{equation}
Similarly, we have $\b{P}_k(t)$, $\b{Q}_k(t)$, $S_k(t)$ and $a_k(t)$ subject to \eqref{eqn:FGA-ODEs} with $\b{p}=\b{p}_k$, $\b{q}=\b{q}_k$, which can be evolved individually.

Hence,
\begin{equation}\label{eqn:InN(t,x)-k-FGA}
\begin{aligned}
	I_{n,\b{N}}(t,\b{x})
        &=\sum\limits_{m=0\atop\text{$m$ even}}^\infty I_{n,\b{N}}^{(m)}(t,\b{x})\\
	&\approx \sum\limits_{m=0\atop\text{$m$ even}}^\infty\sum\limits_{k=1}^{N_{\rm qp}}w_k\psi_k(t,\b{x})
        \underbrace{
            \int_{\b{s} \in \mathcal{S}_t^n}\sum\limits_{(\b{j},\b{d})\in\mathcal{J}(\b{N})}
            \left(\prod\limits_{l=1}^n V_{j_l}(s_l)Q_{k,{d_l}}(s_l)\right) d\b{s}
        }_{=:J_{k,\b{N}}(t)}\\
	&\qquad\underbrace{(-1)^\frac{m}{2}
        \int_{\b{\tau} \in \mathcal{S}_t^m}\sum_{\b{\kappa}\in \{1,\cdots,D\}^m}
    	\mathcal{L}_b^{\rm same}(\b{\tau}, \b{\kappa})
    	\prod\limits_{\theta=1}^m Q_{k,\kappa_\theta}(\tau_\theta) d\b{\tau}
    }_{=:J_k^{(m)}(t)}.
\end{aligned}
\end{equation}
\begin{remark}
    We note that it is numerically unaffordable to apply the Inchworm FGA method in \cite{wang2025solving} to two-dimensional cases due to the computational cost of $O(N_{\rm qp}^2)$ for $N_{\rm qp}$ pairs of $(\b{p}_k, \b{q}_k)$.
    To maintain the accuracy of FGA, it is necessary to choose $N_{\rm qp} \geq (L/\sqrt{\epsilon})^4$. \\
    For example, when $\epsilon=1/64$, even with a small domain size of $L=4$, we need $N_{\rm qp} \geq 32^4$. This requires $\sim 1$TB to store the intermediate variables as single-precision complex numbers.
    
	In contrast, our approach only costs $O(N_{\rm qp})$, with some loss in accuracy resulting from the inclusion of fewer terms in the Dyson series compared to \cite{wang2025solving}.
\end{remark}
\begin{remark}
	The space complexity of $I_{n,\b{N}}(t,\b{x})$ is
	\begin{equation*}
		\binom{rD+n-1}{n}
		\approx\frac{(rD)^n}{n!},
	\end{equation*}
	when $rD$ is relatively larger than $n$. Moreover, the total space complexity asymptotically becomes
	\begin{equation*}
		\sum\limits_{n=0}^\infty\frac{(rD)^n}{n!}
		=\exp(rD).
	\end{equation*}
\end{remark}
The FGA ansatz transforms the interaction operator to a scalar, allowing for an exchange of variables in the integral. Based on this, two simplifications are derived in the next subsection, after which the whole process is concluded as an algorithm.

\subsection{An efficient algorithm for solving density of the open quantum system}\label{subsec:alg}
We note that the $n$- and $m$-dimensional integrals in \eqref{eqn:InN(t,x)-k-FGA} limit the number of terms in the Dyson series that can be computed.
In this subsection, we focus on simplifying these integrals to obtain expressions that are independent of the number $n = |\b{N}|$ in the Dyson series.

First, we reduce the $n$-dimensional integral $J_{k,\b{N}}(t)$ to a one-dimensional integral.
\begin{proposition}\label{prop:cross}
	The integral in the first line of \eqref{eqn:InN(t,x)-k-FGA} can be simplified as
	\begin{equation}\label{eqn:JkN(t)}
	\begin{aligned}
		J_{k,\b{N}}(t)
		&:=
            \int_{\b{s} \in \mathcal{S}_t^{n}}\sum\limits_{(\b{j},\b{d})\in\mathcal{J}(\b{N})}
			\left(\prod\limits_{l=1}^{n} V_{j_l}(s_l)Q_{k,{d_l}}(s_l)\right)
            d\b{s}
		=\frac{1}{\b{N}!}\prod\limits_{d=1}^D\prod\limits_{j=1}^{r}\left(I_k^{(j,d)}(t)\right)^{N_j^{(d)}},
	\end{aligned}
	\end{equation}
	where
	\begin{equation}\label{eqn:Ik-d,j}
		I_k^{(j,d)}(t)
		:=\int_0^t
        V_j(s)Q_{k,d}(s)ds.
	\end{equation}
\end{proposition}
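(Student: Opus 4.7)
The plan is to unfold the simplex integral into an integral over the cube $[0,t]^n$ by exploiting a symmetry in the summand, and then observe that the product $\prod_l V_{j_l}(s_l) Q_{k,d_l}(s_l)$ factorizes as soon as the ordering constraint on the $s_l$'s is removed. The main tool is that the index set $\mathcal{J}_n(\b{N})$ defined in \eqref{eqn:JnN} is invariant under simultaneous permutation of the components of $(\b{j},\b{d})$: if $(\b{j},\b{d}) \in \mathcal{J}_n(\b{N})$ then so is $(j_{\sigma(1)},\ldots,j_{\sigma(n)}), (d_{\sigma(1)},\ldots,d_{\sigma(n)}))$ for every $\sigma \in \mathscr{Q}_n$, since counts are preserved. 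Consequently, the summand
\[
F(s_1,\ldots,s_n) := \sum_{(\b{j},\b{d})\in\mathcal{J}_n(\b{N})} \prod_{l=1}^n V_{j_l}(s_l)\,Q_{k,d_l}(s_l)
\]
is a symmetric function of $s_1,\ldots,s_n$.

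The first step I would carry out is therefore to use this symmetry to replace the simplex $\mathcal{S}_t^n$ by the full cube $[0,t]^n$ and pay the price $1/n!$, giving $J_{k,\b{N}}(t) = \frac{1}{n!}\int_{[0,t]^n} F(\b{s})\,d\b{s}$. Next, I would exchange the finite sum over $\mathcal{J}_n(\b{N})$ with the integral and integrate each of the $n$ one-variable factors independently, which yields
\[
J_{k,\b{N}}(t) = \frac{1}{n!}\sum_{(\b{j},\b{d})\in\mathcal{J}_n(\b{N})}\prod_{l=1}^n I_k^{(j_l,d_l)}(t),
\]
with $I_k^{(j,d)}(t)$ as defined in \eqref{eqn:Ik-d,j}.

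The third step is the key observation that $\prod_{l=1}^n I_k^{(j_l,d_l)}(t)$ depends on $(\b{j},\b{d})$ only through the counts $N_j^{(d)} = \#\{l : j_l = j, d_l = d\}$. Indeed, for any $(\b{j},\b{d}) \in \mathcal{J}_n(\b{N})$ one has $\prod_{l=1}^n I_k^{(j_l,d_l)}(t) = \prod_{d=1}^D \prod_{j=1}^r \big(I_k^{(j,d)}(t)\big)^{N_j^{(d)}}$, which is a constant with respect to the summation variable. Pulling this factor out of the sum leaves $|\mathcal{J}_n(\b{N})|$ as the remaining coefficient. Finally, $|\mathcal{J}_n(\b{N})|$ is a multinomial coefficient counting the number of ways to arrange $n$ labeled slots with the prescribed type counts, namely $n!/\b{N}!$, which cancels the $1/n!$ prefactor and produces exactly $\frac{1}{\b{N}!}\prod_{d,j} (I_k^{(j,d)}(t))^{N_j^{(d)}}$.

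There is no serious obstacle; the only point requiring care is checking that the identifications in the last step are consistent with the multi-index conventions $\b{N} = (\b{N}^{(1)},\ldots,\b{N}^{(D)})$ and $\b{N}! = \prod_d \prod_j N_j^{(d)}!$ introduced just before the statement. I would verify this by noting that choosing $(\b{j},\b{d}) \in \mathcal{J}_n(\b{N})$ is equivalent to partitioning the $n$ slots $\{1,\ldots,n\}$ into disjoint subsets of sizes $N_j^{(d)}$ for each pair $(j,d)$, and the number of such ordered partitions is the multinomial coefficient $n!/\prod_{d,j} N_j^{(d)}!$.
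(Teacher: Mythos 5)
Your proposal is correct and follows essentially the same route as the paper's proof: both exploit the permutation invariance of $\mathcal{J}_n(\b{N})$ to symmetrize the integrand, unfold the simplex to the cube at the cost of $1/n!$, factorize the cube integral into one-dimensional integrals, and count $|\mathcal{J}_n(\b{N})| = n!/\b{N}!$. Your write-up is in fact slightly more explicit than the paper's about why the product $\prod_l I_k^{(j_l,d_l)}(t)$ is constant over the index set, which is the step the paper compresses into its final equality.
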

\begin{proof}
    By the definition of $\mathcal{J}(\b{N})$ in \eqref{eqn:JnN}, $(\b{j},\b{d})\in\mathcal{J}(\b{N})$ covers all the permutations for a specific $\b{s}$, hence the integral is symmetric with respect to the exchange of variables. We have
	\begin{equation*}
	\begin{aligned}
        J_{k,\b{N}}(t)
        &=\int_{\b{s} \in \mathcal{S}_t^n}
			\frac{1}{n!} \sum_{\sigma \in \mathscr{Q}_n}
                \sum\limits_{(\b{j},\b{d})\in\mathcal{J}(\b{N})}
			\left(\prod\limits_{l=1}^nV_{j_{\sigma(l)}}(s_l)Q_{k,{d_{\sigma(l)}}}(s_l)\right)d\b{s}\\
        &=\int_{\b{s} \in \mathcal{S}_t^n}
			\frac{1}{n!} \sum_{\sigma \in \mathscr{Q}_n}
                \sum\limits_{(\b{j},\b{d})\in\mathcal{J}(\b{N})}
			\left(\prod\limits_{l=1}^nV_{j_l}(s_{\sigma^{-1}(l)})Q_{k,{d_l}}(s_{\sigma^{-1}(l)})\right)d\b{s}\\
		&=\frac{1}{n!}\int_0^t \cdots \int_0^t
                \sum\limits_{(\b{j},\b{d})\in\mathcal{J}(\b{N})}
			\left(\prod\limits_{l=1}^nV_{j_l}(s_l)Q_{k,{d_l}}(s_l)\right)ds_1 \cdots ds_n\\
		&=\frac{1}{n!}\frac{n!}{\b{N}!}
			\prod\limits_{d=1}^D\prod\limits_{j=1}^{r} 
            \left(\int_0^t V_j(s)Q_{k,d}(s)ds\right)^{N_j^{(d)}}
		=\frac{1}{\b{N}!}\prod\limits_{d=1}^D\prod\limits_{j=1}^{r}\left(I_k^{(j,d)}(t)\right)^{N_j^{(d)}}.
	\end{aligned}
	\end{equation*}
\end{proof}
\begin{remark}
	Computation of \eqref{eqn:Ik-d,j} only requires discrete values of $V_j$ when introducing temporal discretization. In fact, using $N_t$ timesteps of equal size gives
	\begin{equation*}
		\tilde{B}=[b_{jk}],
		\qquad
		b_{jk}
		=B(k\Delta t,j\Delta t),
		\qquad
		\Delta t=\frac{t}{N_t},
	\end{equation*}
	which is a Hermitian matrix. Therefore, $V_j(s)$ in the numerical integration is an element of the truncation of the spectral expansion of the discrete two-point correlation function $\tilde{B}$.
\end{remark}
Next, we reduce the $m$-dimensional integral $J_k^{(m)}(t)$ to a two-dimensional integral, using a similar method to \cite{liu2025error}.
\begin{proposition}\label{prop:same}
	In \eqref{eqn:InN(t,x)-k-FGA}, the integral on the single interval $[0,t]$ can be rewritten as
	\begin{equation}\label{eqn:Jk(m)(t)}
	\begin{aligned}
		J_k^{(m)}(t)
		&:=(-1)^\frac{m}{2}
            \int_{\b{\tau} \in \mathcal{S}_t^m}\sum_{\b{\kappa}\in \{1,\ldots,D\}^m}
            \mathcal{L}_b^{\rm same}(\b{\tau}, \b{\kappa})\prod\limits_{\theta=1}^mQ_{k,\kappa_\theta}(\tau_\theta)d\b{\tau}\\
		&=\frac{(-1)^\frac{m}{2}}{(m/2)!}\left(\int_{0\le\tau_1\le \tau_2\le t}B(\tau_2,\tau_1)\left(\sum\limits_{d=1}^DQ_{k,d}(\tau_1)Q_{k,d}(\tau_2)\right)d\tau_1d\tau_2\right)^\frac{m}{2}
		=\frac{\left(J_k^{(2)}(t)\right)^\frac{m}{2}}{(m/2)!}.
	\end{aligned}
	\end{equation}
\end{proposition}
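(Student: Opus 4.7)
The plan is to collapse the $m$-dimensional simplex integral to an $(m/2)$-fold product of a two-dimensional integral by exploiting the symmetry of the integrand. The first step is to expand $\mathcal{L}_b^{\rm same}(\b{\tau}, \b{\kappa})$ via its definition in \eqref{eqn:Lb nd} as a sum over pairings $P \in \mathscr{P}_m$, and then to perform the sum over $\b{\kappa}$. For each pair $(i,j) \in P$, the Kronecker delta $\delta_{\kappa_i,\kappa_j}$ forces $\kappa_i = \kappa_j$, so summing $\prod_\theta Q_{k,\kappa_\theta}(\tau_\theta)$ under these constraints produces a decoupled product over pairs of factors $g_k(\tau_i,\tau_j) := \sum_{d=1}^D Q_{k,d}(\tau_i) Q_{k,d}(\tau_j)$. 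After this step the integrand becomes $\sum_{P \in \mathscr{P}_m} \prod_{(i,j)\in P} B(\tau_j,\tau_i)\, g_k(\tau_i,\tau_j)$.

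Next, I would introduce the symmetric kernel $\tilde{h}(\tau_a,\tau_b) := B\bigl(\max(\tau_a,\tau_b),\,\min(\tau_a,\tau_b)\bigr) g_k(\tau_a,\tau_b)$, which coincides with the integrand $h(\tau_i,\tau_j) = B(\tau_j,\tau_i) g_k(\tau_i,\tau_j)$ for $i < j$ throughout $\mathcal{S}_t^m$ (this uses that $g_k$ is already symmetric). The key observation is that $F(\b{\tau}) := \sum_{P \in \mathscr{P}_m} \prod_{(i,j)\in P} \tilde{h}(\tau_i,\tau_j)$ is symmetric under permutations of $(\tau_1,\ldots,\tau_m)$, since any permutation induces a bijection on $\mathscr{P}_m$ and $\tilde{h}$ is symmetric in its arguments.

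I would then apply the standard identity $\int_{\mathcal{S}_t^m} F\, d\b{\tau} = \frac{1}{m!} \int_{[0,t]^m} F\, d\b{\tau}$ to lift the simplex integral to a product domain. For each individual pairing $P$, the factors of $\tilde{h}$ involve disjoint pairs of variables, so
\begin{equation*}
\int_{[0,t]^m} \prod_{(i,j)\in P} \tilde{h}(\tau_i,\tau_j)\, d\b{\tau}
= \left(\int_{[0,t]^2} \tilde{h}(u,v)\, du\, dv\right)^{m/2}
= 2^{m/2} \left(\int_{0 \le u \le v \le t} B(v,u)\, g_k(u,v)\, du\, dv\right)^{m/2},
\end{equation*}
and this value is independent of $P$. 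Summing over the $|\mathscr{P}_m| = (m-1)!!$ pairings and using the combinatorial identity $(m-1)!! = \frac{m!}{2^{m/2}(m/2)!}$ gives exactly the factor $\frac{1}{(m/2)!}$. Reinstating the $(-1)^{m/2}$ prefactor yields the first claimed equality, and the identification with $(J_k^{(2)}(t))^{m/2}/(m/2)!$ follows by absorbing the two $(-1)^{m/2}$ signs.

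The only real obstacle is the combinatorial bookkeeping — verifying that the factor $(m-1)!!\cdot 2^{m/2}/m!$ collapses to $1/(m/2)!$ and that all signs match — which is routine. The analytic content is just Fubini on the symmetrized integrand, so no estimation is needed.
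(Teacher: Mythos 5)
Your argument is correct, and it reaches the result by a genuinely different route from the paper. Both proofs begin identically: the sum over $\b{\kappa}$ collapses the Kronecker deltas into the per-pair factor $\b{Q}_k(\tau_i)\cdot\b{Q}_k(\tau_j)$, leaving a sum over pairings of products of $\mathcal{P}_k(t_1,t_2)=(\b{Q}_k(t_1)\cdot\b{Q}_k(t_2))B(t_2,t_1)$. From there the paper proceeds \emph{inductively}: it establishes the product identity $(-1)^{m/2}J_k^{(m)}\cdot(-1)J_k^{(2)}=\frac{\binom{m+2}{2}(m-1)!!}{(m+1)!!}(-1)^{(m+2)/2}J_k^{(m+2)}$ by counting how many diagrams with $m/2+1$ arcs arise from appending one arc to a diagram with $m/2$ arcs (and conversely how many ways each larger diagram can be so produced), and then unrolls the recursion to get the $1/(m/2)!$ factor. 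You instead symmetrize the integrand once and for all, lift the simplex integral to the cube $[0,t]^m$ with the $1/m!$ factor, factorize each pairing's contribution over its disjoint pairs, and close with the identity $(m-1)!!\cdot 2^{m/2}/m!=1/(m/2)!$. The two arguments encode the same combinatorics of pairings, but yours is a direct, closed-form computation that avoids the diagram-addition/removal counting ratio (which in the paper is stated somewhat informally), at the cost of having to justify the symmetrization of $B$ via $\tilde{h}(\tau_a,\tau_b)=B(\max(\tau_a,\tau_b),\min(\tau_a,\tau_b))g_k(\tau_a,\tau_b)$ and the invariance of $\mathscr{P}_m$ under relabeling — both of which you do correctly. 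Your sign bookkeeping at the end ($J_k^{(2)}=-\int_{0\le\tau_1\le\tau_2\le t}B(\tau_2,\tau_1)g_k\,d\tau_1 d\tau_2$, so $(J_k^{(2)})^{m/2}$ absorbs the $(-1)^{m/2}$) also checks out.
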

\begin{proof}
	Noting the contribution of the delta function in \eqref{eqn:Lb nd}, one can derive
	\begin{equation*}
		J_k^{(m)}(t)
		=(-1)^\frac{m}{2}
        \int_{\b{\tau} \in \mathcal{S}_t^m}
			\sum\limits_{\b{q}\in\mathscr{L}_s(\b{\tau})}
				\prod\limits_{(t_1,t_2)\in\b{q}}\Big(\b{Q}_k(t_1)\cdot\b{Q}_k(t_2)\Big)B(t_2,t_1)
				d\b{\tau}.
	\end{equation*}
	For even $m\ge2$, let $\mathcal{P}_k(t_1,t_2)=(\b{Q}_k(t_1)\cdot\b{Q}_k(t_2))B(t_2,t_1)$ and $\b{\tau}_m=(\tau_1,\tau_2,\ldots,\tau_m)$, then 
	\begin{equation*}
	\begin{aligned}
		&(-1)^\frac{m}{2}J_k^{(m)}(t)\cdot(-1)^\frac{2}{2}J_k^{(2)}(t)\\
        &\int_{\b{\tau}_m \in \mathcal{S}_t^m}
        \int_{0\le\tau_{m+1}\le\tau_{m+2}\le t}		\left(\sum\limits_{\b{q}\in\mathscr{L}_s(\b{\tau}_m)}\prod\limits_{(t_1,t_2)\in\b{q}}\mathcal{P}_k(t_1,t_2)\right)
			\mathcal{P}_k(\tau_{m+1},\tau_{m+2})
			d\b{\tau}_{m+2}\\
		&=\frac{\binom{m+2}{2}(m-1)!!}{(m+1)!!}
			\cdot(-1)^\frac{m+2}{2}J_k^{(m+2)}(t).
	\end{aligned}
	\end{equation*}
	Specifically, adding one additional pair to the diagram consisting of $m/2$ pairs produces $\binom{m+2}{2}$ new diagrams. Conversely, each diagram of $m/2+1$ pairs generates $m/2+1$ diagrams when removing one pair, meaning that the factor in the second equality above can be computed by the quotient between diagram numbers.
	Therefore
	\begin{equation*}
		J_k^{(m)}(t)
		=\frac{J_k^{(m-2)}(t)J_k^{(2)}(t)}{m/2}
		=\frac{J_k^{(m-4)}\left(J_k^{(2)}(t)\right)^2}{(m/2)(m/2-1)}
		=\cdots
		=\frac{\left(J_k^{(2)}(t)\right)^\frac{m}{2}}{(m/2)!}.
	\end{equation*}
\end{proof}

Using \eqref{eqn:JkN(t)} and \eqref{eqn:Jk(m)(t)}, instead of evaluating high-dimensional integrals, the computational complexity of $J_{k,\b{N}}(t)$ is reduced to $O(rDN_t)$, while that of $J_k^{(m)}(t)$ is now $O(DN_t^2)$.

Finally, by truncating the Dyson series, i.e. considering $n\le\bar{N}$, we obtain
\begin{align}
	\label{eqn:rho(t,x)}
	\rho_s(t,\b{x})
    &=\sum\limits_{n=0}^{\bar{N}}\sum\limits_{|\b{N}|=n} \b{\lambda}^{\b{N}} \b{N}!
		\sum\limits_{m_1+m_2\le\bar{N}-2n\atop\text{$m_1$, $m_2$ even}}
			\left(I_{n,\b{N}}^{(m_1)}(t,\b{x})\right)
			\left(I_{n,\b{N}}^{(m_2)}(t,\b{x})\right)^*,\\
	\label{eqn:InN(m)(t,x)}
	I_{n,\b{N}}^{(m)}(t,\b{x})
	&=\sum\limits_{k=1}^{N_{\rm qp}}w_k\psi_k(t,\b{x})J_{k,\b{N}}(t)J_k^{(m)}(t).
\end{align}

We conclude this section with the following algorithm to compute the density $\rho_s(T,\b{x})$ in a discretized domain $\mathcal{X}$, at a single time point $T$.
\renewcommand{\algorithmicrequire}{\textbf{Input:}}
\renewcommand{\algorithmicensure}{\textbf{Output:}}
\begin{algorithm}[H]
	\caption{Computation of density for Caldeira-Leggett model.}
	\begin{algorithmic}
		\REQUIRE Time $T = N_t\Delta t$, \\
        Low rank approximation $\{\lambda_j,V_j\}_{j=1}^{r}$ for discrete values of two-point correlation function, \\
		Truncation order $\bar{N}$ for Dyson series, \\
        $N_{\rm qp}$ pairs of quadrature points $(\b{p}_k,\b{q}_k)$ and weights $w_k$ for double integral over $\b{p}, \b{q}$, \\
        $N_y$ quadrature points and weights for integral over $\b{y}$.
		\ENSURE Density $\rho_s(T,\b{x})$ for $N_x$ discrete points $\{\b{x}\}$ in $\mathcal{X}$.
        \STATE
		\STATE Initialize $I_{n,\b{N}}^{(m)}(T,\b{x})$ for $n=0,1,\dots,\bar{N}$ and $m=0,2,\dots,2\bar{N}-2n$.
		\FOR{$k$ from 1 to $N_{\rm qp}$}
        \STATE Initialize FGA variables with $\b{p}_k, \b{q}_k$.
		\STATE Perform FGA evolution and record $\{\b{Q}_k(j\Delta t)\}_{j=0}^{N_{\rm t}}$.       
        \STATE Compute $J_{k,\b{N}}(T)$ via \eqref{eqn:JkN(t)} for $n=0,1,\dots,\bar{N}$.
        \STATE Compute $J_k^{(m)}(T)$ via \eqref{eqn:Jk(m)(t)} for $m=0,2,\dots,2\bar{N}$.
		\FOR{$\b{x}$ in $\mathcal{X}$}
        \STATE Compute $\psi_k(T,\b{x})$ via discretization of \eqref{eqn:psi_k}.
		\STATE Add contribution to $I_{n,\b{N}}^{(m)}(T,\b{x})$ via \eqref{eqn:InN(m)(t,x)} for $n=0,1,\dots,\bar{N}$ and $m=0,2,\dots,2\bar{N}-2n$.
		\ENDFOR
		\ENDFOR
		\STATE Compute $\rho_s(T,\b{x})$ via \eqref{eqn:rho(t,x)} for all $\b{x}$ in $\mathcal{X}$.
	\end{algorithmic}
\end{algorithm}
\begin{remark}
	The low-rank approximation of the two-point correlation function can be generated by a low-rank approximation of the matrix corresponding to the discrete two-point function. It is actually a truncated spectral expansion of that matrix.
\end{remark}
We evaluate the overall space and time complexities of this algorithm as follows.
\begin{itemize}
    \item All values of $I_{n,\b{N}}^{(m)}(T,\b{x})$ are stored, with total space complexity
    \begin{equation}
        O((\bar{N}+rD)^{\bar{N}}N_x)
    \end{equation}
    that dominates all other intermediate variables. $\bar{N}$ is the truncation order of the Dyson series, $r$ is the rank of the low-rank approximation, and $N_x$ is the number of points in $\mathcal{X}$ that are considered.
    \item For each value of $k$:
    \begin{enumerate}
        \item Using a linear multi-step method for FGA evolution, the time complexity is $O(N_t + N_xN_y)$.
        \item Computing $J_{k,\b{N}}(T)$ and $J_k^{(m)}(T)$ costs $O(rDN_t + DN_t^2) = O(DN_t^2)$ when $r < N_t$.
        \item Adding a contribution to every $I_{n,\b{N}}^{(m)}(T,\b{x})$ costs $O((\bar{N}+rD)^{\bar{N}}N_x)$.
    \end{enumerate}
    Iterating over $N_{\rm qp}$ values of $k$ gives a total time complexity of
    \begin{equation}
        O(N_{\rm qp}(N_xN_y + DN_t^2 +(\bar{N}+rD)^{\bar{N}}N_x)).
    \end{equation}
    The final loop to compute all values of $\rho_s(T,\b{x})$ has time complexity \\
    $O(\bar{N}(\bar{N}+rD)^{\bar{N}}N_x)$, which is dominated by the previous parts as usually $\bar{N} \ll N_{\rm qp}$.
    
\end{itemize}
\section{Numerical results}\label{sec:num result}
In this section, several numerical experiments are performed to validate the proposed method. Firstly, a one-dimensional double well is demonstrated to validate our algorithm by recovering numerical results in the existing literature. Next, a one-dimensional harmonic oscillator is presented to display the effectiveness of the low-rank approximation, followed by a two-dimensional version in which the anticipated density can be observed. Finally, a two-dimensional double slit is simulated.

The two-point correlation function can be formulated as
\begin{equation}
	B(\tau_1,\tau_2)
	=\tilde{B}(\Delta\tau)
	=\frac{1}{2}\sum\limits_{l=1}^L\frac{c_l^2}{\epsilon\omega_l}
		\left(\coth\left(\frac{\beta\epsilon\omega_l}{2}\right)\cos(\omega_l\Delta\tau)-i\sin(\omega_l\Delta\tau)\right),
\end{equation}
where $\Delta\tau=\tau_1-\tau_2$.
For all our simulations, we use the Ohmic spectral density \cite{chakravarty1984dynamics,makri1999linear}, where the frequencies $\omega_l$ and the coupling intensities $c_l$ are given by
\begin{align}
	\omega_l
	&=-\omega_c\log\left(1-\frac{l}{L}(1-\exp(-\omega_{\rm max}/\omega_c))\right),\\
	c_l
	&=\epsilon\omega_l\sqrt{\frac{\xi\omega_c}{L}(1-\exp(-\omega_{\rm max}/\omega_c))},
\end{align}
for $l=1,2,\ldots,L$. Additionally, we choose the parameters to be
\begin{equation}
	L=400,
	\qquad
	\omega_{\rm max}=10,
	\qquad
	\omega_c=2.5,
	\qquad
	\beta=5.
\end{equation}
In this paper, a low-rank approximation is numerically attained for the discrete two-point correlation function. An analog focusing on the continuous form may be obtained via the discrete Lehmann representation \cite{kaye2022discrete,kiese2025discrete}.

For all of our experiments, the effective potential energy $V$ will also include an extra term for homogeneous dissipation in space,
\begin{equation}\label{eqn:tilde V}
    \tilde{V}(\b{x}) = \frac{1}{2} \omega_b^2 |\b{x}|^2,
    \qquad \text{where } \omega_b^2 = \sum_{l=1}^L \frac{c_l^2}{\omega_l^2}.
\end{equation}

\subsection{Double well}
We consider the double well potential
\begin{equation}
	V(x)
	=-x^2+2x^4.
\end{equation}
The initial wave function is given by
\begin{equation}
	\psi_1(x)
	=C_{\rm nor }\left(\exp\left(-\frac{(x-1/2)^2}{4\epsilon}\right)+\frac{4}{5}\exp\left(-\frac{(x+1/2)^2}{4\epsilon}\right)\right),
\end{equation}
where $C_{\rm nor}=5(41+40^{-\frac{1}{8\epsilon}})^{-1/2}(2\pi\epsilon)^{-1/4}$.
The same configurations as those in \cite{wang2025solving} are adopted. Specifically, we take $\epsilon=1/64$ and the truncated domains for $\b{p}, \b{q}$ to be $[-2,2]$ with increments $\Delta p=\Delta q=1/32$ in each dimension.

Numerical experiments with $\xi=0$ (``without bath''),$1.6,3.2,6.4$ were carried out to show the influence of bath when the coupling effect is enhanced. We choose $r=20$ for the low-rank approximation, with Frobenius norms $\|B_{\rm LR}-\tilde{B}\|_F=1.0906\times10^{-10},2.1827\times10^{-10},4.3624\times10^{-10}$ between the approximation and the discrete two-point correlation function for $\xi=1.6,3.2,6.4$, respectively.

The densities for $\bar{N}=5$ when $t=1,2,3$ are shown in Fig. \ref{fig:den-dif xi-doubleWell}. Using the numerical results with $\bar{N}=5$ as a reference, the $L^2$ density errors are listed in Tab. \ref{tab:den err-doubleWell}. 
Lastly, with $\xi=6.4$, the densities for $\bar{N}=1,2,3,4,5$ and the density differences $\Delta\rho^{(D)}=\rho^{\bar{N}=D}-\rho^{\bar{N}=D-1}$ for $D=2,3,4,5$ when $t = 2,3$ are presented in Figs. \ref{fig:den-dif n-xi6.4-t=2-doubleWell} and \ref{fig:den-dif n-xi6.4-t=3-doubleWell}, respectively.

\begin{figure}[H]
    \centering
    \includegraphics[width=.32\linewidth]{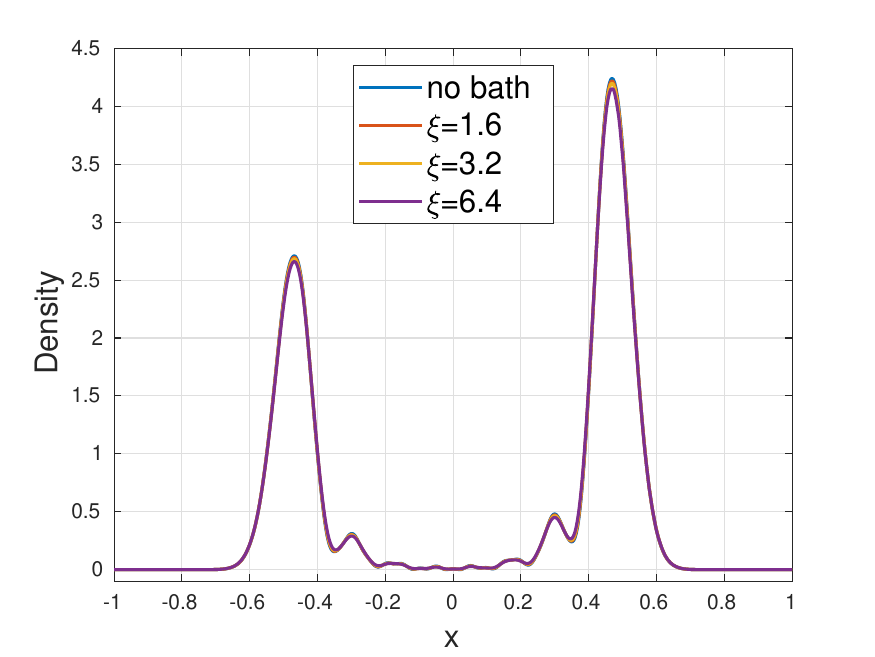}
    \includegraphics[width=.32\linewidth]{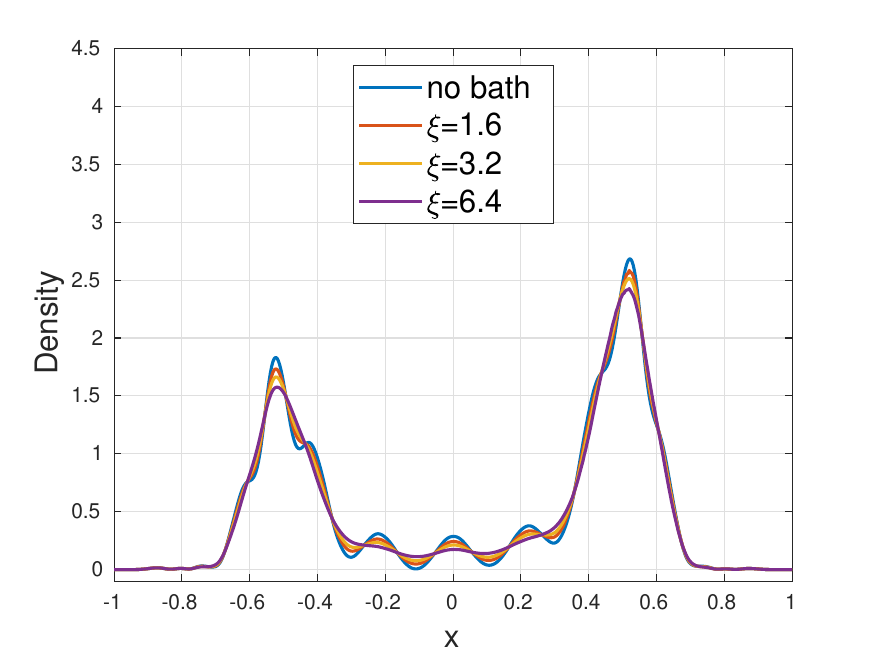}
    \includegraphics[width=.32\linewidth]{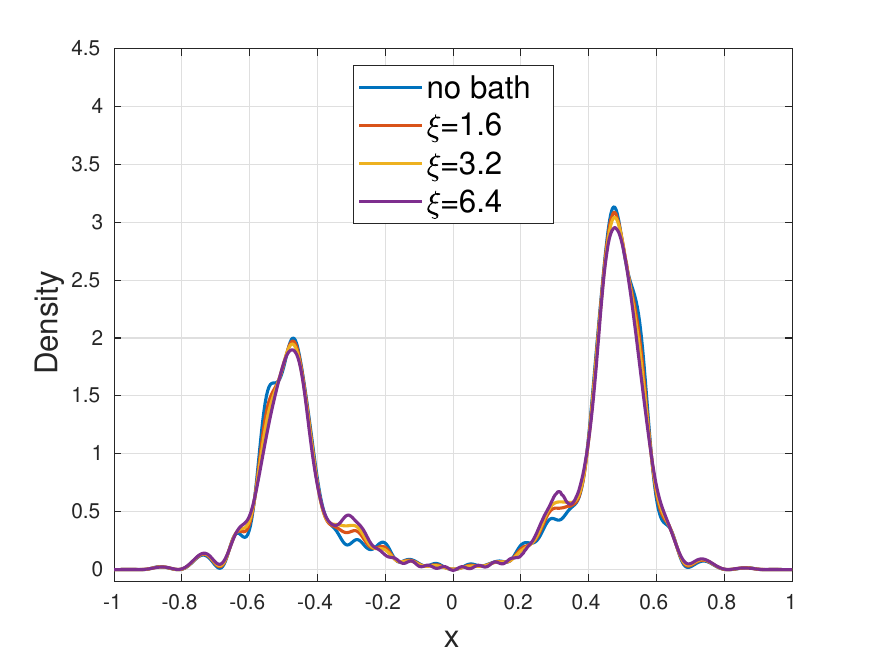}
    \caption{Densities for $\bar{N}=5$ when $t=1$ (left), $t=2$ (middle), $t=3$ (right).}
    \label{fig:den-dif xi-doubleWell}
\end{figure}
In Fig. \ref{fig:den-dif xi-doubleWell}, quantum decoherence can be successfully observed. This also recovers the results in \cite{wang2025solving}. Furthermore, a smoother behavior can be observed as $\xi$ increases, which is consistent with the fact that $\xi$ depicts the strength of the coupling effect between the system and bath.
\begin{table}[H]
    \centering
    \caption{Density errors $\|\rho-\rho^{\bar{N}=5}\|_{L^2}$ for different $\xi$ and $t$.}
    \label{tab:den err-doubleWell}
	\begin{tabular}{c|c|cccc}
		\hline
        \rule{0pt}{1em}
		\multirow{2}{*}{$\xi$}	&\multirow{2}{*}{$t$}	&\multicolumn{4}{c}{$\bar{N}$}\\\cline{3-6}
		&		&1	&2	&3	&4\\
		\hline
		\multirow{3}{*}{1.6}
		&1	&1.2417e-04		&1.1462e-05		&8.3204e-07		&5.1514e-08\\
		&2	&5.2847e-03		&7.1362e-04		&7.2292e-05		&6.4204e-06\\
		&3	&8.4033e-03		&1.7152e-03		&2.6092e-04		&3.7885e-05\\
		\hline
		\multirow{3}{*}{3.2}
		&1	&4.6202e-04		&8.6527e-05		&1.2561e-05		&1.6550e-06\\
		&2	&1.8835e-02		&5.2462e-03		&1.0558e-03		&2.0560e-04\\
		&3	&2.8667e-02		&1.2361e-02		&3.6203e-03		&1.2214e-03\\
		\hline
		\multirow{3}{*}{6.4}
		&1	&1.6015e-03		&6.2208e-04		&1.7543e-04		&5.3175e-05\\
		&2	&5.9581e-02		&3.6806e-02		&1.3621e-02		&6.5711e-03\\
		&3	&7.9739e-01		&9.0164e-01		&4.0157e-01		&3.9711e-01\\
		\hline
	\end{tabular}
\end{table}
Convergence to the reference can be observed in Tab. \ref{tab:den err-doubleWell}. In particular, a larger $\xi$ leads to larger errors, resulting from the corresponding contribution in the two-point correlation function that influences the convergence of the Dyson series. To further investigate the error distribution when $\xi=6.4$, the densities and density differences between adjacent $\bar{N}$ are presented in the following two figures.
\begin{figure}[H]
    \centering
    \includegraphics[width=.49\linewidth]{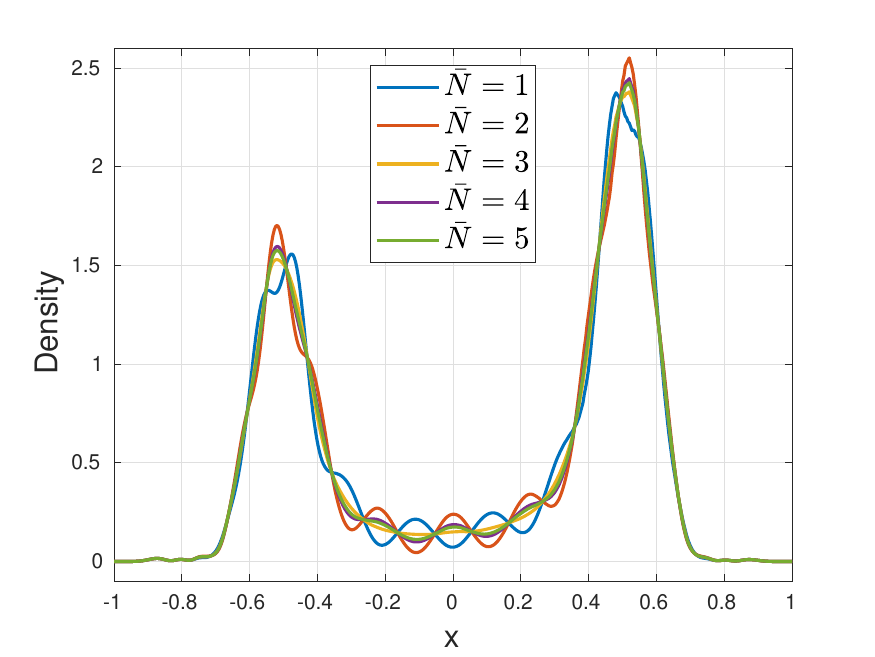}
    \includegraphics[width=.49\linewidth]{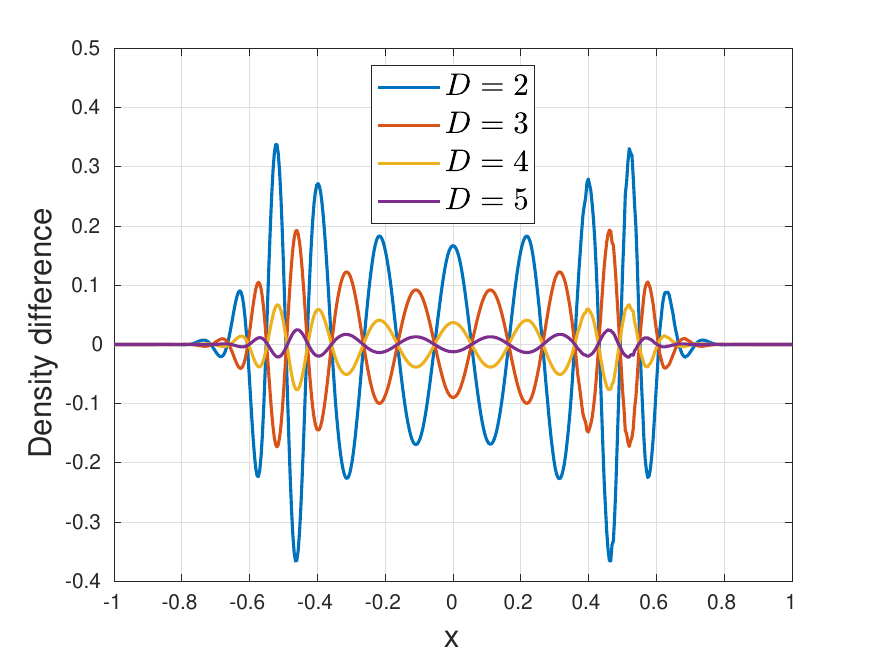}
    \caption{Density (left)/ Density difference (right) when $\xi=6.4$ and $t=2$.}
    \label{fig:den-dif n-xi6.4-t=2-doubleWell}
\end{figure}
\begin{figure}[H]
    \centering
    \includegraphics[width=.49\linewidth]{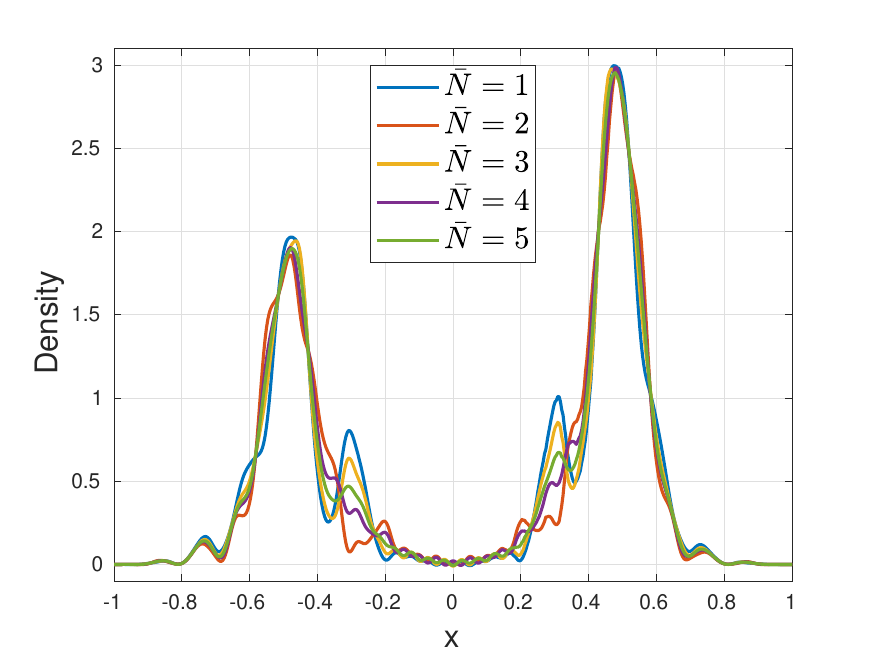}
    \includegraphics[width=.49\linewidth]{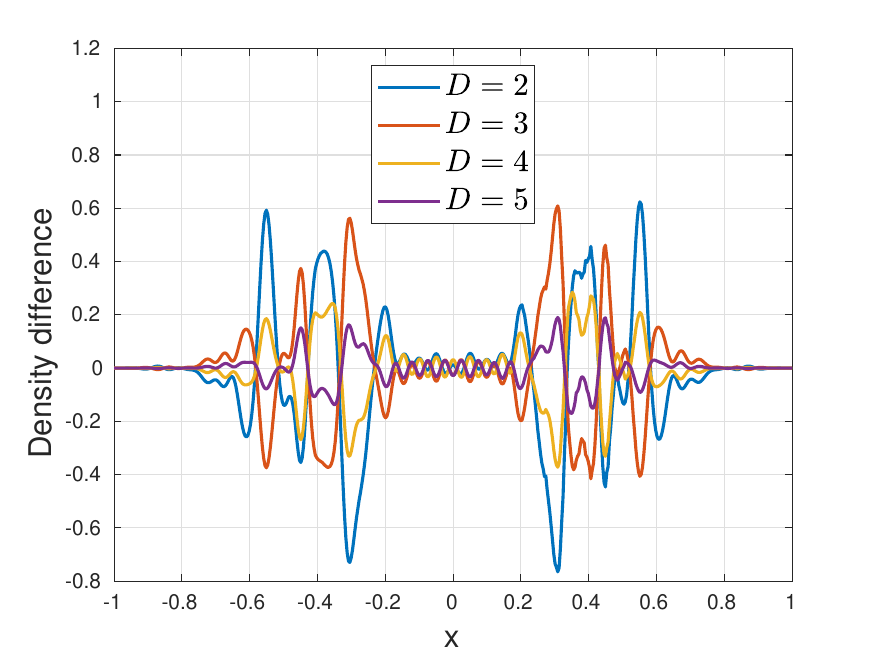}
    \caption{Density (left)/ Density difference (right) when $\xi=6.4$ and $t=3$.}
    \label{fig:den-dif n-xi6.4-t=3-doubleWell}
\end{figure}
It can be observed in Figs. \ref{fig:den-dif n-xi6.4-t=2-doubleWell} and \ref{fig:den-dif n-xi6.4-t=3-doubleWell} that the density difference converges to 0 when $\bar{N}$ increases. The results for $t=3$ demonstrate a larger error due to the longer temporal evolution accumulating more numerical errors, which is severe for a Markovian process.

\subsection{Harmonic oscillator}
In this subsection, we consider the potential of the harmonic oscillator
\begin{equation}
	V(\b{x})
	=\frac{1}{2}|\b{x}|^2.
\end{equation}
We will first confirm the validity of our results for weak coupling cases by comparing with the solution to the Lindblad equation, and then carry out one- and two-dimensional simulations for stronger system-bath coupling.

\subsubsection{Comparison to Lindblad solution}
To validate the proposed method, we first compare the numerical solution from our method to the solution of the Lindblad equation
\begin{equation}
    \frac{d\rho_s}{dt}
	=-\frac{i}{\epsilon}[H_s+H_{LS},\rho_s]+\sum_{\Delta=\pm\epsilon\omega}\gamma(\Delta)
	\left[A(\Delta)\rho_sA^\dagger(\Delta)-\frac{1}{2}\left\{A^\dagger(\Delta)A(\Delta),\rho_s\right\}\right].
\end{equation}
We represent this solution as
\begin{equation*}
    \rho_s(x)
    =\sum_{n,m=0}^{\infty} \rho_{n,m}(t)|\psi_n\rangle\langle\psi_m|,
\end{equation*}
where $|\psi_n\rangle$ is the $n$-th eigenstate of the system with harmonic oscillator potential $V(x)=\omega^2x^2/2$, with $\omega^2=1+\omega_b^2$. We also define $a=\gamma(\epsilon\omega)\cdot\epsilon/(2\omega)$ and $b=\gamma(-\epsilon\omega)\cdot\epsilon/(2\omega)$. We consider two initial states, the ground state and the 1st excited state, where the Lindblad equation solutions are
\begin{equation*}
\begin{gathered}
    \rho_{n,m}(t)=\delta_{n,m}\cdot\Big(1-\alpha\Big)\alpha^n,
    \quad
    \alpha=\frac{a+ae^{(a-b)t}}{b+ae^{(a-b)t}},
    \quad\text{when}\quad
    \rho_{n,m}(0)=\delta_{n,m}\delta_{n,0},\\
    \rho_{n,m}(t)=\delta_{n,m}\cdot\frac{b-a}{B^2}\Big(E(n+1)q^n+Cnq^{n-1}\Big),
    \quad\text{when}\quad
    \rho_{n,m}(0)=\delta_{n,m}\delta_{n,1},
\end{gathered}    
\end{equation*}
with $B=b-a^{(a-b)t}$, $C=-a+b^{(a-b)t}$, $E=b-b^{(a-b)t}$, $q=(a-a^{(a-b)t})/B$.\\
We compare the densities from our method against the Lindblad equation solutions for $\xi=0.002$, 0.005, 0.01, 0.02, 0.05, 0.1 in the following figures.
\begin{figure}[H]
	\centering
	\includegraphics[width=.75\linewidth]{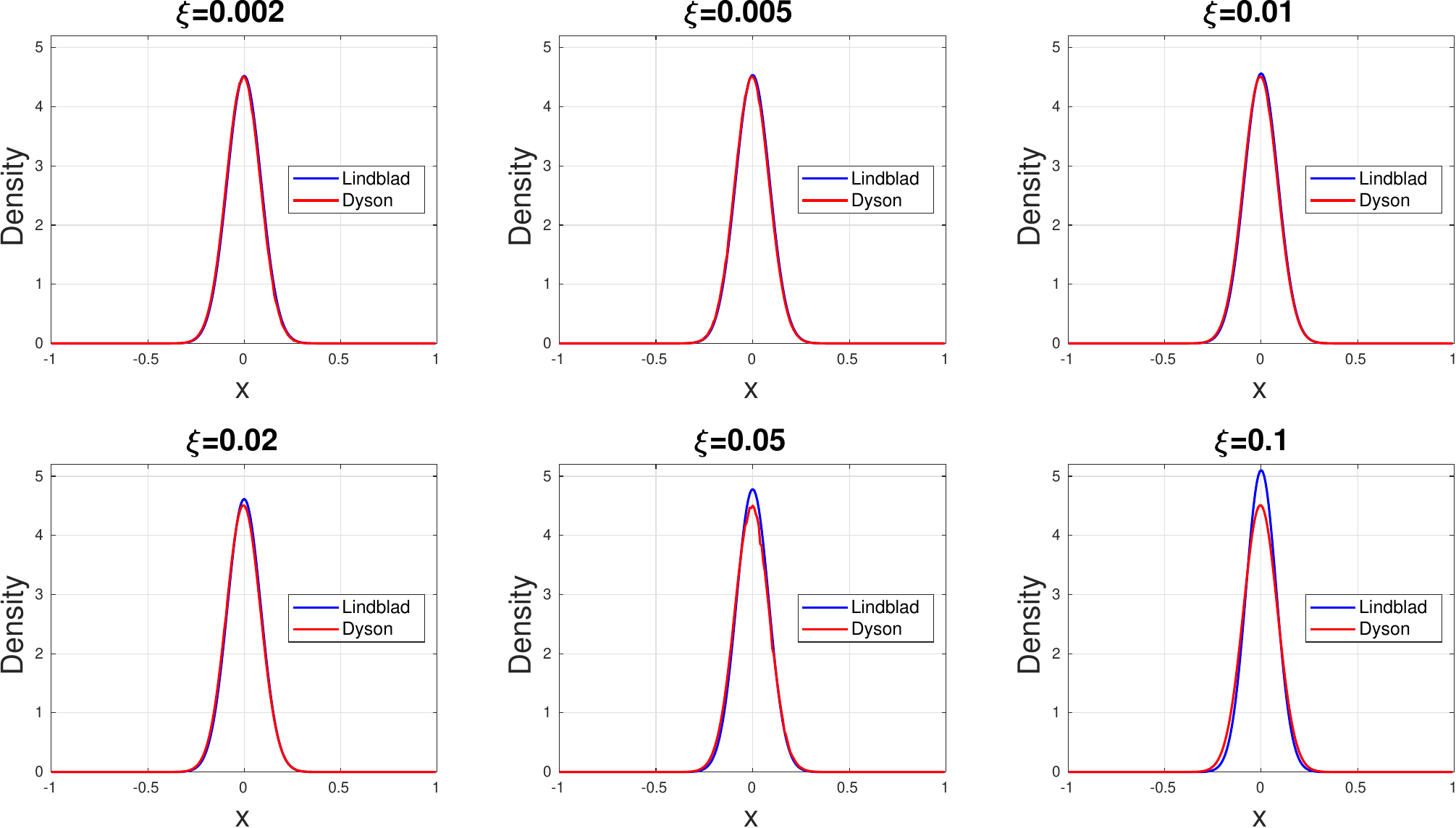}
	\caption{Density comparison when initial state is the ground state.}
	\label{fig:psi0}
\end{figure}
\begin{figure}[H]
	\centering
	\includegraphics[width=.75\linewidth]{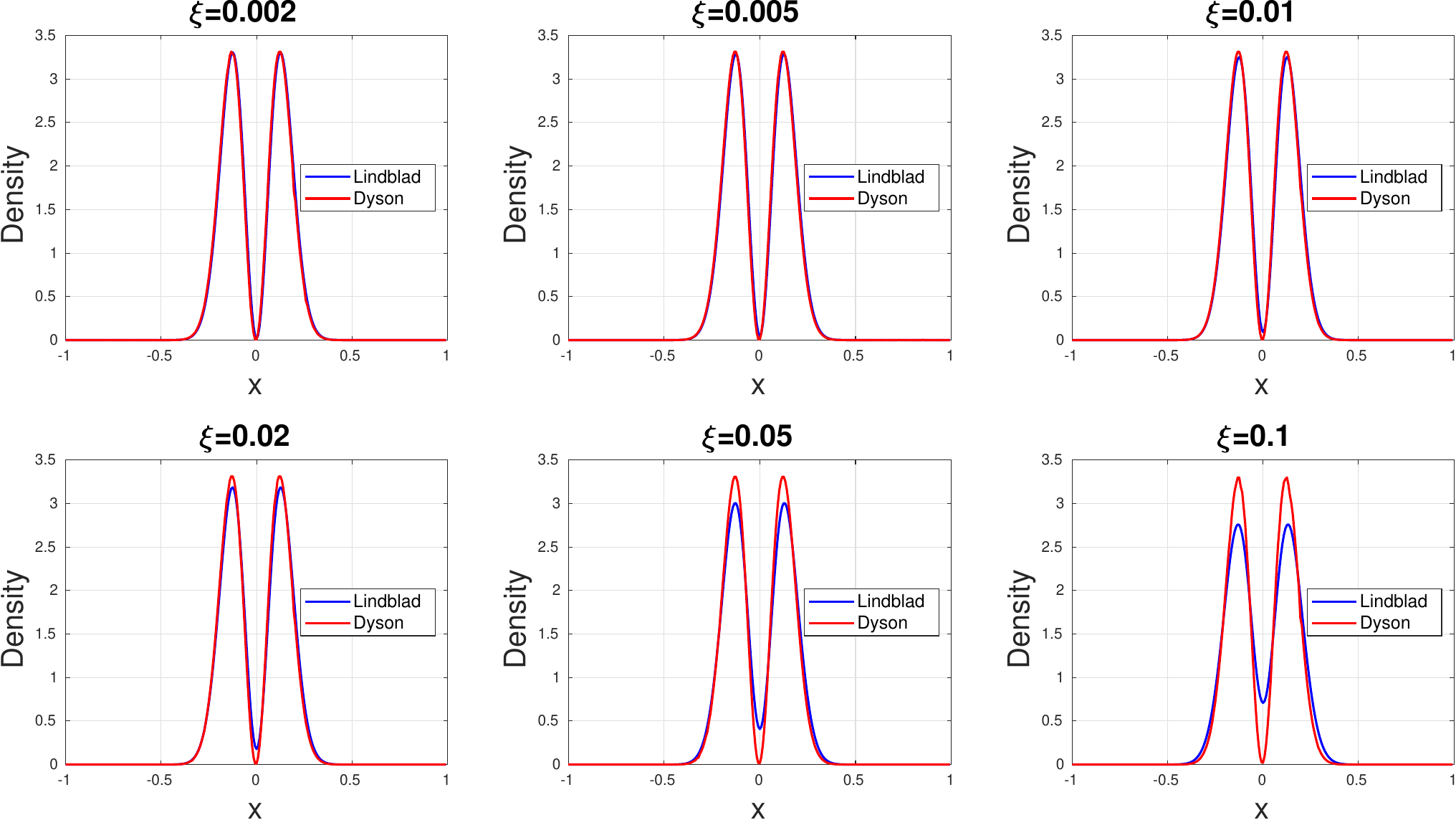}
	\caption{Density comparison when initial state is the 1st excited state.}
	\label{fig:psi1}
\end{figure}
It can be observed in Figs. \ref{fig:psi0} and \ref{fig:psi1} that when $\xi$ is sufficiently small, e.g., $\xi\le 0.01$, the numerical solutions obtained from our method agree well with those produced by the Lindblad equation, with only minor discrepancies. However, as the coupling strength increases, the differences between the two solutions become progressively larger.

For the following experiments, we take $\xi=1.6$.
In addition to the initial value $\psi_1$ defined in the last subsection, we also consider a commonly used initial value
\begin{equation}
	\psi_2(x)
	=\frac{1}{(\pi\epsilon)^{1/4}}\exp\left(-\frac{x^2}{2\epsilon}\right).
\end{equation}
Firstly, a discussion on the validity of low-rank approximation is given in Tab. \ref{tab:lra}. In particular, we test $r=5,10,20,30,40$ for both initial values $\psi_1$ and $\psi_2$. Using the numerical results with $r=40$ as a reference, the integral of density $\int_{-\infty}^\infty \rho(x)dx$ and the $L^2$ difference in density $\|\rho-\rho^{r=40}\|_{L^2}$ is demonstrated in Tab. \ref{tab:lra}. Lastly, the density difference $\rho^{r=5}-\rho^{r=40}$ and the relative density difference $(\rho^{r=5}-\rho^{r=40})/\rho^{r=40}$ for both initial values are presented in Fig. \ref{fig:den err-NLR}.

\subsubsection{Discussion on $r$}
\begin{table}[H]
	\centering
	\caption{Integral of density (top)/ $L^2$ difference in density $\|\rho-\rho^{r=40}\|_{L^2}$ (bottom).}
	\label{tab:lra}
	\begin{tabular}{c|ccccc}
		\hline
        \rule{0pt}{1em}
		\multirow{3}{*}{$r$}	&\multicolumn{5}{c}{$\bar{N}$}
		\\\cline{2-6}
			&1	&2	&3	&4	&5
		\\\cline{2-6}
			&\multicolumn{5}{c}{$\psi_1$}
		\\\hline
		\multirow{2}{*}{5}
		&0.9737    &0.9743    &0.9743    &0.9743    &0.9743\\
		&1.06e-03   &9.41e-04	&9.50e-04	&9.49e-04	&9.49e-04
		\\\hline
		\multirow{2}{*}{10}	
		&0.9736    &0.9742    &0.9742    &0.9742    &0.9742\\
		&3.35e-04    &2.91e-04    &2.97e-04    &2.95e-04    &2.96e-04
		\\\hline
		\multirow{2}{*}{20}	
		&0.9736    &0.9742    &0.9742    &0.9742    &0.9742\\
		&2.88e-09    &2.50e-09    &2.54e-09    &2.53e-09    &2.53e-09
		\\\hline
		\multirow{2}{*}{30}	
		&0.9736    &0.9742    &0.9742    &0.9742    &0.9742\\
		&8.43e-16    &7.99e-16    &8.28e-16    &8.39e-16    &8.46e-16
		\\\hline
		40	
		&0.9736    &0.9742    &0.9742    &0.9742    &0.9742
		\\\hline
        \rule{0pt}{1em}
		\multirow{3}{*}{$r$}	&\multicolumn{5}{c}{$\bar{N}$}
		\\\cline{2-6}
			&1	&2	&3	&4	&5
		\\\cline{2-6}
			&\multicolumn{5}{c}{$\psi_2$}
		\\\hline
		\multirow{2}{*}{5}
			&0.9737    &0.9744    &0.9743    &0.9744    &0.9744\\
			&2.68e-04    &2.44e-04    &2.45e-04    &2.45e-04  &2.45e-04
		\\\hline
		\multirow{2}{*}{10}
			&0.9736    &0.9743    &0.9742    &0.9742    &0.9742\\
			&1.95e-05    &1.67e-05    &1.67e-05    &1.69e-05    &1.69e-05
		\\\hline
		\multirow{2}{*}{20}	
			&0.9736    &0.9743    &0.9742    &0.9742    &0.9742\\
			&2.14e-10    &1.87e-10    &1.89e-10    &1.89e-10    &1.89e-10
		\\\hline
		\multirow{2}{*}{30}	
			&0.9736    &0.9743    &0.9742    &0.9742    &0.9742\\
			&9.23e-16    &9.25e-16    &9.20e-16    &9.23e-16    &9.20e-16
		\\\hline
		40	
			&0.9736    &0.9743    &0.9742    &0.9742    &0.9742
		\\\hline
	\end{tabular}
\end{table}
Tab. \ref{tab:lra} shows that the density integral is well preserved for different configurations, and the density converges to the reference as $r$ increases. Furthermore, all the choices of $r$ result in density errors of less than $1.1\times10^{-3}$.
\begin{figure}[H]
    \centering
    \includegraphics[width=.45\linewidth]{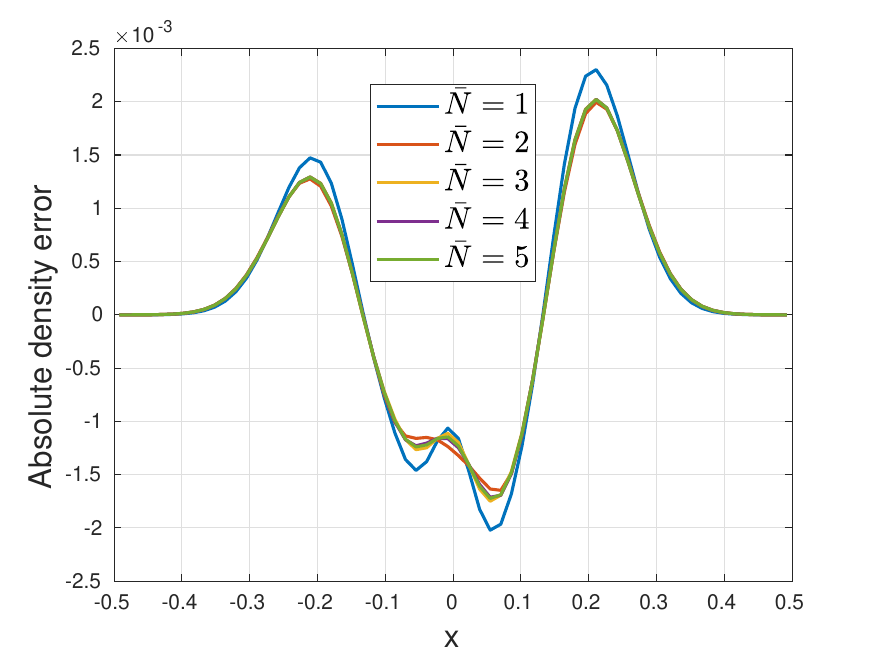}
    \includegraphics[width=.45\linewidth]{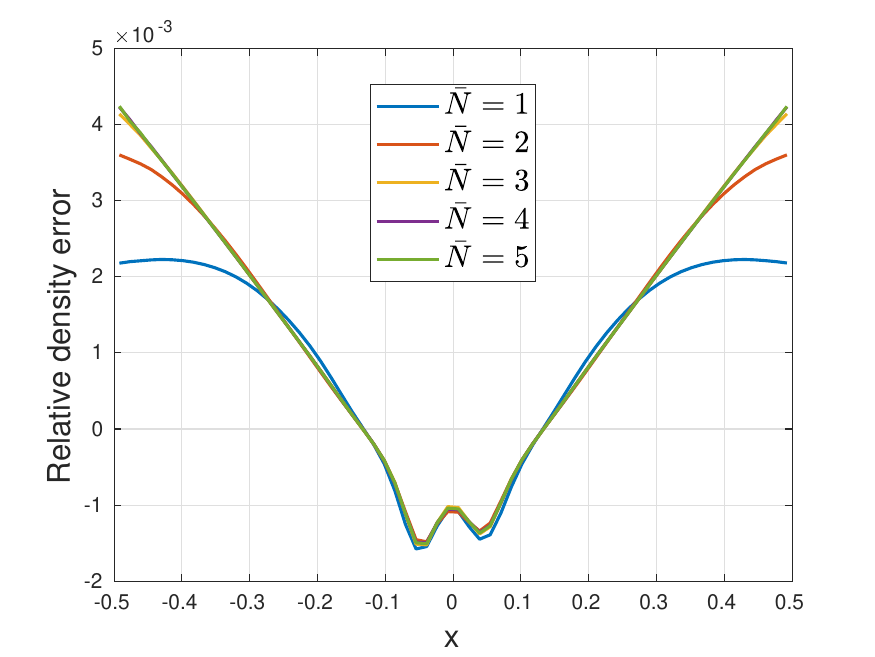}
    \includegraphics[width=.45\linewidth]{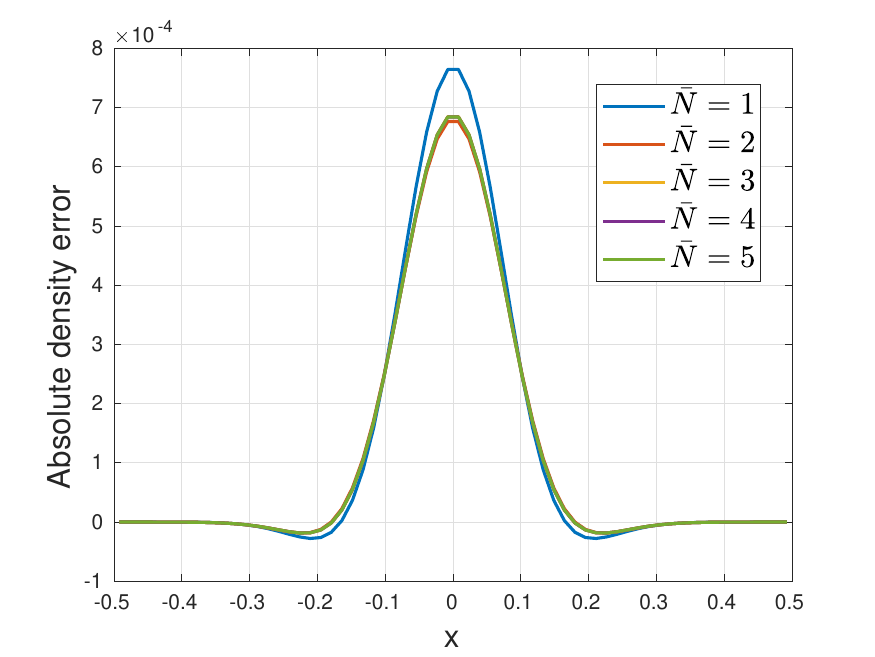}
    \includegraphics[width=.45\linewidth]{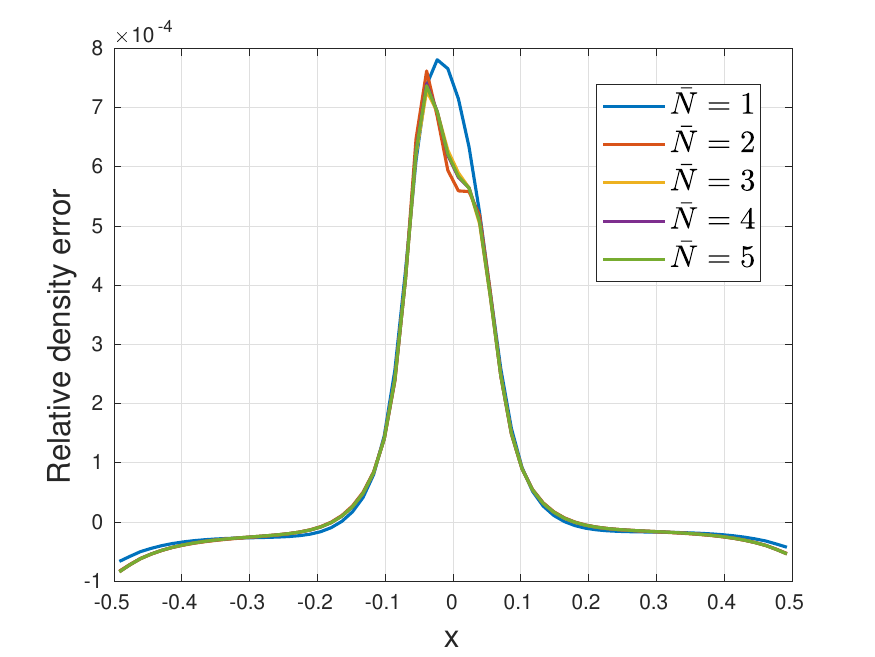}
    \caption{Absolute (left)/ Relative (right) density difference for initial value $\psi_1$ (top)/ $\psi_2$ (bottom).}
    \label{fig:den err-NLR}
\end{figure}
It can be observed in Fig. \ref{fig:den err-NLR} that the local absolute error is less than $5\times10^{-4}$ ($1\times10^{-2}$) for $\bar{N}=1,2,3,4,5$ with initial value $\psi_2$ ($\psi_1$), while the relative error is always of the order $10^{-3}$ in both initial values.

$r=5$ gives a difference in Frobenius norm of $84.3088$, compared to $r=40$ with a difference in norm of $5.4535\times10^{-13}$, while the computational cost for $r=5$ is about $(5/40)^5=2^{-15}\approx3\times10^{-5}$ times the cost for $r=40$. This justifies the choice of $r=5$ in the following numerical experiments.

\subsubsection{2-d harmonic oscillator}
Subsequently, two-dimensional simulations with $r=5$ and $\bar{N}=1,2,3,4,5$ are considered, with initial value
\begin{equation}
	\psi(x_1,x_2)
	=\psi_1(x_1)\psi_2(x_2).
\end{equation}
Firstly, the integral of density and the density errors w.r.t. $\bar{N}=5$ are presented in Tab. \ref{tab:err-2d harmonic}. Next, the densities with and without bath are demonstrated in Fig. \ref{fig:den-2d harmonic}, followed by the density contributions from specific $\bar{N}$ in Fig. \ref{fig:dif-2d harmonic}. To clearly illustrate the validity of the proposed method, one-dimensional slices using the $x_2=0$ and $x_1=0$ planes are shown in Figs. \ref{fig:den-x slide-2d harmonic} and \ref{fig:den-y slide-2d harmonic}, respectively.
\begin{table}[H]
    \centering
    \caption{Integral of density (top)/ $L^2$ error of density difference (bottom) w.r.t. the one with $\bar{N}=5$.}
    \label{tab:err-2d harmonic}
	\begin{tabular}{c|ccccc}
		\hline
        \rule{0pt}{1em}
		$\bar{N}$	&1	&2	&3	&4	&5
		\\\hline
		$\int\rho$
		&0.9482    &0.9476    &0.9492    &0.9486    &0.9488
		\\
		$\|\rho-\rho^{\rm ref}\|_2$
		&0.0667    &0.0236    &0.0067    &0.0025 	&
		\\\hline
	\end{tabular}
\end{table}
As shown in Tab. \ref{tab:err-2d harmonic}, the density integral is close to the anticipated value of 1. Convergence to the reference density is also obtained.
\begin{figure}[H]
    \centering
    \includegraphics[width=.4\linewidth]{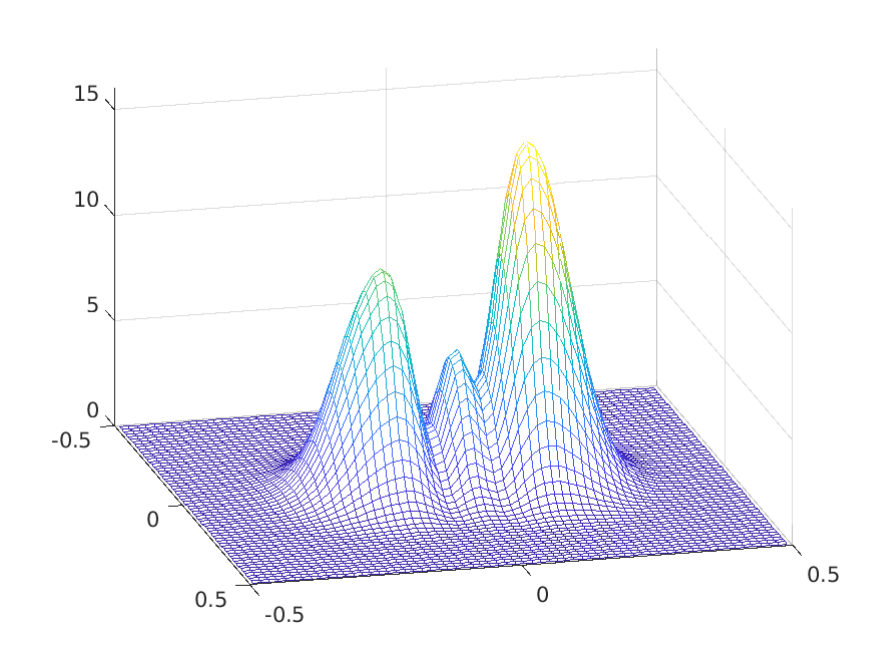}
    \includegraphics[width=.4\linewidth]{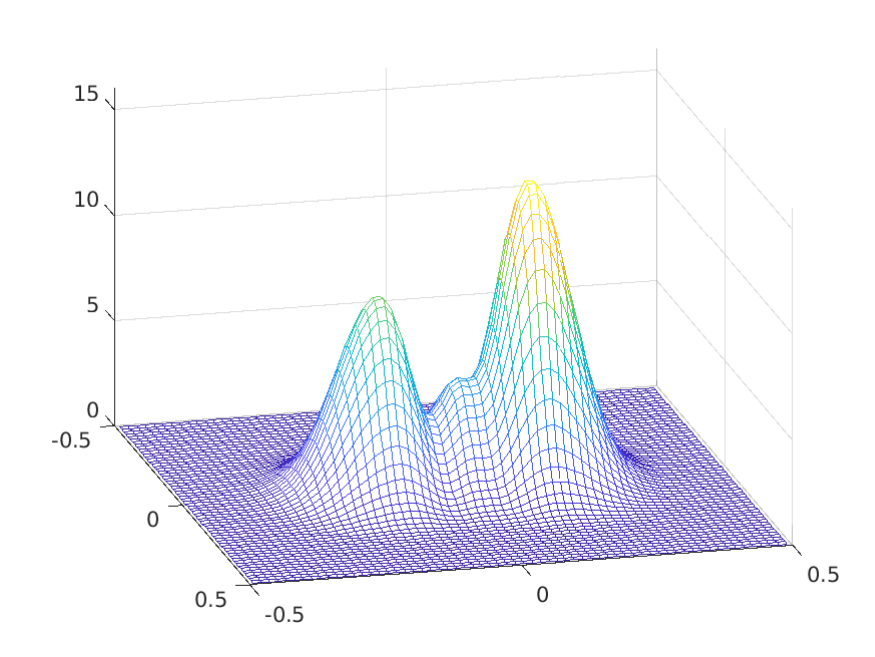}
    \caption{Density without bath (left)/ when $\bar{N}=5$ (right).}
    \label{fig:den-2d harmonic}
\end{figure}
In Fig. \ref{fig:den-2d harmonic}, the density shows the anticipated profile, which is a product between the one-dimensional result with initial value $\psi_1$ in \cite{wang2025solving}, and the one-dimensional Gaussian resulting from the contribution of $\psi_2$. However, small differences can be observed in the above figures, which we will take a closer look at in Figs. \ref{fig:den-x slide-2d harmonic} and \ref{fig:den-y slide-2d harmonic}.
\begin{figure}[H]
    \centering
    \includegraphics[width=.24\linewidth]{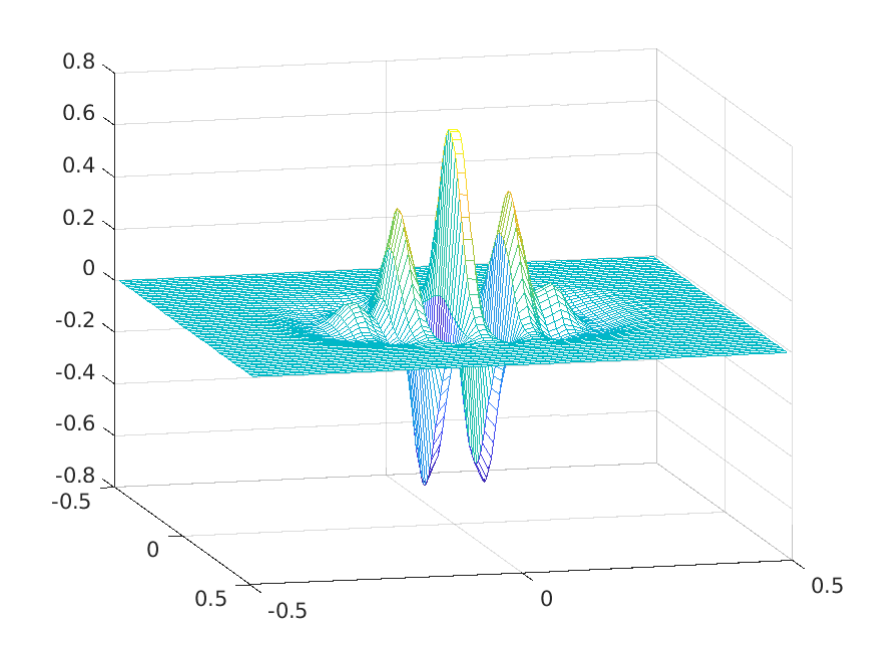}
    \includegraphics[width=.24\linewidth]{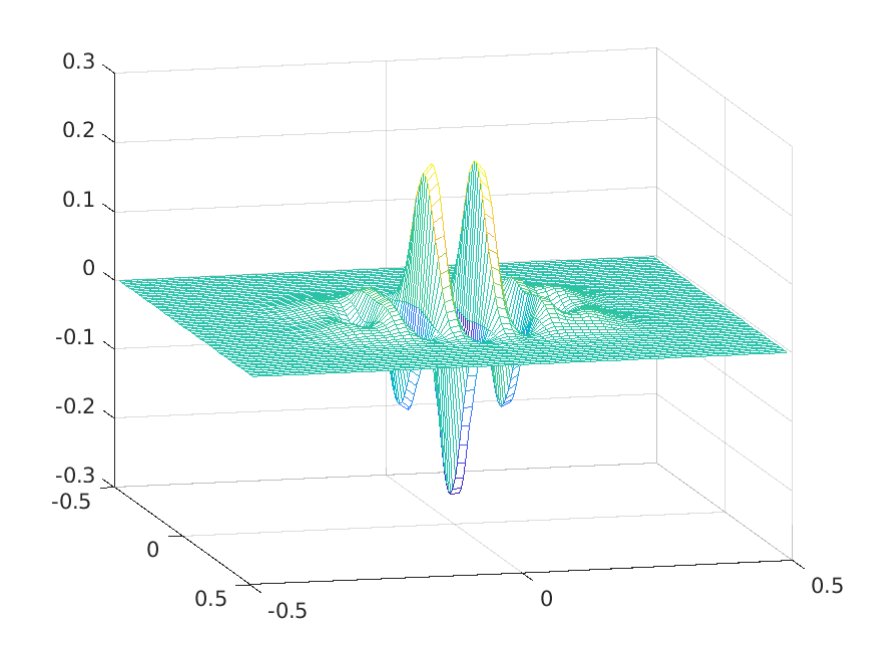}
    \includegraphics[width=.24\linewidth]{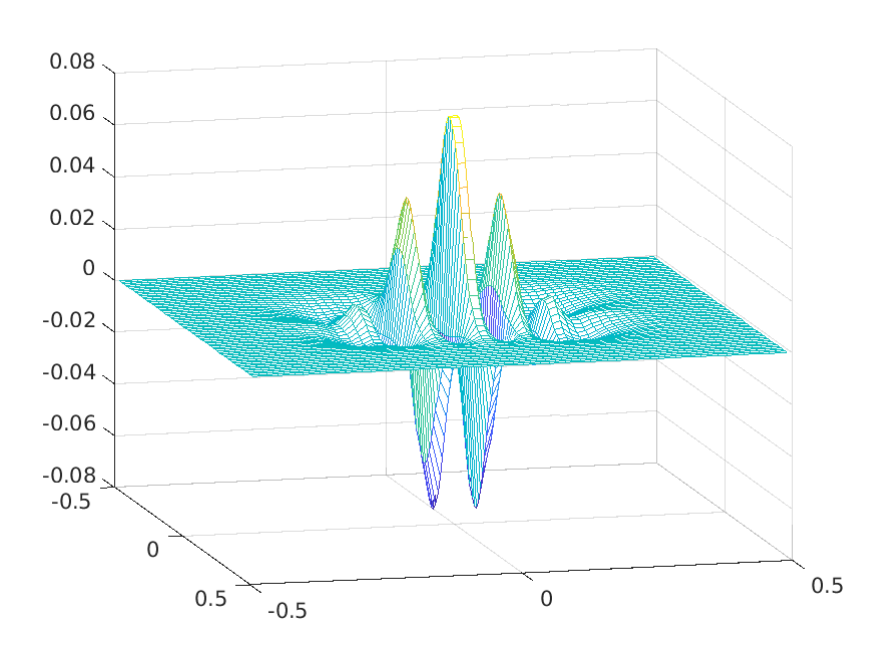}
    \includegraphics[width=.24\linewidth]{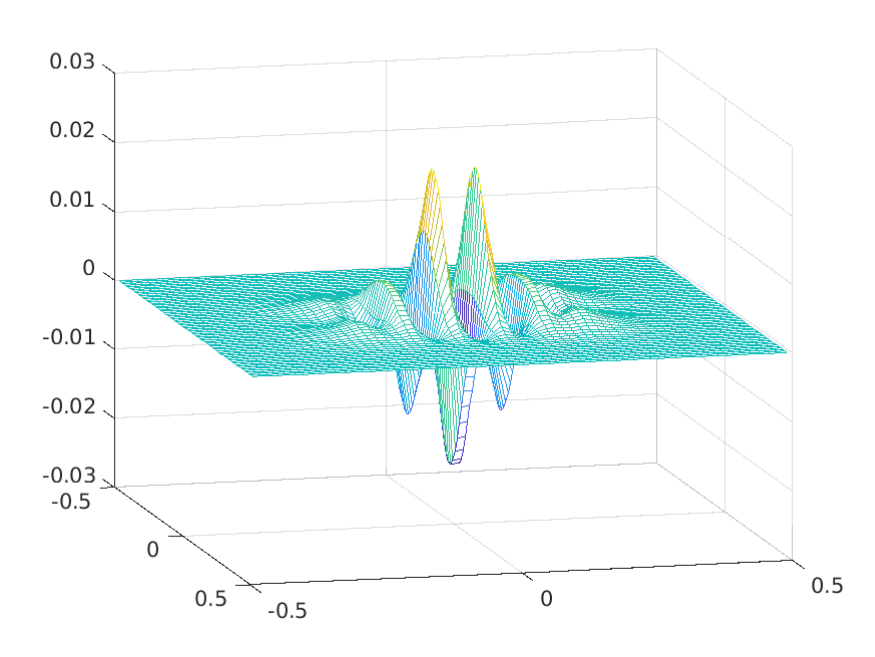}
    \caption{Density difference w.r.t. $\bar{N}=5$ density ($\bar{N}=1,2,3,4$ from left to right).}
    \label{fig:dif-2d harmonic}
\end{figure}
In Fig. \ref{fig:dif-2d harmonic}, convergence of the Dyson series can be obtained in the sense of a decrease in the axis range. We can also observe the change of sign in the contribution, corresponding to the mostly negative (positive) contributions in the 1st, 3rd (2nd, 4th) figures above.
\begin{figure}[H]
	\centering
	\includegraphics[width=.49\linewidth]{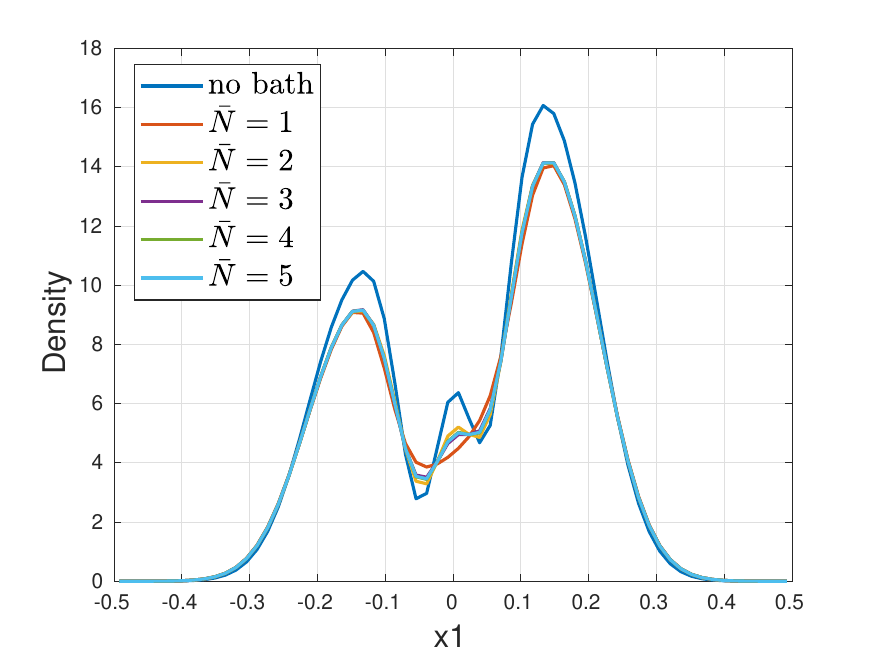}
	\includegraphics[width=.49\linewidth]{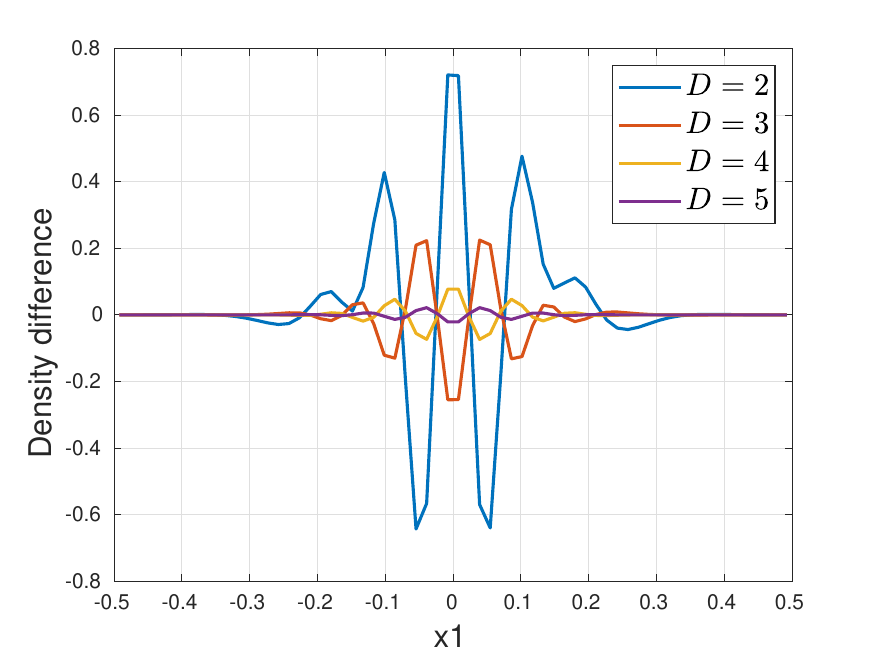}
	\caption{Density (left)/ Density difference (right) on $x_2 = 0$ plane.}
    \label{fig:den-x slide-2d harmonic}
\end{figure}
\begin{figure}[H]
	\centering
	\includegraphics[width=.49\linewidth]{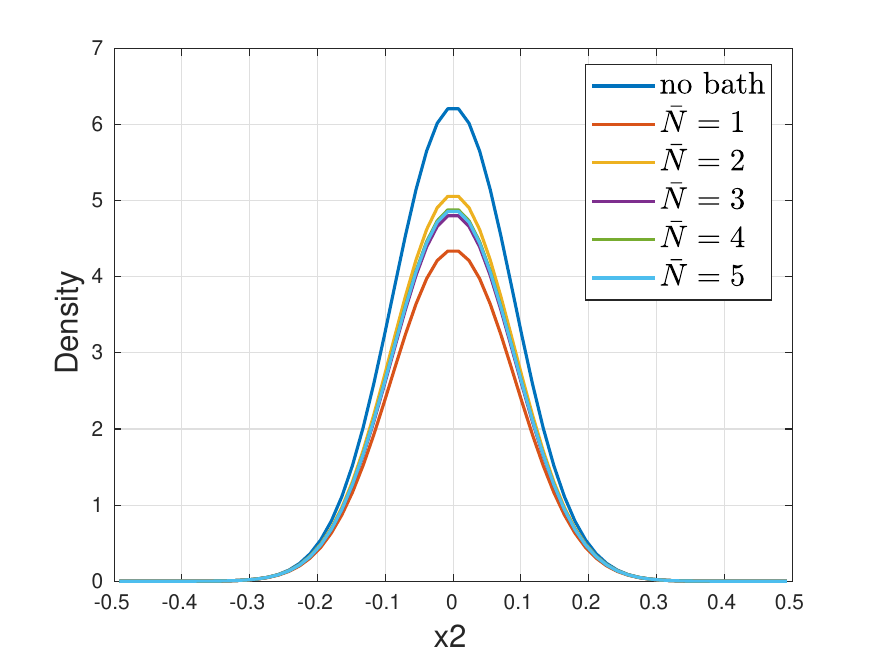}
	\includegraphics[width=.49\linewidth]{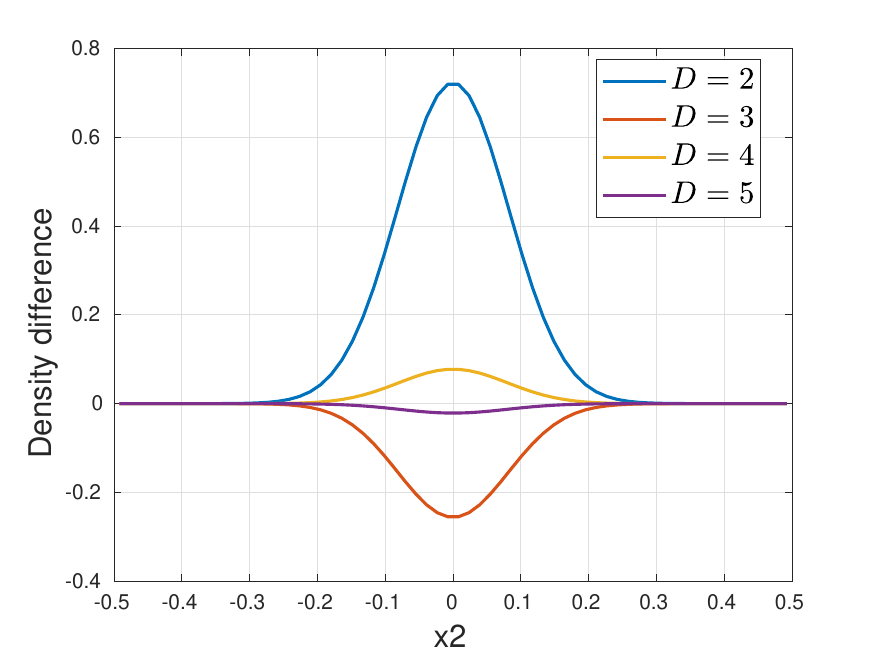}
	\caption{Density (left)/ Density difference (right) on $x_1 = 0$ plane.}
    \label{fig:den-y slide-2d harmonic}
\end{figure}
Figs. \ref{fig:den-x slide-2d harmonic} and \ref{fig:den-y slide-2d harmonic} show that (i) compared to the density without bath, the simulations with bath show a smoother behavior near the origin, due to quantum decoherence; (ii) convergence to the $\bar{N}=5$ result can be successfully observed in the figures on the right.

\subsection{Double slit}
Finally, we simulate the double slit interference experiment.
Fig. \ref{fig:den&pot} illustrates a visual representation of the double slit potential, as well as the computed initial and final densities. Due to the large initial momentum, the reflected part of the wave function is not observed in the final density.
\begin{figure}[H]
    \centering
    \includegraphics[width=0.49\linewidth]{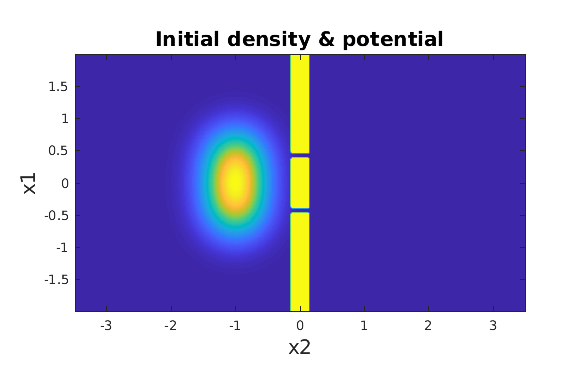}
    \includegraphics[width=0.49\linewidth]{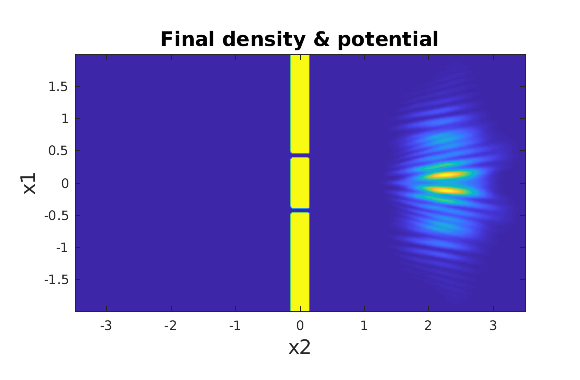}
    \caption{Initial (left)/ Final (right) density and double slit potential.}
    \label{fig:den&pot}
\end{figure}
Specifically, we choose the initial wave function
\begin{equation}
\begin{aligned}
    \psi_0(x_1,x_2)
    &=C_{\rm nor}^{\rm ds}\left(
        \exp\left(-\frac{(x_1-q_1)^2}{8\epsilon}\right)
        +\exp\left(-\frac{(x_1+q_1)^2}{8\epsilon}\right)\right)\\
        &\qquad\exp\left(-\frac{(x_2+q_2)^2}{8\epsilon}\right)
        \exp\left(i\frac{p_1x_1+p_2x_2}{4\epsilon}\right),
\end{aligned}
\end{equation}
with the normalization constant $C_{\rm nor}^{\rm ds}=(8\pi\epsilon(1+\exp(-q_1^2/(4\epsilon)))^{-1/2}$. To obtain the yellow shape in Fig. \ref{fig:den&pot}, the double slit potential is taken as the product of one-dimensional splines,
\begin{equation}
    V_{\rm ds}(x_1,x_2)
    =hV_{\rm ds}^{(1)}(x_1)V_{\rm ds}^{(2)}(x_2),
\end{equation}
where $h$ represents the height of the barrier,
\begin{equation}
    V_{\rm ds}^{(1)}(x_1)
    =\left\{\begin{array}{cc}
        1,
            &\text{if }|x_1|<d_1,\\
        f\left(\frac{d_1+b-|x_1|}{b}\right),
            &\text{if }d_1\le|x_1|<d_1+b,\\
        0,
            &\text{if }d_1+b\le|x_1|<d_1+b+w,\\
        f\left(\frac{|x_1|-d_1-b-w}{b}\right),
            &\text{if }d_1+b+w\le|x_1|<d_1+2b+w,\\
        1,
            &\text{if }d_1+2b+w\le|x_1|,
    \end{array}\right.
\end{equation}
and
\begin{equation}
    V_{\rm ds}^{(2)}(x_2)
    =\left\{\begin{array}{cc}
        1,
            &\text{if }|x_2|<d_2,\\
        f\left(\frac{d_2+b-|x_2|}{b}\right),
            &\text{if }d_2\le|x_2|<d_2+b,\\
        0,
            &\text{if }d_2+b\le|x_2|.
    \end{array}\right.
\end{equation}
The functions $V_{\rm ds}^{(1)}$ and $V_{\rm ds}^{(2)}$ are symmetric, with origin at the center of the barrier. From the origin, $d_1$ is the distance to a slit and $d_2$ is the distance to the edge of the barrier. $w$ is the width of each slit and $b$ is the width of the buffer between the top and bottom of the barrier. To ensure smoothness, the interpolating function $f$ is a six-order polynomial satisfying
\begin{equation*}
    f(0)=0,\qquad
    f(1)=1,\qquad
    f'(0)=f'(1)=f''(0)=f''(1)=0.
\end{equation*}
We set the parameters $h=10$, $d_1=0.35$, $d_2=0.1$, $w=0.05$, $b=0.05$, and $q_1=d_1+b+0.5w$, $q_2=-1$, $p_1=0$, $p_2=8$. We consider $\epsilon=1/16$ and $\b{p}\in[-1.5,1.5]\times[-1,1], \b{q}\in[-2,2]\times[-1,1]$, with stepsizes $\Delta p=\Delta q=1/16$. The simulation time is $t=0.4$ with stepsize $\Delta t=2.5\times10^{-4}$. Computations are performed for $\bar{N}=0,1,\ldots,5$.

In the following, we validate the choices of $r$ for the low-rank approximation and $\Delta t$ in Tab. \ref{tab:L2dif-Neig} and Tab. \ref{tab:L2dif-dt}, respectively. Subsequently, we show the densities with and without bath in Fig. \ref{fig:den with(out) bath}, and plot the density differences between adjacent $\bar{N}$ in Fig. \ref{fig:den dif-N}.

For a closer look at the bath influence, we take a one-dimensional slice at $x_2=2.3$ and compare the densities and density differences in Fig. \ref{fig:den/dif slides-N}. Lastly, the densities of slices with different coupling strength are overlaid in Fig. \ref{fig:den slides-xi}.
\begin{table}[H]
    \centering
    \caption{$L^2$ difference between adjacent $r$ when $\bar{N}=5$.}
    \begin{tabular}{c|cccc}\hline
        $r$ 
            &1  &2  &3  &4  \\\hline
        $\|\rho^{r}-\rho^{r+1}\|_2$
            &2.8274e-02 &2.4153e-07 &8.7320e-10 &8.7320e-10\\\hline
    \end{tabular}
    \label{tab:L2dif-Neig}
\end{table}
\begin{table}[H]
    \centering
    \caption{$L^2$ difference between adjacent $\Delta t$.}
    \begin{tabular}{c|cccc}\hline
        $\Delta t$ 
            &1.0e-03  &5.0e-04  &2.5e-04 &1.25e-04\\\hline
        $\|\rho^{\Delta t}-\rho^{\Delta t/2}\|_2$
            &1.2106e-01 &7.0600e-03 &1.0050e-03 &1.5145e-04\\\hline
    \end{tabular}
    \label{tab:L2dif-dt}
\end{table}
Convergence can be observed in Tabs. \ref{tab:L2dif-Neig} and \ref{tab:L2dif-dt}. As the computational cost is closely related to the low-rank approximation, we take $r=2$ and $\Delta t=2.5\times10^{-4}$ in the following experiments.
\begin{figure}[H]
    \centering
    \includegraphics[width=0.49\linewidth]{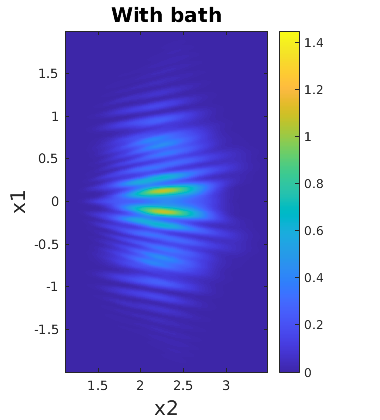}
    \includegraphics[width=0.49\linewidth]{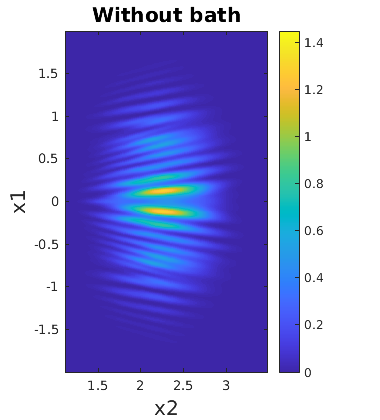}
    \caption{Density with bath (left)/ without bath (right).}
    \label{fig:den with(out) bath}
\end{figure}
\begin{figure}[H]
    \centering
    \includegraphics[width=.19\linewidth]{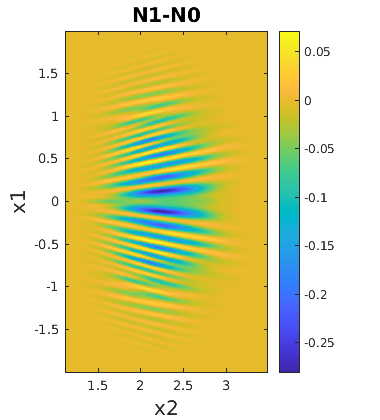}
    \includegraphics[width=.19\linewidth]{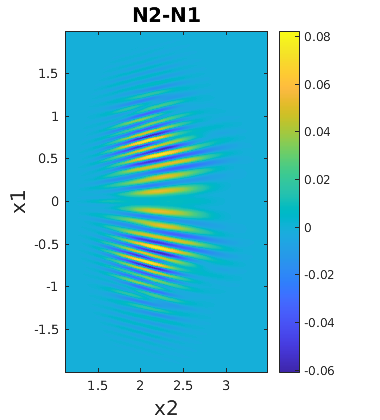}
    \includegraphics[width=.19\linewidth]{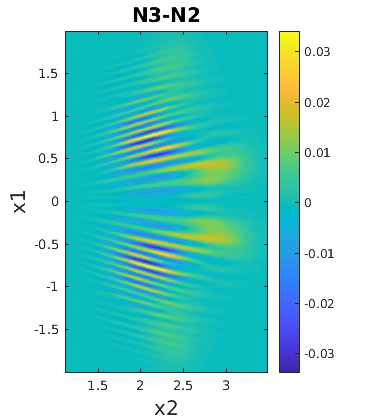}
    \includegraphics[width=.19\linewidth]{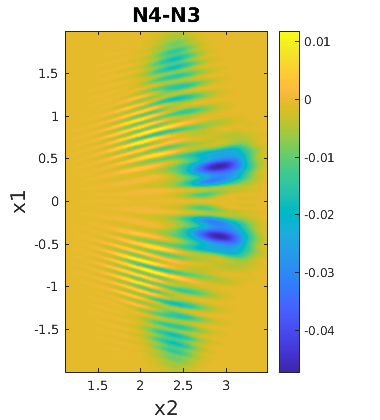}
    \includegraphics[width=.19\linewidth]{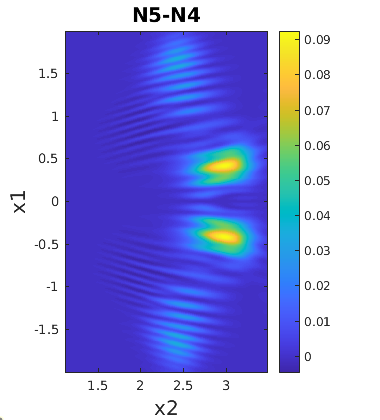}
    \caption{Density difference for different $\Bar{N}$.}
    \label{fig:den dif-N}
\end{figure}
Fig. \ref{fig:den with(out) bath} shows an oscillating density, resulting from quantum interference. In contrast, the simulation with bath delivers a smoother density, in the sense that different terms in the Dyson series, corresponding to figures in Fig. \ref{fig:den dif-N}, provide the effect of decoherence at oscillations. 
\begin{figure}[H]
    \centering
    \includegraphics[width=.49\linewidth]{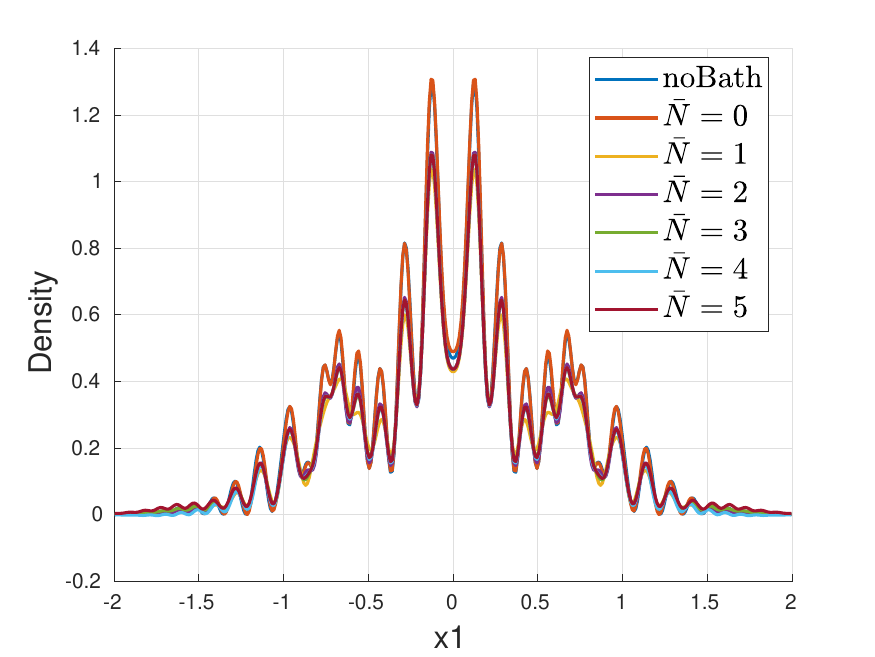}
    \includegraphics[width=.49\linewidth]{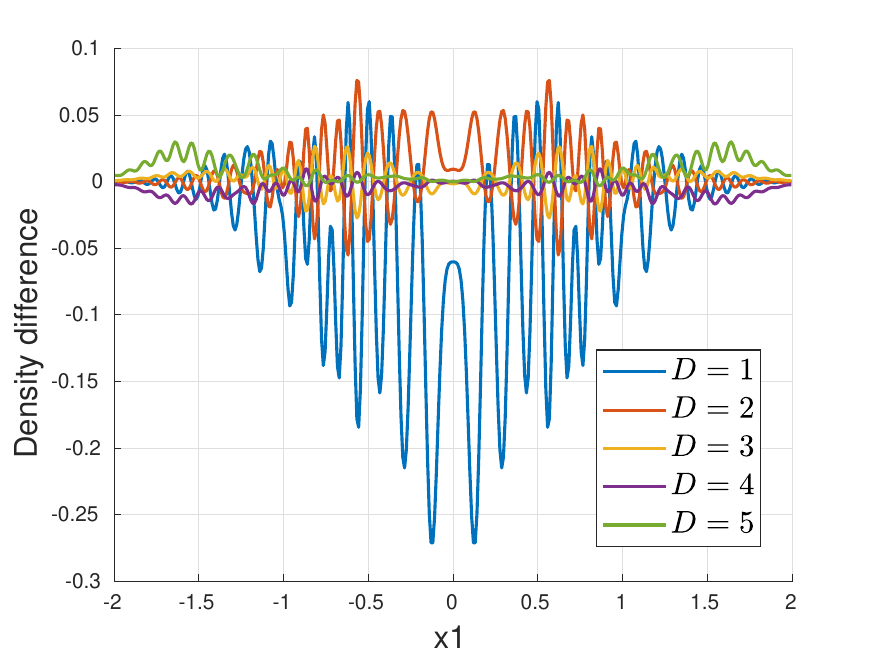}
    \caption{Density (left)/ Density difference (right) at $x_2=2.3$ for different $\bar{N}$.}
    \label{fig:den/dif slides-N}
\end{figure}
It can be observed in Fig. \ref{fig:den/dif slides-N} that (i) compared to without bath, simulations with bath exhibit results with fewer oscillations. (ii) As more terms in the Dyson series are considered ($\bar{N}$ increases), the oscillatory behavior decreases. (iii) The reduction in magnitude of the density difference demonstrates the convergence.
\begin{figure}[H]
    \centering
    \includegraphics[width=0.6\linewidth]{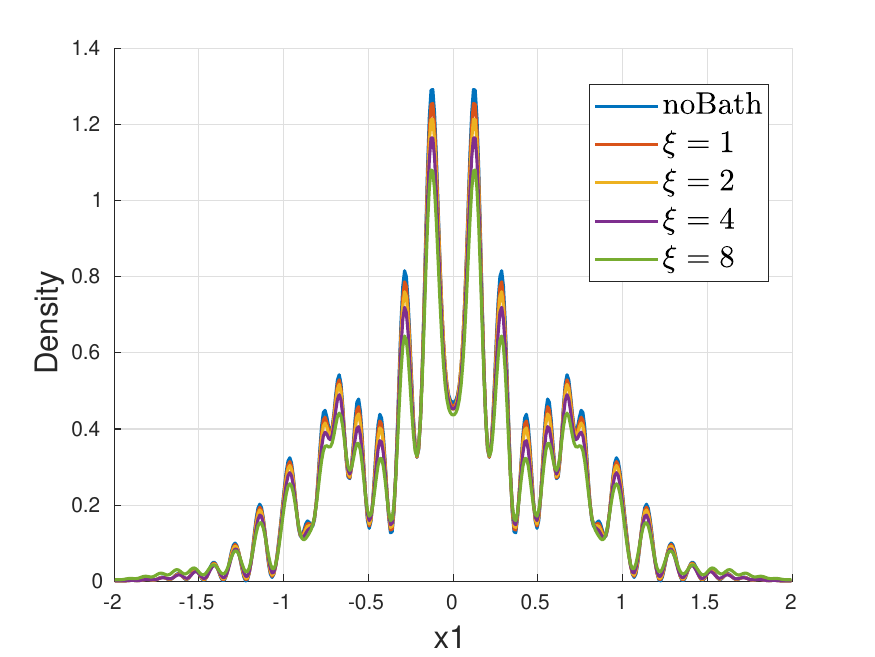}
    \caption{Density at $x_2=2.3$ for different $\xi$.}
    \label{fig:den slides-xi}
\end{figure}
In Fig. \ref{fig:den slides-xi}, simulations for a larger coupling constant $\xi$ exhibit smaller oscillations, due to the larger quantum effect of decoherence.
Although the suppression of oscillatory behavior into a single bright spot is expected under sufficiently strong quantum decoherence, the simulation with $\xi=8$ still exhibits a result similar to those without the bath. This can be primarily attributed to two factors: (i) the value of $\xi$ required for strong quantum decoherence exceeds the applicable range of the frozen Gaussian approximation; (ii) the effectiveness of the Dyson series expansion relies on treating the interaction operator as a perturbation, an assumption that may break down when the quantum decoherence is excessively strong.

\section{Conclusion}\label{sec:conc}

In this work, we reformulate the Dyson series for open quantum systems by separating the contributions in the reduced density matrix. By applying a low-rank approximation of the two-point correlation function, a complete separation of contributions along each axis is achieved, halving the spatial dimensionality. Furthermore, factorization of diagrams provides an additional reduction in computational cost.

Focusing on the multidimensional Caldeira–Leggett model, the frozen Gaussian approximation is used to lower the temporal dimensionality. These approximations enable the interaction operators to be represented as commutable scalars, thereby reducing high-dimensional integrals to one- and two-dimensional forms. Consequently, the dependence on time discretization is reduced to the same as that of the first non-trivial term in the Dyson series.
Through these techniques, we develop an efficient algorithm whose effectiveness is validated by several numerical experiments, including a two-dimensional double-slit setup.

In future works, we aim to further enhance the efficiency of the method to enable simulations of the three-dimensional Caldeira-Leggett model. 
One possible direction is the adoption of a higher-order frozen Gaussian approximation. 
Alternatively, we can address the computational bottleneck associated with the discretization of FGA. Direct discretization of the integrals arising from FGA, combined with Monte Carlo methods to evaluate these high-dimensional integrals, may offer a promising solution.
\section*{Acknowledgments}
The work of Zhenning Cai was supported by the Academic Research Fund of the Ministry of Education of Singapore grant A-8002392-00-00.
We would also like to thank Geshuo Wang for helpful discussions.

\bibliographystyle{quantum}
\bibliography{ref}

\appendix

\section{Derivation of \eqref{eqn:den mat-3 branch}}

Starting from \eqref{eqn:dyson rho-interaction}, we first expand $\mathcal{L}_b(\b{s}^{(1)},\b{s}^{(2)})$ to get
\begin{equation} \label{eqn:dyson rho with B}
\begin{aligned}
    \rho_{s,I}(t)
    &=\sum\limits_{n_1,n_2=0}^\infty
        \int_{\b{s}^{(1)} \in \mathcal{S}_t^{n_1}} \int_{\b{s}^{(2)} \in \mathcal{S}_t^{n_2}}
        \Big((-i)^{n_1} G_s(\b{s}^{(1)})\Big)
        |\psi_s^{(0)}\rangle\langle\psi_s^{(0)}| \\
    &\qquad\qquad\qquad\qquad\qquad\qquad \cdot\Big((-i)^{n_2} G_s(\b{s}^{(2)})\Big)^\dagger
        \sum_{P \in \mathscr{P}_{n_1+n_2}} \prod_{(i,j) \in P} B(\tilde{s}_j, \tilde{s}_i)
            d\b{s}^{(1)}d\b{s}^{(2)},
\end{aligned}
\end{equation}
where $\tilde{\b{s}} = (s_1^{(1)}, \ldots, s_{n_1}^{(1)}, s_{n_2}^{(2)}, \ldots, s_1^{(2)})$. For a fixed $P \in \mathscr{P}_{n_1+n_2}$, any $(i,j) \in P$ is either:
(a) $i \leq n_1, j > n_1$, (b) $i, j \leq n_1$, (c) $i, j > n_1$.
By grouping such cases into the partition $P_{\rm cross} \cup P_1 \cup P_2 = P$, respectively, we can separate the terms as follows.
\begin{equation} \label{eqn:B cross_same}
\begin{aligned}
    \prod_{(i,j) \in P} B(\tilde{s}_j, \tilde{s}_i)
    &= \left(\prod_{(i,j) \in P_{\rm cross}} B(\tilde{s}_j, \tilde{s}_i)\right)
    \left(\prod_{(i,j) \in P_1} B(\tilde{s}_j, \tilde{s}_i)\right)
    \left(\prod_{(i,j) \in P_2} B(\tilde{s}_j, \tilde{s}_i)\right) \\
    &= \left(\prod_{(i,j) \in P_{\rm cross}} B(s_{n_1 + n_2 + 1 - j}^{(2)}, s_i^{(1)})\right) \\
    &\qquad \cdot \left(\prod_{(i,j) \in P_1} B(s_j^{(1)}, s_i^{(1)})\right)
    \left(\prod_{(i,j) \in P_2} B(s_{n_1 + n_2 + 1 - j}^{(2)}, s_{n_1 + n_2 + 1 - i}^{(2)})\right), \\
\end{aligned}
\end{equation}
where the index $n_1 + n_2 + 1 - j$ comes from the reverse ordering of $\b{s}^{(2)}$ in $\tilde{\b{s}}$. Note that
\begin{equation}
    B(s_{n_1 + n_2 + 1 - j}^{(2)}, s_{n_1 + n_2 + 1 - i}^{(2)}) = B(s_{n_1 + n_2 + 1 - i}^{(2)}, s_{n_1 + n_2 + 1 - j}^{(2)})^*,
\end{equation}
with $n_1 + n_2 + 1 - j < n_1 + n_2 + 1 - i$.
Instead of a summation over all $P \in \mathscr{P}_{n_1+n_2}$ followed by the partition of $P$ into 3 sets, we can first sum over all choices of partitions, followed by a product of all possible pairings within each set in the partition. Together with \eqref{eqn:B cross_same}, this gives the decomposition
\begin{equation} \label{eqn:L cross_same}
\begin{aligned}
    \sum_{P \in \mathscr{P}_{n_1+n_2}} \prod_{(i,j) \in P} B(\tilde{s}_j, \tilde{s}_i)
        &= \sum_{n = 0}^{\min\{n_1,n_2\}}
        \sum_{1 \leq k_1^{(1)} < \cdots < k_n^{(1)} \leq n_1 \atop 1 \leq k_1^{(2)} < \cdots < k_n^{(2)} \leq n_2}
        &&\left(\sum\limits_{\sigma\in\mathscr{Q}_n} \prod_{i=1}^n
        B(s_{k_{\sigma(i)}^{(2)}}^{(2)}, s_{k_i^{(1)}}^{(1)}) \right) \\
        &&&\cdot \left(\sum_{P_1 \in \mathscr{P}_{n_1-n}} \prod_{(i,j) \in P_1}
        B(\tau_j^{(1)}, \tau_i^{(1)}) \right) \\
        &&&\cdot \left(\sum_{P_2 \in \mathscr{P}_{n_2-n}} \prod_{(i,j) \in P_2}
        B(\tau_j^{(2)}, \tau_i^{(2)})^* \right),
\end{aligned}
\end{equation}
where $\boldsymbol{\tau}^{(j)} = (\tau_1^{(j)} , \ldots, \tau_{n_j - n}^{(j)})$ is defined as the subsequence $(s_i^{(j)})_{i=1}^{n_j} - (s_{k_i^{(j)}}^{(j)})_{i=1}^n$.

We define $m_1 = n_1 - n$ and $m_2 = n_2 - n$ to get
\begin{equation} \label{eqn:n cross_same}
    \sum\limits_{n_1,n_2=0}^\infty \sum_{n = 0}^{\min\{n_1,n_2\}}
    = \sum\limits_{n=0}^\infty
    {\color{blue}\sum\limits_{m_1=0}^\infty}
    {\color{red}\sum\limits_{m_2=0}^\infty}.
\end{equation}
Using the sum over all possible subsequences of $\b{s}^{(1)}$ and $\b{s}^{(2)}$ with length $n$, we can reformulate the integrals in \eqref{eqn:dyson rho with B}, after which renaming $s_{k_i^{(j)}}^{(j)}$ back to $s_i^{(j)}$ gives
\begin{equation} \label{eqn:integral cross_same}
\begin{aligned}
    &\int_{\b{s}^{(1)} \in \mathcal{S}_t^{n_1}} \int_{\b{s}^{(2)} \in \mathcal{S}_t^{n_2}}
        \sum_{1 \leq k_1^{(1)} < \cdots < k_n^{(1)} \leq n_1 \atop 1 \leq k_1^{(2)} < \cdots < k_n^{(2)} \leq n_2}
        (\cdot)
        d\b{s}^{(1)}d\b{s}^{(2)}\\
    &= \int_{\b{s}^{(1)}, \b{s}^{(2)} \in \mathcal{S}_t^n}
        {\color{blue}\int_{\boldsymbol{\tau}^{(1)} \in \mathcal{S}_t^{m_1}}}
        {\color{red}\int_{\boldsymbol{\tau}^{(2)} \in \mathcal{S}_t^{m_2}}}
        (\cdot) d\boldsymbol{\tau}^{(1)}d\boldsymbol{\tau}^{(2)}d\b{s}^{(1)}d\b{s}^{(2)}.
\end{aligned}
\end{equation}
Additionally, the time ordering operator $\mathcal{T}$ in the definition of $G_s$ allows us to use the notation
$G_s\left([\b{s}^{(j)},\boldsymbol{\tau}^{(j)}]\right)$ in the reformulated integral.

Finally, we apply \eqref{eqn:L cross_same} \eqref{eqn:n cross_same} \eqref{eqn:integral cross_same} to \eqref{eqn:dyson rho with B} to obtain the desired result \eqref{eqn:den mat-3 branch},
where the extra requirement for $m_j$ to be even comes from $\mathcal{L}_b^{\rm same}(\boldsymbol{\tau}^{(j)})$ being 0 when $m_j$ is odd.

\end{document}